\newtheorem{theorem}{Theorem}
\newtheorem{axiom}[theorem]{Axiom}
\newtheorem{conjecture}[theorem]{Conjecture}
\newtheorem{corollary}[theorem]{Corollary}
\newtheorem{definition}[theorem]{Definition}
\newtheorem{example}[theorem]{Example}
\newtheorem{exercise}[theorem]{Exercise}
\newtheorem{lemma}[theorem]{Lemma}
\newtheorem{notation}[theorem]{Notation}
\newtheorem{proposition}[theorem]{Proposition}
\newtheorem{remark}[theorem]{Remark}
\newenvironment{proof}[1][Proof]{\noindent\textbf{#1.} }{\ \rule{0.5em}{0.5em}}
\let\pdfoutput=\undefined\fi
\chardef\@x10\chardef\@xv60
\def\tcitime{
\def\@time{%
  \@minute\time\@hour\@minute\divide\@hour\@xv
  \ifnum\@hour<\@x 0\fi\the\@hour:%
  \multiply\@hour\@xv\advance\@minute-\@hour
  \ifnum\@minute<\@x 0\fi\the\@minute
  }}%
\def\x@hyperref#1#2#3{%
   \catcode`\~ = 12
   \catcode`\$ = 12
   \catcode`\_ = 12
   \catcode`\# = 12
   \catcode`\& = 12
   \catcode`\% = 12
   \y@hyperref{#1}{#2}{#3}%
}
\def\y@hyperref#1#2#3#4{%
   #2\ref{#4}#3
   \catcode`\~ = 13
   \catcode`\$ = 3
   \catcode`\_ = 8
   \catcode`\# = 6
   \catcode`\& = 4
   \catcode`\% = 14
}
\def\QCTOpt[#1]#2{%
  \def\QCTOptB{#1}
  \def\QCTOptA{#2}
}
\def\QCTNOpt#1{%
  \def\QCTOptA{#1}
  \let\QCTOptB\empty
}
\def\Qct{%
  \@ifnextchar[{%
    \QCTOpt}{\QCTNOpt}
}
\def\QCBOpt[#1]#2{%
  \def\QCBOptB{#1}%
  \def\QCBOptA{#2}%
}
\def\QCBNOpt#1{%
  \def\QCBOptA{#1}%
  \let\QCBOptB\empty
}
\def\Qcb{%
  \@ifnextchar[{%
    \QCBOpt}{\QCBNOpt}%
}
\def\PrepCapArgs{%
  \ifx\QCBOptA\empty
    \ifx\QCTOptA\empty
      {}%
    \else
      \ifx\QCTOptB\empty
        {\QCTOptA}%
      \else
        [\QCTOptB]{\QCTOptA}%
      \fi
    \fi
  \else
    \ifx\QCBOptA\empty
      {}%
    \else
      \ifx\QCBOptB\empty
        {\QCBOptA}%
      \else
        [\QCBOptB]{\QCBOptA}%
      \fi
    \fi
  \fi
}
\def\GRAPHICSPS#1{%
 \ifcase\GRAPHICSTYPE
   \special{ps: #1}%
 \or
   \special{language "PS", include "#1"}%
 \fi
}%
\def\graffile#1#2#3#4{%
    \bgroup
	   \@inlabelfalse
       \leavevmode
       \@ifundefined{bbl@deactivate}{\def~{\string~}}{\activesoff}%
        \raise -#4 \BOXTHEFRAME{%
           \hbox to #2{\raise #3\hbox to #2{\null #1\hfil}}}%
    \egroup
}%
\def\draftbox#1#2#3#4{%
 \leavevmode\raise -#4 \hbox{%
  \frame{\rlap{\protect\tiny #1}\hbox to #2%
   {\vrule height#3 width\z@ depth\z@\hfil}%
  }%
 }%
}%
\let\nographics=\@msidraft
\newif\ifwasdraft
\def\GRAPHIC#1#2#3#4#5{%
   \ifnum\@msidraft=\@ne\draftbox{#2}{#3}{#4}{#5}%
   \else\graffile{#1}{#3}{#4}{#5}%
   \fi
}
\def\addtoLaTeXparams#1{%
    \edef\LaTeXparams{\LaTeXparams #1}}%
\newif\ifBoxFrame \BoxFramefalse
\newif\ifOverFrame \OverFramefalse
\newif\ifUnderFrame \UnderFramefalse
\def\BOXTHEFRAME#1{%
   \hbox{%
      \ifBoxFrame
         \frame{#1}%
      \else
         {#1}%
      \fi
   }%
}
\def\doFRAMEparams#1{\BoxFramefalse\OverFramefalse\UnderFramefalse\readFRAMEparams#1\end}%
\def\readFRAMEparams#1{%
 \ifx#1\end%
  \let\next=\relax
  \else
  \ifx#1i\dispkind=\z@\fi
  \ifx#1d\dispkind=\@ne\fi
  \ifx#1f\dispkind=\tw@\fi
  \ifx#1t\addtoLaTeXparams{t}\fi
  \ifx#1b\addtoLaTeXparams{b}\fi
  \ifx#1p\addtoLaTeXparams{p}\fi
  \ifx#1h\addtoLaTeXparams{h}\fi
  \ifx#1X\BoxFrametrue\fi
  \ifx#1O\OverFrametrue\fi
  \ifx#1U\UnderFrametrue\fi
  \ifx#1w
    \ifnum\@msidraft=1\wasdrafttrue\else\wasdraftfalse\fi
    \@msidraft=\@ne
  \fi
  \let\next=\readFRAMEparams
  \fi
 \next
 }%
\def\IFRAME#1#2#3#4#5#6{%
      \bgroup
      \let\QCTOptA\empty
      \let\QCTOptB\empty
      \let\QCBOptA\empty
      \let\QCBOptB\empty
      #6%
      \parindent=0pt
      \leftskip=0pt
      \rightskip=0pt
      \setbox0=\hbox{\QCBOptA}%
      \@tempdima=#1\relax
      \ifOverFrame
          \typeout{This is not implemented yet}%
          \show\HELP
      \else
         \ifdim\wd0>\@tempdima
            \advance\@tempdima by \@tempdima
            \ifdim\wd0 >\@tempdima
               \setbox1 =\vbox{%
                  \unskip\hbox to \@tempdima{\hfill\GRAPHIC{#5}{#4}{#1}{#2}{#3}\hfill}%
                  \unskip\hbox to \@tempdima{\parbox[b]{\@tempdima}{\QCBOptA}}%
               }%
               \wd1=\@tempdima
            \else
               \textwidth=\wd0
               \setbox1 =\vbox{%
                 \noindent\hbox to \wd0{\hfill\GRAPHIC{#5}{#4}{#1}{#2}{#3}\hfill}\\%
                 \noindent\hbox{\QCBOptA}%
               }%
               \wd1=\wd0
            \fi
         \else
            \ifdim\wd0>0pt
              \hsize=\@tempdima
              \setbox1=\vbox{%
                \unskip\GRAPHIC{#5}{#4}{#1}{#2}{0pt}%
                \break
                \unskip\hbox to \@tempdima{\hfill \QCBOptA\hfill}%
              }%
              \wd1=\@tempdima
           \else
              \hsize=\@tempdima
              \setbox1=\vbox{%
                \unskip\GRAPHIC{#5}{#4}{#1}{#2}{0pt}%
              }%
              \wd1=\@tempdima
           \fi
         \fi
         \@tempdimb=\ht1
         \advance\@tempdimb by -#2
         \advance\@tempdimb by #3
         \leavevmode
         \raise -\@tempdimb \hbox{\box1}%
      \fi
      \egroup%
}%
\def\DFRAME#1#2#3#4#5{%
  \vspace\topsep
  \hfil\break
  \bgroup
     \leftskip\@flushglue
	 \rightskip\@flushglue
	 \parindent\z@
	 \parfillskip\z@skip
     \let\QCTOptA\empty
     \let\QCTOptB\empty
     \let\QCBOptA\empty
     \let\QCBOptB\empty
	 \vbox\bgroup
        \ifOverFrame 
           #5\QCTOptA\par
        \fi
        \GRAPHIC{#4}{#3}{#1}{#2}{\z@}%
        \ifUnderFrame 
           \break#5\QCBOptA
        \fi
	 \egroup
  \egroup
  \vspace\topsep
  \break
}%
\def\FFRAME#1#2#3#4#5#6#7{%
  \@ifundefined{floatstyle}
    {
     \begin{figure}[#1]%
    }
    {
	 \ifx#1h
      \begin{figure}[H]%
	 \else
      \begin{figure}[#1]%
	 \fi
	}
  \let\QCTOptA\empty
  \let\QCTOptB\empty
  \let\QCBOptA\empty
  \let\QCBOptB\empty
  \ifOverFrame
    #4
    \ifx\QCTOptA\empty
    \else
      \ifx\QCTOptB\empty
        \caption{\QCTOptA}%
      \else
        \caption[\QCTOptB]{\QCTOptA}%
      \fi
    \fi
    \ifUnderFrame\else
      \label{#5}%
    \fi
  \else
    \UnderFrametrue%
  \fi
  \begin{center}\GRAPHIC{#7}{#6}{#2}{#3}{\z@}\end{center}%
  \ifUnderFrame
    #4
    \ifx\QCBOptA\empty
      \caption{}%
    \else
      \ifx\QCBOptB\empty
        \caption{\QCBOptA}%
      \else
        \caption[\QCBOptB]{\QCBOptA}%
      \fi
    \fi
    \label{#5}%
  \fi
  \end{figure}%
 }%
\def\makeactives{
  \catcode`\"=\active
  \catcode`\;=\active
  \catcode`\:=\active
  \catcode`\'=\active
  \catcode`\~=\active
}
   \gdef\activesoff{%
      \def"{\string"}%
      \def;{\string;}%
      \def:{\string:}%
      \def'{\string'}%
      \def~{\string~}%
    }
\def\FRAME#1#2#3#4#5#6#7#8{%
 \bgroup
 \ifnum\@msidraft=\@ne
   \wasdrafttrue
 \else
   \wasdraftfalse%
 \fi
 \def\LaTeXparams{}%
 \dispkind=\z@
 \def\LaTeXparams{}%
 \doFRAMEparams{#1}%
 \ifnum\dispkind=\z@\IFRAME{#2}{#3}{#4}{#7}{#8}{#5}\else
  \ifnum\dispkind=\@ne\DFRAME{#2}{#3}{#7}{#8}{#5}\else
   \ifnum\dispkind=\tw@
    \edef\@tempa{\noexpand\FFRAME{\LaTeXparams}}%
    \@tempa{#2}{#3}{#5}{#6}{#7}{#8}%
    \fi
   \fi
  \fi
  \ifwasdraft\@msidraft=1\else\@msidraft=0\fi{}%
  \egroup
 }%
\def\TEXUX#1{"texux"}
\def\func#1{\mathop{\rm #1}\nolimits}%
\long\def\QQQ#1#2{%
     \long\expandafter\def\csname#1\endcsname{#2}}%
\long\def\QQA#1#2{}%
\def\QTR#1#2{{\csname#1\endcsname {#2}}}%
\def\EXPAND#1[#2]#3{}%
\def\NOEXPAND#1[#2]#3{}%
\def\LaTeXparent#1{}%
\def\ChildStyles#1{}%
\def\ChildDefaults#1{}%
\def\QTagDef#1#2#3{}%
  \providecommand{\UNICODE}[2][]{\protect\rule{.1in}{.1in}}
  \providecommand{\U}[1]{\protect\rule{.1in}{.1in}}
\def\QQfnmark#1{\footnotemark}
 \def\abstract{%
  \if@twocolumn
   \section*{Abstract (Not appropriate in this style!)}%
   \else \small 
   \begin{center}{\bf Abstract\vspace{-.5em}\vspace{\z@}}\end{center}%
   \quotation 
   \fi
  }%
   \def\registered{\relax\ifmmode{}\r@gistered
                    \else$\m@th\r@gistered$\fi}%
 \def\r@gistered{^{\ooalign
  {\hfil\raise.07ex\hbox{$\scriptstyle\rm\text{R}$}\hfil\crcr
  \mathhexbox20D}}}}{}%
\newdimen\theight
\def\newfmtname{LaTeX2e}
  \DeclareOldFontCommand{\rm}{\normalfont\rmfamily}{\mathrm}
  \DeclareOldFontCommand{\sf}{\normalfont\sffamily}{\mathsf}
  \DeclareOldFontCommand{\tt}{\normalfont\ttfamily}{\mathtt}
  \DeclareOldFontCommand{\bf}{\normalfont\bfseries}{\mathbf}
  \DeclareOldFontCommand{\it}{\normalfont\itshape}{\mathit}
  \DeclareOldFontCommand{\sl}{\normalfont\slshape}{\@nomath\sl}
  \DeclareOldFontCommand{\sc}{\normalfont\scshape}{\@nomath\sc}
\def\alpha{{\Greekmath 010B}}%
\def\beta{{\Greekmath 010C}}%
\def\gamma{{\Greekmath 010D}}%
\def\delta{{\Greekmath 010E}}%
\def\epsilon{{\Greekmath 010F}}%
\def\zeta{{\Greekmath 0110}}%
\def\eta{{\Greekmath 0111}}%
\def\theta{{\Greekmath 0112}}%
\def\iota{{\Greekmath 0113}}%
\def\kappa{{\Greekmath 0114}}%
\def\lambda{{\Greekmath 0115}}%
\def\mu{{\Greekmath 0116}}%
\def\nu{{\Greekmath 0117}}%
\def\xi{{\Greekmath 0118}}%
\def\pi{{\Greekmath 0119}}%
\def\rho{{\Greekmath 011A}}%
\def\sigma{{\Greekmath 011B}}%
\def\tau{{\Greekmath 011C}}%
\def\upsilon{{\Greekmath 011D}}%
\def\phi{{\Greekmath 011E}}%
\def\chi{{\Greekmath 011F}}%
\def\psi{{\Greekmath 0120}}%
\def\omega{{\Greekmath 0121}}%
\def\varepsilon{{\Greekmath 0122}}%
\def\vartheta{{\Greekmath 0123}}%
\def\varpi{{\Greekmath 0124}}%
\def\varrho{{\Greekmath 0125}}%
\def\varsigma{{\Greekmath 0126}}%
\def\varphi{{\Greekmath 0127}}%
\def\nabla{{\Greekmath 0272}}
\def\FindBoldGroup{%
   {\setbox0=\hbox{$\mathbf{x\global\edef\theboldgroup{\the\mathgroup}}$}}%
}
\def\Greekmath#1#2#3#4{%
    \if@compatibility
        \ifnum\mathgroup=\symbold
           \mathchoice{\mbox{\boldmath$\displaystyle\mathchar"#1#2#3#4$}}%
                      {\mbox{\boldmath$\textstyle\mathchar"#1#2#3#4$}}%
                      {\mbox{\boldmath$\scriptstyle\mathchar"#1#2#3#4$}}%
                      {\mbox{\boldmath$\scriptscriptstyle\mathchar"#1#2#3#4$}}%
        \else
           \mathchar"#1#2#3#4%
        \fi 
    \else 
        \FindBoldGroup
        \ifnum\mathgroup=\theboldgroup 
           \mathchoice{\mbox{\boldmath$\displaystyle\mathchar"#1#2#3#4$}}%
                      {\mbox{\boldmath$\textstyle\mathchar"#1#2#3#4$}}%
                      {\mbox{\boldmath$\scriptstyle\mathchar"#1#2#3#4$}}%
                      {\mbox{\boldmath$\scriptscriptstyle\mathchar"#1#2#3#4$}}%
        \else
           \mathchar"#1#2#3#4%
        \fi     	    
	  \fi}
\newif\ifGreekBold  \GreekBoldfalse
\let\SAVEPBF=\pbf
\def\pbf{\GreekBoldtrue\SAVEPBF}%
  \newcounter{equationnumber}  
  \def\mathletters{%
     \addtocounter{equation}{1}
     \edef\@currentlabel{\theequation}%
     \setcounter{equationnumber}{\c@equation}
     \setcounter{equation}{0}%
     \edef\theequation{\@currentlabel\noexpand\alph{equation}}%
  }
    \def\BibTeX{{\rm B\kern-.05em{\sc i\kern-.025em b}\kern-.08em
                 T\kern-.1667em\lower.7ex\hbox{E}\kern-.125emX}}}{}%
\def\AmS{{\protect\usefont{OMS}{cmsy}{m}{n}%
                A\kern-.1667em\lower.5ex\hbox{M}\kern-.125emS}}}{}%
\def\@@eqncr{\let\@tempa\relax
    \ifcase\@eqcnt \def\@tempa{& & &}\or \def\@tempa{& &}%
      \else \def\@tempa{&}\fi
     \@tempa
     \if@eqnsw
        \iftag@
           \@taggnum
        \else
           \@eqnnum\stepcounter{equation}%
        \fi
     \fi
     \global\tag@false
     \global\@eqnswtrue
     \global\@eqcnt\z@\cr}
\def\TCItag{\@ifnextchar*{\@TCItagstar}{\@TCItag}}
\def\@TCItag#1{%
    \global\tag@true
    \global\def\@taggnum{(#1)}%
    \global\def\@currentlabel{#1}}
\def\@TCItagstar*#1{%
    \global\tag@true
    \global\def\@taggnum{#1}%
    \global\def\@currentlabel{#1}}
\def\tint{\msi@int\textstyle\int}%
\def\tiint{\msi@int\textstyle\iint}%
\def\tiiint{\msi@int\textstyle\iiint}%
\def\tiiiint{\msi@int\textstyle\iiiint}%
\def\tidotsint{\msi@int\textstyle\idotsint}%
\def\toint{\msi@int\textstyle\oint}%
\newtoks\temptoksa
\newtoks\temptoksb
\newtoks\temptoksc
\def\msi@int#1#2{%
 \def\@temp{{#1#2\the\temptoksc_{\the\temptoksa}^{\the\temptoksb}}}%
 \futurelet\@nextcs
 \@int
}
\def\@int{%
   \ifx\@nextcs\limits
      \typeout{Found limits}%
      \temptoksc={\limits}%
	  \let\@next\@intgobble%
   \else\ifx\@nextcs\nolimits
      \typeout{Found nolimits}%
      \temptoksc={\nolimits}%
	  \let\@next\@intgobble%
   \else
      \typeout{Did not find limits or no limits}%
      \temptoksc={}%
      \let\@next\msi@limits%
   \fi\fi
   \@next   
}%
\def\@intgobble#1{%
   \typeout{arg is #1}%
   \msi@limits
}
\def\msi@limits{%
   \temptoksa={}%
   \temptoksb={}%
   \@ifnextchar_{\@limitsa}{\@limitsb}%
}
\def\@limitsa_#1{%
   \temptoksa={#1}%
   \@ifnextchar^{\@limitsc}{\@temp}%
}
\def\@limitsb{%
   \@ifnextchar^{\@limitsc}{\@temp}%
}
\def\@limitsc^#1{%
   \temptoksb={#1}%
   \@ifnextchar_{\@limitsd}{\@temp}%
}
\def\@limitsd_#1{%
   \temptoksa={#1}%
   \@temp
}
\def\dint{\msi@int\displaystyle\int}%
\def\diint{\msi@int\displaystyle\iint}%
\def\diiint{\msi@int\displaystyle\iiint}%
\def\diiiint{\msi@int\displaystyle\iiiint}%
\def\didotsint{\msi@int\displaystyle\idotsint}%
\def\doint{\msi@int\displaystyle\oint}%
\def\dsum{\mathop{\displaystyle \sum }}%
\def\dbigoplus{\mathop{\displaystyle \bigoplus }}%
\def\ExitTCILatex{\makeatother }
\if@compatibility\message{amsmath already loaded}\fi\aftergroup\ExitTCILatex}
\if@compatibility\message{amstex already loaded}\fi\aftergroup\ExitTCILatex}
\if@compatibility\message{amsgen already loaded}\fi\aftergroup\ExitTCILatex}
\let\DOTSI\relax
\def\RIfM@{\relax\ifmmode}%
\def\FN@{\futurelet\next}%
\def\iint{\DOTSI\intno@\tw@\FN@\ints@}%
\def\iiint{\DOTSI\intno@\thr@@\FN@\ints@}%
\def\iiiint{\DOTSI\intno@4 \FN@\ints@}%
\def\idotsint{\DOTSI\intno@\z@\FN@\ints@}%
\def\ints@{\findlimits@\ints@@}%
\newif\iflimtoken@
\newif\iflimits@
\def\findlimits@{\limtoken@true\ifx\next\limits\limits@true
 \else\ifx\next\nolimits\limits@false\else
 \limtoken@false\ifx\ilimits@\nolimits\limits@false\else
 \ifinner\limits@false\else\limits@true\fi\fi\fi\fi}%
\def\multint@{\int\ifnum\intno@=\z@\intdots@                          
 \else\intkern@\fi                                                    
 \ifnum\intno@>\tw@\int\intkern@\fi                                   
 \ifnum\intno@>\thr@@\int\intkern@\fi                                 
 \int}
\def\multintlimits@{\intop\ifnum\intno@=\z@\intdots@\else\intkern@\fi
 \ifnum\intno@>\tw@\intop\intkern@\fi
 \ifnum\intno@>\thr@@\intop\intkern@\fi\intop}%
\def\intic@{%
    \mathchoice{\hskip.5em}{\hskip.4em}{\hskip.4em}{\hskip.4em}}%
\def\negintic@{\mathchoice
 {\hskip-.5em}{\hskip-.4em}{\hskip-.4em}{\hskip-.4em}}%
\def\ints@@{\iflimtoken@                                              
 \def\ints@@@{\iflimits@\negintic@
   \mathop{\intic@\multintlimits@}\limits                             
  \else\multint@\nolimits\fi                                          
  \eat@}
 \else                                                                
 \def\ints@@@{\iflimits@\negintic@
  \mathop{\intic@\multintlimits@}\limits\else
  \multint@\nolimits\fi}\fi\ints@@@}%
\def\intkern@{\mathchoice{\!\!\!}{\!\!}{\!\!}{\!\!}}%
\def\plaincdots@{\mathinner{\cdotp\cdotp\cdotp}}%
\def\intdots@{\mathchoice{\plaincdots@}%
 {{\cdotp}\mkern1.5mu{\cdotp}\mkern1.5mu{\cdotp}}%
 {{\cdotp}\mkern1mu{\cdotp}\mkern1mu{\cdotp}}%
 {{\cdotp}\mkern1mu{\cdotp}\mkern1mu{\cdotp}}}%
\def\RIfM@{\relax\protect\ifmmode}
\def\text{\RIfM@\expandafter\text@\else\expandafter\mbox\fi}
\let\nfss@text\text
\def\text@#1{\mathchoice
   {\textdef@\displaystyle\f@size{#1}}%
   {\textdef@\textstyle\tf@size{\firstchoice@false #1}}%
   {\textdef@\textstyle\sf@size{\firstchoice@false #1}}%
   {\textdef@\textstyle \ssf@size{\firstchoice@false #1}}%
   \glb@settings}
\def\textdef@#1#2#3{\hbox{{%
                    \everymath{#1}%
                    \let\f@size#2\selectfont
                    #3}}}
\newif\iffirstchoice@
\def\Let@{\relax\iffalse{\fi\let\\=\cr\iffalse}\fi}%
\def\vspace@{\def\vspace##1{\crcr\noalign{\vskip##1\relax}}}%
\def\multilimits@{\bgroup\vspace@\Let@
 \baselineskip\fontdimen10 \scriptfont\tw@
 \advance\baselineskip\fontdimen12 \scriptfont\tw@
 \lineskip\thr@@\fontdimen8 \scriptfont\thr@@
 \lineskiplimit\lineskip
 \vbox\bgroup\ialign\bgroup\hfil$\m@th\scriptstyle{##}$\hfil\crcr}%
\def\Sb{_\multilimits@}%
\def\endSb{\crcr\egroup\egroup\egroup}%
\def\Sp{^\multilimits@}%
\newdimen\ex@
\def\rightarrowfill@#1{$#1\m@th\mathord-\mkern-6mu\cleaders
 \hbox{$#1\mkern-2mu\mathord-\mkern-2mu$}\hfill
 \mkern-6mu\mathord\rightarrow$}%
\def\leftarrowfill@#1{$#1\m@th\mathord\leftarrow\mkern-6mu\cleaders
 \hbox{$#1\mkern-2mu\mathord-\mkern-2mu$}\hfill\mkern-6mu\mathord-$}%
\def\leftrightarrowfill@#1{$#1\m@th\mathord\leftarrow
\mkern-6mu\cleaders
 \hbox{$#1\mkern-2mu\mathord-\mkern-2mu$}\hfill
 \mkern-6mu\mathord\rightarrow$}%
\def\overrightarrow{\mathpalette\overrightarrow@}%
\def\overrightarrow@#1#2{\vbox{\ialign{##\crcr\rightarrowfill@#1\crcr
 \noalign{\kern-\ex@\nointerlineskip}$\m@th\hfil#1#2\hfil$\crcr}}}%
\def\overleftarrow{\mathpalette\overleftarrow@}%
\def\overleftarrow@#1#2{\vbox{\ialign{##\crcr\leftarrowfill@#1\crcr
 \noalign{\kern-\ex@\nointerlineskip}$\m@th\hfil#1#2\hfil$\crcr}}}%
\def\overleftrightarrow{\mathpalette\overleftrightarrow@}%
\def\overleftrightarrow@#1#2{\vbox{\ialign{##\crcr
   \leftrightarrowfill@#1\crcr
 \noalign{\kern-\ex@\nointerlineskip}$\m@th\hfil#1#2\hfil$\crcr}}}%
\def\underrightarrow{\mathpalette\underrightarrow@}%
\def\underrightarrow@#1#2{\vtop{\ialign{##\crcr$\m@th\hfil#1#2\hfil
  $\crcr\noalign{\nointerlineskip}\rightarrowfill@#1\crcr}}}%
\def\underleftarrow{\mathpalette\underleftarrow@}%
\def\underleftarrow@#1#2{\vtop{\ialign{##\crcr$\m@th\hfil#1#2\hfil
  $\crcr\noalign{\nointerlineskip}\leftarrowfill@#1\crcr}}}%
\def\underleftrightarrow{\mathpalette\underleftrightarrow@}%
\def\underleftrightarrow@#1#2{\vtop{\ialign{##\crcr$\m@th
  \hfil#1#2\hfil$\crcr
 \noalign{\nointerlineskip}\leftrightarrowfill@#1\crcr}}}%
\def\qopnamewl@#1{\mathop{\operator@font#1}\nlimits@}
\let\nlimits@\displaylimits
\def\setboxz@h{\setbox\z@\hbox}
\def\varlim@#1#2{\mathop{\vtop{\ialign{##\crcr
 \hfil$#1\m@th\operator@font lim$\hfil\crcr
 \noalign{\nointerlineskip}#2#1\crcr
 \noalign{\nointerlineskip\kern-\ex@}\crcr}}}}
 \def\rightarrowfill@#1{\m@th\setboxz@h{$#1-$}\ht\z@\z@
  $#1\copy\z@\mkern-6mu\cleaders
  \hbox{$#1\mkern-2mu\box\z@\mkern-2mu$}\hfill
  \mkern-6mu\mathord\rightarrow$}
\def\leftarrowfill@#1{\m@th\setboxz@h{$#1-$}\ht\z@\z@
  $#1\mathord\leftarrow\mkern-6mu\cleaders
  \hbox{$#1\mkern-2mu\copy\z@\mkern-2mu$}\hfill
  \mkern-6mu\box\z@$}
\def\projlim{\qopnamewl@{proj\,lim}}
\def\injlim{\qopnamewl@{inj\,lim}}
\def\varinjlim{\mathpalette\varlim@\rightarrowfill@}
\def\varprojlim{\mathpalette\varlim@\leftarrowfill@}
\def\varliminf{\mathpalette\varliminf@{}}
\def\varliminf@#1{\mathop{\underline{\vrule\@depth.2\ex@\@width\z@
   \hbox{$#1\m@th\operator@font lim$}}}}
\def\varlimsup{\mathpalette\varlimsup@{}}
\def\varlimsup@#1{\mathop{\overline
  {\hbox{$#1\m@th\operator@font lim$}}}}
\def\align{\@verbatim \frenchspacing\@vobeyspaces \@alignverbatim
You are using the "align" environment in a style in which it is not defined.}
\let\csname endalign*\endcsname =\endtrivlist
\def\alignat{\@verbatim \frenchspacing\@vobeyspaces \@alignatverbatim
You are using the "alignat" environment in a style in which it is not defined.}
\let\csname endalignat*\endcsname =\endtrivlist
\def\xalignat{\@verbatim \frenchspacing\@vobeyspaces \@xalignatverbatim
You are using the "xalignat" environment in a style in which it is not defined.}
\let\csname endxalignat*\endcsname =\endtrivlist
\def\gather{\@verbatim \frenchspacing\@vobeyspaces \@gatherverbatim
You are using the "gather" environment in a style in which it is not defined.}
\let\csname endgather*\endcsname =\endtrivlist
\def\multiline{\@verbatim \frenchspacing\@vobeyspaces \@multilineverbatim
You are using the "multiline" environment in a style in which it is not defined.}
\let\csname endmultiline*\endcsname =\endtrivlist
\def\arrax{\@verbatim \frenchspacing\@vobeyspaces \@arraxverbatim
You are using a type of "array" construct that is only allowed in AmS-LaTeX.}
\def\tabulax{\@verbatim \frenchspacing\@vobeyspaces \@tabulaxverbatim
You are using a type of "tabular" construct that is only allowed in AmS-LaTeX.}
\let\csname endarrax*\endcsname =\endtrivlist
\let\csname endtabulax*\endcsname =\endtrivlist
 \def\endequation{%
     \ifmmode\ifinner 
      \iftag@
        \addtocounter{equation}{-1} 
        $\hfil
           \displaywidth\linewidth\@taggnum\egroup \endtrivlist
        \global\tag@false
        \global\@ignoretrue   
      \else
        $\hfil
           \displaywidth\linewidth\@eqnnum\egroup \endtrivlist
        \global\tag@false
        \global\@ignoretrue 
      \fi
     \else   
      \iftag@
        \addtocounter{equation}{-1} 
        \eqno \hbox{\@taggnum}
        \global\tag@false%
        $$\global\@ignoretrue
      \else
        \eqno \hbox{\@eqnnum}
        $$\global\@ignoretrue
      \fi
     \fi\fi
 } 
 \newif\iftag@ \tag@false
 \def\TCItag{\@ifnextchar*{\@TCItagstar}{\@TCItag}}
 \def\@TCItag#1{%
     \global\tag@true
     \global\def\@taggnum{(#1)}%
     \global\def\@currentlabel{#1}}
 \def\@TCItagstar*#1{%
     \global\tag@true
     \global\def\@taggnum{#1}%
     \global\def\@currentlabel{#1}}
     \def\tag{\@ifnextchar*{\@tagstar}{\@tag}}
     \def\@tag#1{%
         \global\tag@true
         \global\def\@taggnum{(#1)}}
     \def\@tagstar*#1{%
         \global\tag@true
         \global\def\@taggnum{#1}}
\begin{document}

\title{\textbf{Supersymmetry Flows, Semi-Symmetric Space Sine-Gordon Models
And The Pohlmeyer Reduction}}
\author{David M. Schmidtt \\
\\
\textit{Instituto de F\'{\i}sica Te\'{o}rica - IFT/UNESP}\\
\textit{Rua Dr. Bento Teobaldo Ferraz, 271, Bloco II}\\
\textit{CEP 01140-070, S\~{a}o Paulo, SP, Brasil.}}
\maketitle

\begin{abstract}
We study the extended supersymmetric integrable hierarchy underlying the
Pohlmeyer reduction of superstring sigma models on semi-symmetric
superspaces $F/G.$ This integrable hierarchy is constructed by coupling two
copies of the homogeneous integrable hierarchy associated to the loop Lie
superalgebra extension $\widehat{\mathfrak{f}}$ of the Lie superalgebra $%
\mathfrak{f}$ of $F$ and this is done by means of the algebraic dressing
technique and a Riemann-Hilbert factorization problem. By using the
Drinfeld-Sokolov procedure we construct explicitly, a set of 2D spin $\pm
1/2 $ conserved supercharges generating supersymmetry flows in the phase
space of the reduced model. We introduce the bi-Hamiltonian structure of the
extended homogeneous hierarchy and show that the two brackets are of the
Kostant-Kirillov type on the co-adjoint orbits defined by the light-cone Lax
operators $L_{\pm }$. By using the second symplectic structure, we show that
these supersymmetries are Hamiltonian flows, we compute part of the
supercharge algebra and find the supersymmetric field variations they
induce. We also show that this second Poisson structure coincides with the
canonical Lorentz-invariant symplectic structure of the WZNW model involved
in the Lagrangian formulation of the extended integrable hierarchy, namely,
the semi-symmetric space sine-Gordon model (SSSSG), which is the Pohlmeyer
reduced action functional for the transverse degrees of freedom of
superstring sigma models on the cosets $F/G.$ We work out in some detail the
Pohlmeyer reduction of the $AdS_{2}\times S^{2}$ and the $AdS_{3}\times
S^{3} $ superstrings and show that the new conserved supercharges can be
related to the supercharges extracted from 2D superspace. In particular, for
the $AdS_{2}\times S^{2}$ example, they are formally the same.
\end{abstract}

\tableofcontents

\newpage

\section{Introduction.}

Recently \cite{tseytlin}, Grigoriev and Tseytlin motivated by a desired to
find a useful 2D Lorentz-invariant reformulation of the classical-integrable 
$AdS_{5}\times S^{5}$ Green-Schwarz (GS) superstring world-sheet theory in
terms of physical/transverse degrees of freedom only, constructed the
Pohlmeyer reduced version of the $AdS_{5}\times S^{5}$ coset sigma model
action. The corresponding reduced Lagrangian is of a non-Abelian Toda type:
a gauged WZNW model with an integrable potential coupled to a set of 2D
fermionic fields. Some of the main features of the reduced action is that it
is Lorentz-invariant, the small-fluctuation spectrum near the trivial vacuum
has the same number of bosonic and fermionic degrees of freedom and its
integrable structure is equivalent to that of the initial sigma model. The
structure of the reduced action suggest the presence of 2D supersymmetry, a
fact that was confirmed explicitly for the simplest case of the sigma model
on $AdS_{2}\times S^{2},$ which turned out to be equivalent to the $(2,2)$
supersymmetric extension of the sine-Gordon model \cite{tseytlin}.

In \cite{tseytlin II} were discussed the possible existence of hidden 2D
supersymmetry in the first non-trivial reduced model which corresponds to
the GS sigma model in the $AdS_{3}\times S^{3}$ background. The reduced
action seems to be a $(2,2)$ supersymmetric extension of the complex
sine-Gordon coupled in a non-trivial way with its hyperbolic counterpart,
i.e the $(2,2)$ complex sinh-Gordon model, but its explicit superspace
structure could not been identified as was done in the $AdS_{2}\times S^{2}$
case.

The Lie algebraic structure behind the Pohlmeyer reduction goes beyond the $%
AdS_{5}\times S^{5}$ case and is common to other sigma models and such a
reduction can be performed, in principle and without major complications, on
any GS superstring sigma model on a semi-symmetric superspace $F/G$, in
which the Lie algebra $\mathfrak{g}$ of $G$ is the zero locus of a $%
\mathbb{Z}
_{4}$ automorphism of the Lie superalgebra $\mathfrak{f}$ of $F.$ However,
despite of the simplicity for constructing the reduced models, there are
still a number of open problems yet to be solved at the classical level, see
for instance \cite{tseytlin},\cite{tseytlin II}. Among them and the one we
are most interested here is related to the conjectured existence of
world-sheet supersymmetry in the reduced models \cite{tseytlin}, which have
resisted to go beyond the simplest case $AdS_{2}\times S^{2}$ and remains as
a non-trivial open question. It would be surprising to find it because of
the initial sigma model is of GS type anyway.

In the present contribution, we start to study this question from the point
of view of integrable systems, providing some evidence supporting such a
conjecture. The strategy will be to identify the integrable structure behind
the Pohlmeyer reduction process and use it to identify the would-be 2D
world-sheet supersymmetry with the fermionic symmetry flows already present
in the underlying integrable hierarchy. The outcome is that the 2D
supersymmetry is associated to a special loop superalgebra $\widehat{%
\mathfrak{f}}^{\perp }\subset \widehat{\mathfrak{f}}$ constructed out of a
subalgebra $\mathfrak{f}^{\perp }\subset \mathfrak{f}$ by means of the
dressing flow transformations.

The outline of the paper is as follows. The chapter 2 is the main one and
includes the general results. In section 2.1, we introduce the extended
homogeneous hierarchy by using the dressing group and a Riemann-Hilbert
factorization problem. The hierarchy is defined in terms of three
gradations: the homogeneous gradation, associated to the loop extension $%
\widehat{\mathfrak{f}}$ of the superalgebra $\mathfrak{f,}$ the natural $%
\mathbb{Z}
_{4}$ gradation of $\mathfrak{f,}$ responsible for the matching of the
physical degrees of freedom in the reduced model and a $%
\mathbb{Z}
_{2}$ gradation, responsible for the consistency of the symmetry flows
induced by a special sub-superalgebra $\mathfrak{f}^{\perp }$ of $\mathfrak{%
f.}$ It is also shown how the usual gauge transformations can be interpreted
as the lowest symmetry flows of the extended integrable hierarchy. In
section 2.2, we define the relativistic sector of the hierarchy, which
provides the Lax operators governing the Pohlmeyer reduced models. In
section 2.3, we make a first tentative to introduce the 2D supersymmetry
flows of the hierarchy, where some obstructions related to the locality of
the gauge group are mentioned. In section 2.4, we make a review of the
Lagrangian formulation of the relativistic sector of the hierarchy, i.e we
introduce the semi-symmetric space sine-Gordon model (SSSSG), which is the
action functional for the physical degrees of freedom in the reduction of GS
sigma models. In section 2.5, we use the Drinfeld-Sokolov procedure to
construct explicitly a set of $dim$ $\mathfrak{f}_{1,3}^{\perp }$ 2D spin $%
\pm 1/2$ conserved supercharges associated to the fermionic symmetry flows
generated by the odd elements of $\mathfrak{f}_{1,3}^{\perp }$ of $\mathfrak{%
f}^{\perp }.$ In section 2.6, we show that the extended homogeneous
hierarchy is bi-Hamiltonian in which the two symplectic structures take the
form of Kostant-Kirillov brackets on the co-adjoint orbits defined by the
Lax operators $L_{\pm }$ introduced in section 2.2. In section 2.7, by using
the second bracket, we compute part of the supercharge algebra, deduce the
poisson form of the supersymmetry flow variations for the fields showing
that they are hamiltonian flows in the Pohlmeyer reduced phase space and
also mention on a subtlety related to the presence of the gauge group in the
supercharge algebra. It is also shown that the second symplectic structure
is equivalent to the canonical symplectic structure of the WZNW model
involved in the construction of the SSSSG models. In chapter 3, we make a
fast review of the Pohlmeyer reduction process in order to show how
everything fits in the construction presented in chapter 2. In chapter 4 we
work out explicit examples with the aim of exploring, in a first
approximation, the relation between some well-known superspace results with
the supersymmetry flow approach we have adopted. Finally, we make the
concluding remarks and pose what will be done in the near future. There are
two appendices including some technical details used in the computations. We
have included some previous known results in parts of the body of the paper
with the aim of render it as self-contained as possible.

\section{General analysis.}

This is the main chapter and includes all the results of the paper. The idea
is to introduce and study the integrable supersymmetric hierarchy underlying
the Pohlmeyer reduction of superstring sigma models. The most important
result is the explicit construction of the supercharges generating 2D
Hamiltonian fermionic symmetry flows on the phase space of the reduced
models, see (\ref{AKNS supercharges}) below.

\subsection{The extended homogeneous hierarchy.}

Here we show how to locate gauge symmetries in the context of the algebraic
dressing technique. We refine the results of \cite{susyflows-mKdV} in order
address later the situation we are most interested, namely the Pohlmeyer
reduction of superstring sigma models.

Start by considering a finite dimensional Lie superalgebra $\mathfrak{f}$
endowed with an order four linear automorphism $\Omega ,$ $\Omega :\mathfrak{%
f\rightarrow f,}$ $\Omega \left( \left[ X,Y\right] \right) =\left[ \Omega
\left( X\right) ,\Omega \left( Y\right) \right] ,$ $\Omega ^{4}=I.$ The
superalgebra $\mathfrak{f}$ then admits a $%
\mathbb{Z}
_{4}$ grade space decomposition%
\begin{equation}
\mathfrak{f=f}_{0}\mathfrak{\oplus f}_{1}\mathfrak{\oplus f}_{2}\mathfrak{%
\oplus f}_{3},  \label{Z4 grading}
\end{equation}%
which is consistent with the (anti)-commutation relations $\left[ \mathfrak{f%
}_{i},\mathfrak{f}_{j}\right] \subset \mathfrak{f}_{(i+j)\func{mod}4}.$ The
subspace $\mathfrak{f}_{j}$ is formed by the elements of $\mathfrak{f}$ with 
$%
\mathbb{Z}
_{4}$ grading $j,$ $\Omega (\mathfrak{f}_{j})=(i)^{j}$ $\mathfrak{f}_{j}.$
The even (or bosonic) subalgebra is $\mathfrak{f}_{B}=\mathfrak{f}_{0}%
\mathfrak{\oplus f}_{2}$ while the odd (or fermionic) part of $\mathfrak{f}$
is formed by $\mathfrak{f}_{F}=\mathfrak{f}_{1}\mathfrak{\oplus f}_{3}.$

We need to introduce a semisimple element $\Lambda \in \mathfrak{f}_{2}$
which induces the following superalgebra spliting%
\begin{equation*}
\mathfrak{f=f}^{\perp }\text{ }\mathfrak{\oplus }\text{ }\mathfrak{f}%
^{\parallel },\text{ \ \ \ \ \ }\mathfrak{f}^{\perp }\text{ }\mathfrak{\cap }%
\text{ }\mathfrak{f}^{\parallel }=\oslash ,
\end{equation*}%
where $\mathfrak{f}^{\perp }\equiv \ker (ad(\Lambda ))$ and $\mathfrak{f}%
^{\parallel }\equiv \func{Im}(ad(\Lambda )).$

We restric ourselves to the situation in which $\mathfrak{f}$ admits an
extra $%
\mathbb{Z}
_{2}$ gradation $\sigma :$ $\mathfrak{f\rightarrow f,}$ $\sigma \left( \left[
X,Y\right] \right) =\left[ \sigma \left( X\right) ,\sigma \left( Y\right) %
\right] ,$ $\sigma ^{2}=I$ with $\sigma (\mathfrak{f}^{\perp })=\mathfrak{f}%
^{\perp }$ and $\sigma (\mathfrak{f}^{\parallel })=-\mathfrak{f}^{\parallel
},$ implying that $\mathfrak{f}$ is also a symmetric space\footnote{%
In particular, this is satisfied by the superalgebras entering the Pohlmeyer
reduction of $AdS_{n}\times S^{n},$ $n=2,3,5$ and $AdS_{4}\times 
\mathbb{C}
P^{3}$ superstring sigma models. The only exception is $n=2,$ which has $%
\mathfrak{f}_{0}^{\perp }=\varnothing .$} 
\begin{equation}
\left[ \mathfrak{f}^{\perp },\mathfrak{f}^{\perp }\right] \subset \mathfrak{f%
}^{\perp },\text{ \ \ \ \ \ }\left[ \mathfrak{f}^{\perp },\mathfrak{f}%
^{\parallel }\right] \subset \mathfrak{f}^{\parallel }\text{ , \ \ \ \ \ }%
\left[ \mathfrak{f}^{\parallel },\mathfrak{f}^{\parallel }\right] \subset 
\mathfrak{f}^{\perp }.  \label{kernel-image finite}
\end{equation}

The algebraic structure underlying the integrable hierarchy we are
interested in, is defined by the following graded loop Lie superalgebra%
\begin{equation}
\widehat{\mathfrak{f}}=\dbigoplus\limits_{n\in 
\mathbb{Z}
=-\infty }^{+\infty }\left( \dbigoplus\limits_{j\in 
\mathbb{Z}
=0}^{3}z^{2n+\frac{j}{2}}\otimes \mathfrak{f}_{j}\right) ,
\label{loop algebra}
\end{equation}%
which can be rewriten as a half-integer decomposition%
\begin{equation}
\widehat{\mathfrak{f}}=\dbigoplus\limits_{r\in 
\mathbb{Z}
/2=-\infty }^{+\infty }\widehat{\mathfrak{f}}_{r},\text{ \ \ \ \ \ }\left[ Q,%
\widehat{\mathfrak{f}}_{r}\right] =r\widehat{\mathfrak{f}}_{r}
\label{half-integer expansion}
\end{equation}%
in terms of the homogeneous gradation $Q=z\frac{d}{dz}.$ The complex
variable $z$ will enter later in the Lax operators as the spectral parameter
and it is worth to note that under $Q$, the integer and half-integer
elements of $\widehat{\mathfrak{f}}$ are, respectively, bosonic and
fermionic in character.

The splitting (\ref{kernel-image finite}) is now lifted to the affine
algebra $\widehat{\mathfrak{f}}$ which we write in the form%
\begin{equation*}
\widehat{\mathfrak{f}}=\mathcal{K\oplus M},
\end{equation*}%
where $\mathcal{K=K}_{B}\oplus \mathcal{K}_{F}=\ker (ad(\Lambda ^{(\pm
1)})), $ $\mathcal{M=M}_{B}\oplus \mathcal{M}_{F}=\func{Im}(ad(\Lambda
^{(\pm 1)}))$ and $\left[ Q,\Lambda ^{(\pm 1)}\right] =\pm \Lambda ^{(\pm
1)}.$ In what follows the superscript $r$ of an element $X^{(r)}\in \widehat{%
\mathfrak{f}}_{r}$ stands for the homogeneous grading $\left[ Q,X^{(r)}%
\right] =rX^{(r)},$ $r\in 
\mathbb{Z}
/2$ and projections along $\mathcal{K}$ and $\mathcal{M}$ will be denoted by 
$(\ast \mathcal{)}^{\perp }$ and $(\ast )^{\parallel },$ respectively$.$ As
above, we have%
\begin{equation}
\left[ \mathcal{K}\text{,}\mathcal{K}\right] \subset \mathcal{K}\text{, \ \
\ \ \ }\left[ \mathcal{K}\text{,}\mathcal{M}\right] \subset \mathcal{M}\text{%
, \ \ \ \ \ }\left[ \mathcal{M},\mathcal{M}\right] \subset \mathcal{K}\text{.%
}  \label{kernel-image infinite}
\end{equation}

This homogeneous half-integer gradation is enough for all our purposes%
\footnote{%
This is why we have chosen the name homogeneous integrable hierarchy in this
paper.}, namely, to introduce the symmetry flows, to deduce the Lax
operators and to extract the conserved charges. The first and second
relations in (\ref{kernel-image infinite}) are at the heart of the
implementation of symmetry flows by means of the algebraic dressing
technique in which the symmetries are associated to the subalgebra $\mathcal{%
K}$ while the dynamical physical fields are associated to $\mathcal{M}$
inducing a mapping $\delta _{\mathcal{K}}:\mathcal{M\rightarrow M}$ from
physical fields to physical fields. This is also the algebraic setting
behind the Drinfeld-Sokolov procedure we shall use later.

We now proceed to make this construction more precise. Decompose (\ref%
{half-integer expansion}) as $\widehat{\mathfrak{f}}=\widehat{\mathfrak{f}}%
_{-}+\widehat{\mathfrak{f}}_{+}$, where $\widehat{\mathfrak{f}}_{\pm }$ are
the positive and negative subalgebras induced by the gradation $Q$ and
introduce the following loop supergroup matrices: the so-called dressing
matrices%
\begin{eqnarray}
\Theta &=&\exp \left( \chi ^{\left( -1/2\right) }+\chi ^{\left( -1\right)
}+\chi ^{\left( -3/2\right) }+...\right) ,\text{ \ \ \ \ \ }\Theta ^{\prime }%
\text{ }=\text{ }B^{-1}\Theta ,  \label{Dressing matrices} \\
\Pi &=&B\Pi ^{\prime },\text{ \ \ \ \ \ }\Pi ^{\prime }\text{ }=\text{ }\exp
-\left( \chi ^{\left( +1/2\right) }+\chi ^{\left( +1\right) }+\chi ^{\left(
+3/2\right) }+...\right) ,  \notag
\end{eqnarray}%
where $B\in G=\exp \widehat{\mathfrak{f}}_{0}$ is the Toda field, $\chi
^{\left( r\right) }=\psi ^{\left( r\right) }+\theta ^{(r)}\in \widehat{%
\mathfrak{f}}_{r},$ $\theta ^{(r)}\in \widehat{\mathfrak{f}}_{r}^{\perp }$
and $\psi ^{\left( r\right) }\in \widehat{\mathfrak{f}}_{r}^{\parallel }$
are related to the matter fields. Note that $B$ appears in two different
positions and this will be very useful later.

Recall \cite{babelon-bernad} that the dressing transformation of $x\in 
\widehat{F}$ by $g\in \widehat{F}$ is defined by $^{g}x=\left(
xgx^{-1}\right) _{\pm }xg_{\pm }^{-1},$ where $g=g_{-}^{-1}g_{+}.$ For an
element $g=\exp A$ with $A=A_{+}+A_{-}$ and $A_{\pm }\in \widehat{\mathfrak{f%
}}_{\pm },$ the infinitesimal dressing transformation is 
\begin{equation}
\delta _{A}x=^{g}x-x=\pm \left( xAx^{-1}\right) _{\pm }x\mp xA_{\pm }.
\label{Infinitesimal action}
\end{equation}%
We are interested in using the kernel subalgebra $\mathcal{K}$ to generate
actions on the dressing matrices which are carrying the dynamical degrees of
freedom. From (\ref{Infinitesimal action}) and the decomposition $\mathcal{%
K=K}_{-}\mathcal{+K}_{+},$ we find the infinitesimal actions of $A=A_{+}\in 
\mathcal{K}_{+}$ and $A=A_{-}\in \mathcal{K}_{-}$ on $x=\Theta $ and $x=\Pi
, $ respectively%
\begin{equation}
\delta _{A_{+}}\Theta =-\left( \Theta A_{+}\Theta ^{-1}\right) _{-}\Theta 
\text{, \ \ \ \ \ }\delta _{A_{-}}\Pi =+\left( \Pi A_{-}\Pi ^{-1}\right)
_{+}\Pi ,  \label{positive variarion}
\end{equation}%
where $A_{\pm }$ are linear combinations of elements of $\mathcal{K}_{\pm }$
and $\left( \ast \right) _{\pm }$ stands for projections $\mathcal{P}_{\pm }$
along $\widehat{\mathfrak{f}}_{\pm }.$ It is not difficult to show that for $%
A=A_{-}$ and $x=\Theta $ and for $A=A_{+}$ and $x=\Pi $ the variations
vanish, $\delta _{A_{-}}\Theta =0$ and $\delta _{A_{+}}\Pi =0.$ Hence, in
the present form, the dressing matrices (\ref{Dressing matrices}) only
evolve under half of the kernel algebra $\mathcal{K}$ and we have two sets
of decoupled evolution equations.

Taking $A_{\pm }=$ $t_{\pm n}\Lambda ^{(\pm n)}$, $n\in 
\mathbb{Z}
^{+},$ where $\Lambda ^{(\pm n)}\in Cent\left( \mathcal{K}\right) $ belong
to the center of $\mathcal{K}$ and taking the limit $t_{\pm n}\rightarrow 0,$
we obtain\footnote{%
Use $\delta _{A_{+}}\Theta /t_{+}=\left( ^{A_{+}}\Theta -\Theta \right)
/t_{+}\rightarrow \partial _{+n}\Theta $ and a similar expression for $%
\delta _{A_{-}}\Pi .$} the isospectral evolutions of $\Theta $ and $\Pi $%
\begin{equation}
\partial _{+n}\Theta =-\left( \Theta \Lambda ^{(+n)}\Theta ^{-1}\right)
_{-}\Theta \text{, \ \ \ \ \ }\partial _{-n}\Pi =+\left( \Pi \Lambda
^{(-n)}\Pi ^{-1}\right) _{+}\Pi .  \label{Isospectral evolutions}
\end{equation}%
From equations (\ref{Isospectral evolutions}) we obtain the Lax connections%
\begin{equation}
\Lambda _{\Theta }^{(+n)}=\left( \Theta \Lambda ^{(+n)}\Theta ^{-1}\right)
_{+},\text{ \ \ \ \ \ }\Lambda _{\Pi }^{(-n)}=\left( \Pi \Lambda ^{(-n)}\Pi
^{-1}\right) _{-}  \label{Lax connections}
\end{equation}%
and the Lax operators $L_{+n}=\partial _{+n}-\Lambda _{\Theta }^{(+n)},$ $%
L_{-n}=\partial _{-n}+\Lambda _{\Pi }^{(-n)}$ from the dressing relations 
\begin{equation}
\text{\ }L_{-n}=\Pi L_{-n}^{V}\Pi ^{-1},\text{ \ \ \ \ \ }L_{+n}\text{ = }%
\Theta L_{+n}^{V}\Theta ^{-1},  \label{dressin lax}
\end{equation}%
where $L_{\pm n}^{V}=\partial _{\pm n}\mp \Lambda ^{(\pm n)}$ are the vaccum
Lax operators. The Baker-Akhiezer wave functions $\Psi _{\pm }$ are defined
by $L_{\pm n}\Psi _{\mp }$ $=0$ and are given by%
\begin{equation*}
\Psi _{-}=\Theta \exp \left( +\dsum\limits_{n\in 
\mathbb{Z}
^{+}}t_{+n}\Lambda ^{(+n)}\right) \text{, \ \ \ \ \ }\Psi _{+}=\Pi \exp
\left( -\dsum\limits_{n\in 
\mathbb{Z}
^{+}}t_{-n}\Lambda ^{(-n)}\right) .
\end{equation*}

The equations (\ref{Isospectral evolutions}) describe two identical but
decoupled sets of evolution equations as we mentioned above, the coupling of
the two sectors (of positive and negative times) is achieved by imposing the
relation $g=\Psi _{-}^{-1}\Psi _{+}$ with $g=g_{-}^{-1}g_{+}\in \widehat{F}$
a constant loop group element. Alternatively, we have (see also \cite{frank})%
\begin{equation}
\exp \left( +\dsum\limits_{n\in 
\mathbb{Z}
^{+}}t_{+n}\Lambda ^{(+n)}\right) g\exp \left( +\dsum\limits_{n\in 
\mathbb{Z}
^{+}}t_{-n}\Lambda ^{(-n)}\right) =\Theta ^{-1}(t)\Pi (t).  \label{RH}
\end{equation}%
This is the Riemann-Hilbert factorization problem we use in order to extend
the associated integrable hierarchy described by (\ref{positive variarion})
to flow now under the negative times. From (\ref{RH}) we recover (\ref%
{Isospectral evolutions}) and two important extra equations describing the
isospectral evolution of $\Theta $ and $\Pi $ with respect opposite flow
parameters 
\begin{equation}
\partial _{+n}\Pi =+\left( \Theta \Lambda ^{(+n)}\Theta ^{-1}\right) _{+}\Pi 
\text{, \ \ \ \ \ }\partial _{-n}\Theta =-\left( \Pi \Lambda ^{(-n)}\Pi
^{-1}\right) _{-}\Theta .  \label{Isospectral II}
\end{equation}%
These equations are extended to actions of $A_{+}\in \mathcal{K}_{+}$ and $%
A_{-}\in \mathcal{K}_{-}$ on $\Pi ,$ $\Theta $ and besides of (\ref{positive
variarion}) we also have now that%
\begin{equation}
\delta _{A_{+}}\Pi =+\left( \Theta A_{+}\Theta ^{-1}\right) _{+}\Pi \text{,
\ \ \ \ \ }\delta _{A_{-}}\Theta =-\left( \Pi A_{-}\Pi ^{-1}\right)
_{-}\Theta .  \label{negative variation}
\end{equation}

The equations (\ref{positive variarion}), (\ref{Isospectral evolutions}) and
(\ref{Isospectral II}), (\ref{negative variation}) describe the isospectral
evolution and non-Abelian variations of the dressing matrices $\Theta $ and $%
\Pi .$ Note that the flows associated to the positive times are dual to the
ones associated to the negative times, in the sense that $\mathcal{K}%
_{+}^{\ast }\simeq \mathcal{K}_{-}$ under the (assumed to exists)
non-degenerate inner product $\left\langle A,B\right\rangle $ on $\widehat{%
\mathfrak{f}}$ defined by%
\begin{equation}
\left\langle X^{(r)},Y^{(s)}\right\rangle _{\widehat{\mathfrak{f}}}=\delta
_{r+s,0}\times Str\left( X\cdot Y\right) _{\mathfrak{f}},
\label{inner product}
\end{equation}%
where $X^{(r)}=z^{r}\otimes X_{r},$ $X_{r}\in \mathfrak{f}_{r}$ and $Str$ is
the supertrace in some supermatrix representation of $\mathfrak{f.}$

The decomposition in terms of grades is slightly ambiguous because we can
take $\widehat{\mathfrak{f}}=\widehat{\mathfrak{f}}_{-}+\widehat{\mathfrak{f}%
}_{+}$ with $\widehat{\mathfrak{f}}_{-}=\oplus _{r\leq -1/2}$ $\widehat{%
\mathfrak{f}}_{r}$, $\widehat{\mathfrak{f}}_{+}=\oplus _{r\geq 0}$ $\widehat{%
\mathfrak{f}}_{r}$ or $\widehat{\mathfrak{f}}_{-}=\oplus _{r\leq 0}$ $%
\widehat{\mathfrak{f}}_{r}$, $\widehat{\mathfrak{f}}_{+}=\oplus _{r\geq
+1/2} $ $\widehat{\mathfrak{f}}_{r}$ and this turn out to be related to
gauge symmetries of the form $H_{L}\times H_{R},$ as we shall see below. For
the moment let us take into account this difference in order to rewrite the
symmetry flows more explicitly.

All the above evolution equations for $\Theta $ and $\Pi $ are summarized
into the flow equations\footnote{%
One of the most remarkable properties of the flow equations (\ref{flow
equations}), (\ref{gauge equivalent variations}) is that they associate a 2D
symmetry flow to every Lie algebra generator in $\mathcal{K}$ through a Lax
operator of the form $L_{\mathcal{K}}=\delta _{\mathcal{K}}+\mathcal{A}_{%
\mathcal{K}}.$ The symmetry field variations are obtained by dressing the
identities $\left[ L_{\mathcal{K}}^{V},L_{\pm }^{V}\right] =0,$ where $L_{%
\mathcal{K}}^{V},L_{\pm }^{V}$ are the vacuum Lax operators.}%
\begin{eqnarray}
\delta _{A_{+}}\Theta &=&-\left( \Theta A_{+}\Theta ^{-1}\right) _{<0}\Theta 
\text{, \ \ \ \ \ }\delta _{A_{-}}\Pi \text{ }=\text{ }+\left( \Pi A_{-}\Pi
^{-1}\right) _{\geq 0}\Pi ,  \label{flow equations} \\
\delta _{A_{+}}\Pi &=&+\left( \Theta A_{+}\Theta ^{-1}\right) _{\geq 0}\Pi 
\text{, \ \ \ \ \ }\delta _{A_{-}}\Theta \text{ }=\text{ }-\left( \Pi
A_{-}\Pi ^{-1}\right) _{<0}\Theta ,  \notag
\end{eqnarray}%
where we have chosen to put the grade zero part $\widehat{\mathfrak{f}}_{0}$
in $\widehat{\mathfrak{f}}_{+}.$ Let us note that after the coupling the
dynamical degrees of freedom are doubled by the extension because now we
have two dressing matrices $\Theta ,\Pi $ carrying different sets of fields.

The RHS of (\ref{RH}) can be written in an equivalent way because $\Theta
^{-1}\Pi =\Theta ^{\prime -1}\Pi ^{\prime }$, cf (\ref{Dressing matrices}).
In these prime variables the equations (\ref{flow equations}) become%
\begin{eqnarray}
\delta _{A_{+}}\Theta ^{\prime } &=&-\left( \Theta ^{\prime }A_{+}\Theta
^{\prime -1}\right) _{\leq 0}\Theta ^{\prime }\text{, \ \ \ \ \ }\delta
_{A_{-}}\Pi ^{\prime }\text{ }=\text{ }+\left( \Pi ^{\prime }A_{-}\Pi
^{\prime -1}\right) _{>0}\Pi ^{\prime },  \label{gauge equivalent variations}
\\
\delta _{A_{+}}\Pi ^{\prime } &=&+\left( \Theta ^{\prime }A_{+}\Theta
^{\prime -1}\right) _{>0}\Pi ^{\prime }\text{, \ \ \ \ \ }\delta
_{A_{-}}\Theta ^{\prime }\text{ }=\text{ }-\left( \Pi ^{\prime }A_{-}\Pi
^{\prime -1}\right) _{\leq 0}\Theta ^{\prime },  \notag
\end{eqnarray}%
where we have chosen to put the grade zero part $\widehat{\mathfrak{f}}_{0}$
in $\widehat{\mathfrak{f}}_{-}.$ From (\ref{gauge equivalent variations}) we
obtain the Lax connections%
\begin{equation}
\Lambda _{\Theta ^{\prime }}^{(+n)}=\left( \Theta ^{\prime }\Lambda
^{(+n)}\Theta ^{\prime -1}\right) _{>0},\text{ \ \ \ \ \ }\Lambda _{\Pi
^{\prime }}^{(-n)}=\left( \Pi ^{\prime }\Lambda ^{(-n)}\Pi ^{\prime
-1}\right) _{\leq 0}  \label{prime Lax connections}
\end{equation}%
and the Lax operators $L_{+n}^{\prime }=\partial _{+n}-\Lambda _{\Theta
^{\prime }}^{(+n)},$ $L_{-n}^{\prime }=\partial _{-n}+\Lambda _{\Pi ^{\prime
}}^{(-n)}$ from the dressing relations (\ref{dressin lax}) with\ $\Theta
,\Pi $ replaced by $\Theta ^{\prime },\Pi ^{\prime }.$ The prime and
un-prime expressions make clear the projections to be used in computations.
Of course, the two formulations are completely equivalent and it is not
difficult to see that the Lax operators are related by a $B$-conjugation%
\begin{equation}
L_{\pm n}^{\prime }=B^{-1}L_{\pm n}B.  \label{lax conjugation}
\end{equation}%
However, as we will see along the text this is not an ordinary gauge
transformation because in the decomposition $\widehat{\mathfrak{f}}_{0}=%
\widehat{\mathfrak{f}}_{0}^{\perp }+\widehat{\mathfrak{f}}_{0}^{\parallel }$%
, the $\widehat{\mathfrak{f}}_{0}^{\perp }$ is the gauge algebra while in $%
B\in G=\exp \widehat{\mathfrak{f}}_{0}$ we have physical fields in $\widehat{%
\mathfrak{f}}_{0}^{\parallel }.$

In the mKdV hierarchy the grade zero part of the bosonic kernel is empty,
i.e $\mathcal{K}_{B}^{(0)}=\widehat{\mathfrak{f}}_{0}^{\perp }=\oslash $ and
this was the situation already considered in \cite{susyflows-mKdV}. Now we
have that $\mathcal{K}_{B}^{(0)}\neq \oslash ,$ which is more interesting$.$
In this case we have symmetry flows associated to $\widehat{\mathfrak{f}}%
_{0}^{\perp }.$ As mentioned before, this flows are nothing but usual gauge
symmetries and now we proceed to identify them.

The $H_{L}\times H_{R}$ gauge transformations are generated by the elements $%
K_{L/R}^{(0)}\in \mathfrak{h}_{L}\times \mathfrak{h}_{R}=\widehat{\mathfrak{f%
}}_{0}^{\perp }\times \widehat{\mathfrak{f}}_{0}^{\perp }$ and their action
are encoded in the following flow equations%
\begin{eqnarray}
\delta _{L}\Theta &=&-\left( \Theta K_{L}^{(0)}\Theta ^{-1}\right)
_{<0}\Theta \text{, \ \ \ \ \ \ \ \ }\delta _{L}\Pi \text{ }=\text{ }+\left(
\Theta K_{L}^{(0)}\Theta ^{-1}\right) _{\geq 0}\Pi ,  \label{gauge flows} \\
\delta _{R}\Theta ^{\prime } &=&-\left( \Pi ^{\prime }K_{R}^{(0)}\Pi
^{\prime -1}\right) _{\leq 0}\Theta ^{\prime }\text{, \ \ \ \ \ }\delta
_{R}\Pi ^{\prime }\text{ }=\text{ }+\left( \Pi ^{\prime }K_{R}^{(0)}\Pi
^{\prime -1}\right) _{>0}\Pi ^{\prime }.  \notag
\end{eqnarray}%
Consider the first line of (\ref{gauge flows}) and a constant element $%
K_{L}^{(0)}\in \mathfrak{h}_{L}.$ These equations give rise to the dressing
relations 
\begin{equation*}
\Theta \left( \delta _{L}-K_{L}^{(0)}\right) \Theta ^{-1}=\delta
_{L}-K_{L}^{(0)}\text{, \ \ \ \ \ }\Pi \left( \delta _{L}\right) \Pi
^{-1}=\delta _{L}-K_{L}^{(0)}.
\end{equation*}%
The first equation is equivalent to the infinitesimal gauge transformations $%
\delta _{L}\Theta =\left[ K_{L}^{(0)},\Theta \right] $ and the second
equation is equivalent to $\delta _{L}\Pi ^{\prime }\Pi ^{\prime -1}=0$ and $%
\delta _{L}B=K_{L}^{(0)}B,$ where we have used $\Pi =B\Pi ^{\prime }.$
Setting $\Gamma _{L}=\exp K_{L}^{(0)}\in H_{L}=\exp \mathfrak{h}_{L},$ we
get the finite gauge transformations%
\begin{equation}
\widetilde{\Theta }_{L}=\Gamma _{L}\Theta \Gamma _{L}^{-1}\text{, \ \ \ \ }%
\widetilde{B}_{L}=\Gamma _{L}B\text{, \ \ \ \ \ }\widetilde{\Pi ^{\prime }}%
_{L}=\Pi ^{\prime }.  \label{left gauge transformations}
\end{equation}%
Under these left-gauge transformations the Lax connections $\Lambda _{\Theta
}^{(+n)}$ , $\Lambda _{\Pi }^{(-n)}$ transform as 
\begin{equation*}
\Lambda _{\widetilde{\Theta }_{L}}^{(+n)}=\Gamma _{L}\Lambda _{\Theta
}^{(+n)}\Gamma _{L}^{-1},\text{ \ \ \ \ \ }\Lambda _{\widetilde{\Pi }%
_{L}}^{(-n)}=\Gamma _{L}\Lambda _{\Pi }^{(-n)}\Gamma _{L}^{-1}
\end{equation*}%
preserving the compatibility conditions because $\widetilde{F}%
_{mn}^{L}=\Gamma _{L}F_{mn}\Gamma _{L}^{-1}$. \ In an analogous way,
considering the second line of (\ref{gauge flows}) and an element $%
K_{R}^{(0)}\in \mathfrak{h}_{R}$ we have the dressing relations 
\begin{equation*}
\Theta ^{\prime }\left( \delta _{R}\right) \Theta ^{\prime -1}=\delta
_{R}+K_{R}^{(0)}\text{, \ \ \ \ \ }\Pi ^{\prime }\left( \delta
_{R}+K_{R}^{(0)}\right) \Pi ^{\prime -1}=\delta _{R}+K_{R}^{(0)},\ 
\end{equation*}%
which are equivalent to the infinitesimal gauge transformations $\delta
_{R}\Theta \Theta ^{-1}=0,$ $\delta _{R}B=BK_{R}^{(0)}$ and $\delta _{R}\Pi
^{\prime }=-\left[ K_{R}^{(0)},\Pi ^{\prime }\right] $. Setting $\Gamma
_{R}=\exp K_{R}^{(0)}\in H_{R}=\exp \mathfrak{h}_{R},$ we get the finite
gauge transformations%
\begin{equation}
\widetilde{\Theta }_{R}=\Theta \text{, \ \ \ \ \ }\widetilde{B}_{R}=B\Gamma
_{R}\text{, \ \ \ \ \ }\widetilde{\Pi ^{\prime }}_{R}=\Gamma _{R}^{-1}\Pi
^{\prime }\Gamma _{R}.  \label{right gauge transformations}
\end{equation}%
Under (\ref{right gauge transformations}) the Lax connections $\Lambda
_{\Theta }^{(+n)}$, $\Lambda _{\Pi }^{(-n)}$ are gauge invariant and we have 
$\widetilde{F}_{mn}^{R}=F_{mn}.$ This is not an asymmetric behavior, if we
perform the same analysis in the $B$-equivalent representation given by (\ref%
{gauge equivalent variations}) with Lax connections $\Lambda _{\Theta
^{\prime }}^{(+n)}$ , $\Lambda _{\Pi ^{\prime }}^{(-n)}$, the situation is
reversed, i.e $\widetilde{F}_{mn}^{\prime R}=$ $\Gamma
_{R}^{-1}F_{mn}^{\prime }\Gamma _{R}$\ and $\widetilde{F}_{mn}^{\prime
L}=F_{mn}^{\prime }.$ This is a consequence of the position of the Toda
field in (\ref{Dressing matrices}).

Then, combining (\ref{left gauge transformations}) and (\ref{right gauge
transformations}) we have the total finite action of the gauge group $%
H_{L}\times H_{R}$ on the dressing matrices%
\begin{equation}
\widetilde{\Theta }=\Gamma _{L}\Theta \Gamma _{L}^{-1},\ \ \ \ \ \ 
\widetilde{B}=\Gamma _{L}B\Gamma _{R},\ \ \ \ \ \ \widetilde{\Pi ^{\prime }}%
=\Gamma _{R}^{-1}\Pi ^{\prime }\Gamma _{R}
\label{full gauge transformations}
\end{equation}%
and from (\ref{Dressing matrices}) we get the action on each graded subspace 
$\psi ^{(\pm r)}\in \widehat{\mathfrak{f}}_{\pm r}.$ In particular and for
future reference, we write%
\begin{equation}
\widetilde{\psi }^{\left( -1/2\right) }=\Gamma _{L}\psi ^{\left( -1/2\right)
}\Gamma _{L}^{-1},\ \ \ \ \ \ \widetilde{B}=\Gamma _{L}B\Gamma _{R}\ ,\ \ \
\ \ \ \widetilde{\psi }^{\left( +1/2\right) }=\Gamma _{R}^{-1}\psi ^{\left(
+1/2\right) }\Gamma _{R}.  \label{component gauge trans}
\end{equation}

Let us mention that these global gauge symmetries can be related to
Kac-Moody algebras if we promote the gauge parameters to be chiral, see (\ref%
{kac moody symmetry}) below.

\subsection{Relativistic sector of the extended homogeneous hierarchy.}

In what follows we restrict the above construction to the subsystem
associated to the flows $(t_{-1},t_{-1/2},t_{0},t_{+1/2},t_{+1})$ in the
extended homogeneous (also AKNS) hierarchy\footnote{%
The name AKNS is because for the loop algebra $sl(2)^{(1)},$ the times $%
t_{+1},t_{+2}$ leads to the AKNS\ equations while the times $t_{+1},t_{-1}$
leads to the complex sine-Gordon equations (see \cite{AKNS}).}. We have
already identified the flows corresponding to $t_{0}$ with gauge symmetries
and now we want to deduce the Lax operators we are going to use and to find
some immediate consequences. In the next section we initiate the study of
the fermionic symmetry flows associated to $t_{\pm 1/2}.$

The relativistic sector of the extended homogeneous hierarchy is defined by (%
\ref{flow equations}), (\ref{gauge equivalent variations}) for the two
constant elements $\Lambda _{\pm }^{(\pm 1)}\in \mathcal{K}_{\pm }$ of
grades $\pm 1,$ associated to the isospectral times $t_{\pm 1}=-x^{\pm }$.
We are interested in the action of (\ref{full gauge transformations}), (\ref%
{component gauge trans}) as local gauge transformations preserving the
compatibility conditions $\left[ L_{+},L_{-}\right] =\left[ L_{+}^{\prime
},L_{-}^{\prime }\right] =0$ because of the relation of this integrable
hierarchy with the Pohlmeyer reduced models. Then, we add two gauge
connections $A_{\pm }^{(L)}\in \mathfrak{h}_{L}$ and $A_{\pm }^{(R)}\in 
\mathfrak{h}_{R}$ to the Lax operators and transforming as follows 
\begin{equation}
\widetilde{A}_{\pm }^{(L)}=\Gamma _{L}A_{\pm }^{(L)}\Gamma
_{L}^{-1}+\partial _{\pm }\Gamma _{L}\Gamma _{L}^{-1}\text{, \ \ \ \ \ }%
\widetilde{A}_{\pm }^{(R)}=\Gamma _{R}^{-1}A_{\pm }^{(R)}\Gamma _{R}-\Gamma
_{R}^{-1}\partial _{\pm }\Gamma _{R}  \label{gauge transf}
\end{equation}%
getting the desired covariant behavior $\widetilde{L}_{\pm },=\Gamma
_{L}L_{\pm }\Gamma _{L}^{-1}$ and $\widetilde{L}_{\pm }^{\prime }=\Gamma
_{R}^{-1}L_{\pm }^{\prime }\Gamma _{R}.$ Explicitly, they are 
\begin{eqnarray*}
L_{+} &=&\partial _{+}-A_{+}^{(L)}+\left( A_{+}^{(0)}+Q_{+}^{(0)}+\psi
_{+}^{\left( +1/2\right) }+\Lambda _{+}^{(+1)}\right) , \\
L_{-} &=&\partial _{-}-A_{-}^{(L)}-B\left( \psi _{-}^{\left( -1/2\right)
}+\Lambda _{-}^{(-1)}\right) B^{-1}
\end{eqnarray*}%
and%
\begin{eqnarray*}
L_{+}^{\prime } &=&\partial _{+}-A_{+}^{(R)}+B^{-1}\left( \psi _{+}^{\left(
+1/2\right) }+\Lambda _{+}^{(+1)}\right) B, \\
L_{-}^{\prime } &=&\partial _{-}-A_{-}^{(R)}-\left(
A_{-}^{(0)}+Q_{-}^{(0)}+\psi _{-}^{\left( -1/2\right) }+\Lambda
_{-}^{(-1)}\right) ,
\end{eqnarray*}%
where 
\begin{equation}
\psi _{\pm }^{\left( \pm 1/2\right) }=\pm \left[ \psi ^{\left( \mp
1/2\right) },\Lambda _{\pm }^{(\pm 1\text{)}}\right] \in \mathcal{M}%
_{F}^{(\pm 1/2)},\text{ }A_{\pm }^{(0)}=\pm \left[ \psi ^{(\mp 1)},\Lambda
_{\pm }^{(\pm 1)}\right] \in \mathcal{M}_{B}^{(0)},\text{ }Q_{\pm }^{(0)}=%
\frac{1}{2}\left[ \psi ^{(\mp 1/2)},\left[ \psi ^{(\mp 1/2)},\Lambda _{\pm
}^{(\pm 1)}\right] \right] \in \mathcal{K}_{B}^{(0)}.
\label{Baker-Hausdorf fields}
\end{equation}

The dynamical fields are encoded in the expressions (\ref{Baker-Hausdorf
fields}) but these relations are rather obscure, at least for the bosonic
fields\footnote{%
Note that the fermions are automatically in the image subspace $\mathcal{M=}$
$\widehat{\mathfrak{f}}^{\parallel }$. This is an important issue related to
gauge fixing of the residual kappa symmetry in the reduction of superstring
sigma models.}. To identify the fields in a precise way we appeal to the
relation (\ref{lax conjugation}) found above which relates the field content
between $L_{\pm }^{\prime }$ and $L_{\pm }.$ From $L_{\pm }^{\prime
}=B^{-1}L_{\pm }B$ we find the relations%
\begin{equation}
A_{+}^{(L)}-A_{+}^{(0)}=\widehat{A}_{+}^{(L)}\text{, \ \ \ \ \ }%
A_{-}^{(R)}+A_{-}^{(0)}=\widehat{A}_{-}^{(R)},  \label{relations}
\end{equation}%
where%
\begin{equation}
\widehat{A}_{+}^{(L)}\overset{def}{=}\partial
_{+}BB^{-1}+BA_{+}^{(R)}B^{-1}+Q_{+}^{(0)},\text{ \ \ \ \ \ }\widehat{A}%
_{-}^{(R)}\overset{def}{=}-B^{-1}\partial
_{-}B+B^{-1}A_{-}^{(L)}B-Q_{-}^{(0)}.  \label{total missing}
\end{equation}

By projecting (\ref{total missing}) along the gauge algebras $\mathfrak{h}%
_{L},\mathfrak{h}_{R}$ we find the componets $A_{+}^{(L)}$, $A_{-}^{(R)}$ as
functions of the other fields 
\begin{equation}
A_{+}^{(L)}=\mathcal{P}_{\mathfrak{h}_{L}}\left( \widehat{A}%
_{+}^{(L)}\right) ,\text{ \ \ \ \ \ }A_{-}^{(R)}=\mathcal{P}_{\mathfrak{h}%
_{R}}\left( \widehat{A}_{-}^{(R)}\right)  \label{missing componentes}
\end{equation}%
and by projecting along the image subspace $\mathcal{M}$, we find the
components $A_{\pm }^{(0)}$ in terms of the Toda field $B$ and the gauge
fields 
\begin{equation}
A_{+}^{(0)}=-\left( \partial _{+}BB^{-1}+BA_{+}^{(R)}B^{-1}\right)
^{\parallel },\text{ \ \ \ \ \ }A_{-}^{(0)}=-\left( B^{-1}\partial
_{-}B-B^{-1}A_{-}^{(L)}B\right) ^{\parallel }.  \label{image fields}
\end{equation}%
Under the inner product (\ref{inner product}) the relations (\ref{relations}%
) satisfy%
\begin{equation}
\left\langle \left( A_{+}^{(0)}\right) ^{2}\right\rangle =\left\langle
\left( \widehat{A}_{+}^{(L)}\right) ^{2}-\left( A_{+}^{(L)}\right)
^{2}\right\rangle ,\text{ \ \ \ \ \ }\left\langle \left( A_{-}^{(0)}\right)
^{2}\right\rangle =\left\langle \left( \widehat{A}_{-}^{(R)}\right)
^{2}-\left( A_{-}^{(R)}\right) ^{2}\right\rangle  \label{kinetic A's}
\end{equation}%
and are functionals of the physical fields in $\mathcal{M}$, as should be%
\footnote{%
These are contributions of the $T_{\pm \pm }$ components of the stress
tensor $T_{\mu \nu }$, see (\ref{Tensor}) below.}.

Finally, we have the final form of the Lax pairs\footnote{%
This is way the components $A_{+}^{(L)}$, $A_{-}^{(R)}$ were termed as
"missing" in \cite{mira-hollowood}. They do not appear explicitly in the
final form of $L_{\pm }.$}%
\begin{eqnarray}
L_{+}\left( A\right) &=&\partial _{+}-\partial
_{+}BB^{-1}-BA_{+}^{(R)}B^{-1}+\psi _{+}^{\left( +1/2\right) }+\Lambda
_{+}^{(+1)},  \label{Lax operators} \\
L_{-}\left( A\right) &=&\partial _{-}-A_{-}^{(L)}-B\left( \psi _{-}^{\left(
-1/2\right) }+\Lambda _{-}^{(-1)}\right) B^{-1}  \notag
\end{eqnarray}%
and%
\begin{eqnarray}
L_{+}^{\prime }\left( A\right) &=&\partial _{+}-A_{+}^{(R)}+B^{-1}\left(
\psi _{+}^{\left( +1/2\right) }+\Lambda _{+}^{(+1)}\right) B,
\label{prime Lax operators} \\
L_{-}^{\prime }\left( A\right) &=&\partial _{-}+B^{-1}\partial
_{-}B-B^{-1}A_{-}^{(L)}B-\psi _{-}^{\left( -1/2\right) }-\Lambda _{-}^{(-1)},
\notag
\end{eqnarray}%
with%
\begin{equation}
\widetilde{A}_{-}^{(L)}=\Gamma _{L}A_{-}^{(L)}\Gamma _{L}^{-1}+\partial
_{-}\Gamma _{L}\Gamma _{L}^{-1}\text{, \ \ \ \ \ }\widetilde{A}%
_{+}^{(R)}=\Gamma _{R}^{-1}A_{+}^{(R)}\Gamma _{R}-\Gamma _{R}^{-1}\partial
_{+}\Gamma _{R}.  \label{connection transformation}
\end{equation}

The equations of motion of the system are, by definition, given by the zero
curvature $F_{+-}=\left[ L_{+},L_{-}\right] $ of (\ref{Lax operators}) and
they define the fermionic extension of the non-Abelian Toda models on the
bi-quotient $H_{L}\backslash G/H_{R}.$ They are given by 
\begin{eqnarray}
F_{+-}^{(+1/2)} &=&-D_{-}^{(L)}\psi _{+}^{\left( +1/2\right) }+\left[ B\psi
_{-}^{\left( -1/2\right) }B^{-1},\Lambda _{+}^{\text{(}+1\text{)}}\right] ,
\label{NA affine toda equations} \\
F_{+-}^{(0)} &=&D_{-}^{(L)}\left( \partial
_{+}BB^{-1}+BA_{+}^{(R)}B^{-1}\right) -\partial _{+}A_{-}^{(L)}-  \notag \\
&&-\left[ \Lambda _{+}^{(+1)},B\Lambda _{-}^{(-1)}B^{-1}\right] -\left[ \psi
_{+}^{\left( +1/2\right) },B\psi _{-}^{\left( -1/2\right) }B^{-1}\right] , 
\notag \\
F_{+-}^{(-1/2)} &=&B\left( -D_{+}^{(R)}\psi _{-}^{\left( -1/2\right) }+\left[
\Lambda _{-}^{(-1)},B^{-1}\psi _{+}^{\left( +1/2\right) }B\right] \right)
B^{-1}\text{,}  \notag
\end{eqnarray}%
where $D_{-}^{(L)}=\partial _{-}-\left[ A_{-}^{(L)},\text{ }\right] $ and $%
D_{+}^{(R)}=\partial _{+}-\left[ A_{+}^{(R)},\text{ }\right] $ are the
covariant derivatives for the $H_{L}\times H_{R}$ actions of the gauge group$%
.$ The equations given by (\ref{prime Lax operators}) are simply $%
F_{+-}^{\prime }=B^{-1}F_{+-}B.$

The gauge fields $A_{\pm }^{(L)}$, $A_{\pm }^{(R)}$ are flat, an important
property to be used later. To see this, we note that the grade zero
equations of motion $F_{+-}^{(0)}$ and $F_{+-}^{\prime (0)}$ can be written,
with the help of the Jacobi identity and the $F_{+-}^{(\pm 1/2)},$ $%
F_{+-}^{\prime (\pm 1/2)}$ equations of motion, as%
\begin{eqnarray*}
F_{+-}^{(0)} &=&\partial _{-}\widehat{A}_{+}^{(L)}-\partial _{+}A_{-}^{(L)}+%
\left[ \widehat{A}_{+}^{(L)},A_{-}^{(L)}\right] -\left[ \Lambda _{+}^{\text{(%
}+1\text{)}},X_{-}^{(-1)}\right] , \\
F_{+-}^{\prime (0)} &=&\partial _{-}A_{+}^{(R)}-\partial _{+}\widehat{A}%
_{-}^{(R)}+\left[ A_{+}^{(R)},\widehat{A}_{-}^{(R)}\right] +\left[ \Lambda
_{-}^{\text{(}-1\text{)}},X_{+}^{(+1)}\right] ,
\end{eqnarray*}%
where $X_{\pm }^{(\pm 1)}\in \widehat{\mathfrak{f}}.$ Projecting $%
F_{+-}^{(0)}$ and $F_{+-}^{\prime (0)}$ along $\mathfrak{h}_{L}\ $and $%
\mathfrak{h}_{R}$ respectively and taking into account (\ref{missing
componentes}), we conclude that the connections $A_{\pm }^{(L)}$, $A_{\pm
}^{(R)}$ are pure gauge%
\begin{equation}
\left[ \partial _{+}-A_{+}^{(L/R)},\partial _{-}-A_{-}^{(L/R)}\right] =0.
\label{flat connections}
\end{equation}

In the (on-shell) gauge $A_{\pm }^{(L/R)}=0,$ the equations of motion of the
fermionic extension of the non-abelian Toda models (\ref{NA affine toda
equations}), together with the equations (\ref{missing componentes}) become,
respectively,%
\begin{eqnarray}
\partial _{-}\psi _{+}^{(+1/2)} &=&\left[ B\psi _{-}^{(-1/2)}B^{-1},\Lambda
_{+}^{(+1)}\right] ,  \label{gauged fix eq of motion} \\
\partial _{-}\left( \partial _{+}BB^{-1}\right) &=&\left[ \Lambda
_{+}^{(+1)},B\Lambda _{-}^{(-1)}B^{-1}\right] +\left[ \psi
_{+}^{(+1/2)},B\psi _{-}^{(-1/2)}B^{-1}\right] ,  \notag \\
\partial _{+}\psi _{-}^{(-1/2)} &=&\left[ \Lambda _{-}^{(-1)},B^{-1}\psi
_{+}^{(+1/2)}B,\right] ,  \notag \\
\left( \partial _{+}BB^{-1}+Q_{+}^{(0)}\right) ^{\perp } &=&\left(
B^{-1}\partial _{-}B+Q_{-}^{(0)}\right) ^{\perp }\text{ }=\text{ }0.
\label{new constraints}
\end{eqnarray}%
Note that in contrast to the purely bosonic non-Abelian Toda models which
are characterized by the constraints $\left( \partial _{+}BB^{-1}\right)
^{\perp }=\left( B^{-1}\partial _{-}B\right) ^{\perp }=0,$ the new
constraints (\ref{new constraints}) are modified by the fermion bi-linears $%
Q_{\pm }^{(0)}.$ These constraints mean that there are no dynamical degrees
of freedom associated to the kernel subalgebra $\mathcal{K=}$ $\widehat{%
\mathfrak{f}}^{\perp }$, as expected. Note that they are also some sort of
classical bosonization rules.

To compute the classical spins of the fields, it is useful to add a central
extension to the loop algebra (\ref{loop algebra}) and introduce the central
and gradation fields $\nu ,\mu $, in order to restore the conformal
invariance of the equations (\ref{gauged fix eq of motion}). Writing $%
B=\gamma \exp [\eta Q]\exp [\nu C],$ $\gamma \in G,$ we have that the
equations%
\begin{eqnarray}
\partial _{-}\psi _{+}^{\left( +1/2\right) } &=&e^{-\eta /2}\left[ \gamma
\psi _{-}^{\left( -1/2\right) }\gamma ^{-1},\Lambda _{+}^{\text{(}+1\text{)}}%
\right] ,  \label{explicit NA toda eq} \\
\partial _{-}\left( \partial _{+}\gamma \gamma ^{-1}+\gamma
A_{+}^{(R)}\gamma ^{-1}\right) -\partial _{+}A_{-}^{(L)}+\partial
_{-}\partial _{+}\nu C &=&e^{-\eta }\left[ \Lambda _{+}^{(+1)},\gamma
\Lambda _{-}^{(-1)}\gamma ^{-1}\right] +e^{-\eta /2}\left[ \psi _{+}^{\left(
+1/2\right) },\gamma \psi _{-}^{\left( -1/2\right) }\gamma ^{-1}\right] , 
\notag \\
\partial _{+}\psi _{-}^{\left( -1/2\right) } &=&e^{-\eta /2}\left[ \Lambda
_{-}^{(-1)},\gamma ^{-1}\psi _{+}^{\left( +1/2\right) }\gamma \right] , 
\notag \\
\partial _{-}\partial _{+}\eta Q &=&0,  \notag
\end{eqnarray}%
are invariant under conformal transformations $x^{+}\rightarrow \widetilde{x}%
^{+}=f(x^{+}),$ $x^{-}\rightarrow \widetilde{x}^{-}=h(x^{-}),$ with the
fields changing in the following manner%
\begin{eqnarray}
\widetilde{\gamma }(\widetilde{x}^{+},\widetilde{x}^{-}) &=&\gamma
(x^{+},x^{-}),  \label{conformal transf.} \\
e^{-\widetilde{\eta }(\widetilde{x}^{+},\widetilde{x}^{-})} &=&\left(
f^{\prime }\right) ^{-1}\left( h^{\prime }\right) ^{-1}e^{-\eta
(x^{+},x^{-})},  \notag \\
e^{-\widetilde{\nu }(\widetilde{x}^{+},\widetilde{x}^{-})} &=&\left(
f^{\prime }\right) ^{\delta }\left( h^{\prime }\right) ^{\delta }e^{-\nu
(x^{+},x^{-})},  \notag \\
\widetilde{\psi }_{+}^{\left( +1/2\right) }(\widetilde{x}^{+},\widetilde{x}%
^{-}) &=&\left( f^{\prime }\right) ^{-1/2}\psi _{+}^{\left( +1/2\right)
}(x^{+},x^{-}),  \notag \\
\widetilde{\psi }_{-}^{\left( -1/2\right) }(\widetilde{x}^{+},\widetilde{x}%
^{-}) &=&\left( h^{\prime }\right) ^{-1/2}\psi _{-}^{\left( -1/2\right)
}(x^{+},x^{-}),  \notag
\end{eqnarray}%
where $\delta $ is arbitrary and $f^{\prime }=\partial _{+}f,$ $h^{\prime
}=\partial _{-}h.$ Under a Lorentz transformation $x^{\pm }\rightarrow 
\widetilde{x}^{\pm }=\xi ^{\pm 1}x^{\pm }$ we can read off the classical
spin of the fields. The bosonic fields are all scalars and the last two
equations of (\ref{conformal transf.}) imply that 
\begin{equation}
\widetilde{\psi }^{\left( \pm 1/2\right) }(\xi x^{+},\xi ^{-1}x^{-})=\xi
^{\pm 1/2}\psi ^{\left( \pm 1/2\right) }(x^{+},x^{-})
\label{fermion lorentz transformations}
\end{equation}%
which instruct us to consider $\psi ^{\left( \pm 1/2\right) }$ as legitimate
two dimensional real spinors (Majorana-Weyl). This result is important
because in the reduction process of superstring sigma models we start with
world-sheet scalars but end up with world-sheet spinors. The reduction
process will be review in chapter 3.

To see how the Lax pair (\ref{Lax operators}) change under Lorentz
transformations we need to use $\widetilde{A}_{\pm }(\xi x^{+},\xi
^{-1}x^{-})=\xi ^{\mp 1}A_{\pm }(x^{+},x^{-})$. Then, we have 
\begin{equation}
\widetilde{L}_{\pm }\left( A;z\right) =\xi ^{\mp 1}L_{\pm }\left( A;\xi
z\right)  \label{Lax pair Lorentz change}
\end{equation}%
and we see that the net effect of a Lorentz transformations is basically a
rescaling of the spectral parameter $z\rightarrow \xi z.$ From this we
conclude that the equations of motion (\ref{NA affine toda equations}) are
Lorentz invariant because they are $z$-independent. This result is also
important in the context of reduced models because this means that equations
of motion for the transverse degrees of freedom of the superstring are
Lorentz invariant.

\subsection{Introducing the supersymmetry flows variations.}

In this section we try to find the supersymmetry transformations associated
to $t_{\pm 1/2}$ that leave invariant the fermionic non-Abelian Toda
equations (\ref{NA affine toda equations}). The outcome, at this stage, is
that we can defined consistent supersymmetry flows only when the gauge group 
$H_{L}\times H_{R}$ is global and there are no gauge fields $A=0$. A deeper
study of these supersymmetric flows, also introduced in \cite{SSSSG
AdS(5)xS(5)}, will be presented elsewhere \cite{us}.

From the non-Abelian flow evolution equations (\ref{flow equations}), we can
associate to the constant grassmanian elements $D^{(\pm 1/2)}=\epsilon
_{i}F_{i}^{(\pm 1/2)}\in \mathcal{K}_{F}^{(\pm 1/2)},$ $i=1,...,\dim 
\mathcal{K}_{F}^{(\pm 1/2)}$ of the fermionic kernel of grade $\pm 1/2,$ i.e$%
\ \mathcal{K}_{F}^{(\pm 1/2)}$, the following two odd Lax variation operators%
\begin{equation}
L_{+1/2}=\delta _{+1/2}-D_{\Theta }^{(+1/2)},\text{ \ \ \ \ \ }%
L_{-1/2}=\delta _{-1/2}+D_{\Pi }^{(-1/2)},  \label{fractional Lax}
\end{equation}%
where, as in (\ref{Lax connections}),(\ref{dressin lax}), we have 
\begin{equation*}
D_{\Theta }^{(+1/2)}=\left( \Theta D^{(+1/2)}\Theta ^{-1}\right) _{\geq 0},%
\text{ \ \ \ \ \ }D_{\Pi }^{(-1/2)}\text{ }=\text{ }\left( \Pi D^{(-1/2)}\Pi
^{-1}\right) _{\leq -1/2}.
\end{equation*}%
and a similar set of operators in the primed variables are obtained from (%
\ref{gauge equivalent variations}). These satisfy the relation $L_{\pm
1/2}^{\prime }=B^{-1}L_{\pm 1/2}B$ and imply that%
\begin{equation}
\delta _{+1/2}BB^{-1}=\left[ \psi ^{(-1/2)}+\theta ^{(-1/2)},D^{(+1/2)}%
\right] ,\text{ \ \ \ \ \ }B^{-1}\delta _{-1/2}B=-\left[ \psi
^{(+1/2)}+\theta ^{(+1/2)},D^{(-1/2)}\right] .  \label{useful relations}
\end{equation}

At first sight, the supersymmetry variations for the fields would be
extracted from the compatibility relations $\left[ L_{\pm 1/2},L_{+}(A)%
\right] =\left[ L_{\pm 1/2},L_{-}(A)\right] =0$ as was done in the case of
the mKdV hierarchy \cite{susyflows-mKdV} (see also \cite{frank}), i.e, when $%
A_{+}^{(R)}=A_{-}^{(L)}=0$ in (\ref{Lax operators}) with $\mathcal{K}%
_{B}^{(0)}=\varnothing .$ Recall that under the local gauge transformations (%
\ref{full gauge transformations}) the Lax operators transforms covariantly $%
\widetilde{L}_{\pm }(A)=\Gamma _{L}L_{\pm }(A)\Gamma _{L}^{-1}$. Then, in
order for the compatibility relations $\left[ L_{\pm 1/2},L_{+}(A)\right] =%
\left[ L_{\pm 1/2},L_{-}(A)\right] =0$ to transform covariantly as well, we
require the operators $L_{\pm 1/2}$ to transform in the same way as $L_{\pm
}(A)$ do, i.e $\widetilde{L}_{\pm 1/2}=$ $\Gamma _{L}L_{\pm 1/2}\Gamma
_{L}^{-1}.$ Let%
\'{}%
s see when this can occur. Under $\widetilde{\Theta }=$ $\Gamma _{L}\Theta
\Gamma _{L}^{-1}$ and $\widetilde{\Pi }=\Gamma _{L}\Pi \Gamma _{R}$ we have%
\begin{equation*}
D_{\widetilde{\Theta }}^{(+1/2)}=\Gamma _{L}\left( \Theta \Gamma
_{L}^{-1}D^{(+1/2)}\Gamma _{L}\Theta ^{-1}\right) _{\geq 0}\Gamma _{L}^{-1},%
\text{ \ \ \ \ \ }D_{\widetilde{\Pi }}^{(-1/2)}=\Gamma _{L}\left( \Pi \Gamma
_{R}^{-1}D^{(-1/2)}\Gamma _{R}\Pi ^{-1}\right) _{\leq -1/2}\Gamma _{L}^{-1}.
\end{equation*}

Then, the first obstruction for a covariant behavior appears when $\Gamma
_{L}^{-1}D^{(+1/2)}\Gamma _{L}\neq D^{(+1/2)}$ and $\Gamma
_{R}^{-1}D^{(-1/2)}\Gamma _{R}\neq D^{(-1/2)}.$ Now, if we assume that there
are elements of $\mathcal{K}_{F}^{(\pm 1/2)}$ commuting with the entire
gauge algebras $\mathfrak{h}_{L}$ and $\mathfrak{h}_{R}$ and that $\Gamma
_{L}$ is invariant under the $\delta _{\pm 1/2}$ flows i.e $\delta _{\pm
1/2}\Gamma _{L}\Gamma _{L}^{-1}=0$, then the desired gauge transformation
holds$.$ If this is the case we have from $\left[ L_{\pm 1/2},L_{+}(A)\right]
=\left[ L_{\pm 1/2},L_{-}(A)\right] =0,$ the following "supersymmetry"
transformations\footnote{%
These transformations when $\theta _{\pm }^{(0)}=0,$ are essentially the
same as the ones proposed by hand in \cite{tseytlin} to be the on-shell
supersymmetry transformations of a Lagrangian formulation of (\ref{NA affine
toda equations}) in the particular case of the Polhmeyer reduction of the $%
AdS_{5}\times S^{5}$ superstring sigma model.}%
\begin{eqnarray}
\left( \delta _{+1/2}BB^{-1}\right) ^{\parallel }{} &=&\left[ \psi ^{\left(
-1/2\right) },D^{(+1/2)}\right] \text{ ,}  \label{wrong I} \\
\delta _{+1/2}\psi _{+}^{\left( +1/2\right) } &=&\left[ \left(
(D_{+}^{(R)}B)B^{-1}\right) ^{\parallel },D^{(+1/2)}\right] +\left[ \theta
_{+}^{(0)},\psi _{+}^{\left( +1/2\right) }\right] \text{ ,}  \notag \\
\delta _{+1/2}\psi _{-}^{\left( -1/2\right) } &=&-\left[ \Lambda
_{-}^{(-1)},B^{-1}D^{(+1/2)}B\right] \text{\ ,}  \notag \\
\delta _{+1/2}A_{-}^{(L)} &=&-\left[ \left( B\psi _{-}^{\left( -1/2\right)
}B^{-1}\right) ^{\perp },D^{(+1/2)}\right] +D_{-}^{(L)}\left( \theta
_{+}^{(0)}\right) ,  \notag \\
\delta _{+1/2}A_{+}^{(R)} &=&0,  \notag
\end{eqnarray}%
and%
\begin{eqnarray}
\left( B^{-1}\delta _{-1/2}B\right) ^{\parallel } &=&-\left[ \psi ^{\left(
+1/2\right) },D^{(-1/2)}\right] \text{ ,}  \label{wrong II} \\
\delta _{-1/2}\psi _{+}^{\left( +1/2\right) } &=&\left[ \Lambda
_{+}^{(+1)},BD^{(-1/2)}B^{-1}\right] \text{ \ ,}  \notag \\
\delta _{-1/2}\psi _{-}^{\left( -1/2\right) } &=&-\left[ \left(
B^{-1}D_{-}^{(L)}B\right) ^{\parallel },D^{(-1/2)}\right] +\left[ \theta
_{-}^{(0)},\psi _{-}^{\left( -1/2\right) }\right] ,  \notag \\
\text{\ }\delta _{-1/2}A_{+}^{(R)} &=&-\left[ \left( B^{-1}\psi _{+}^{\left(
+1/2\right) }B\right) ^{\perp },D^{(-1/2)}\right] +D_{+}^{(R)}\left( \theta
_{-}^{(0)}\right) ,  \notag \\
\delta _{+1/2}A_{-}^{(L)} &=&0,  \notag
\end{eqnarray}%
where $\theta _{\pm }^{(0)}=\left[ \theta ^{(\mp 1/2)},D^{(\pm 1/2)}\right]
, $ we have used the equations of motion (\ref{NA affine toda equations})
and the important assumption that $\left[ D^{(+1/2)},A_{-}^{(L)}\right] =%
\left[ D^{(-1/2)},A_{+}^{(R)}\right] =0$.

The assumption that an element $D\in \mathcal{K}_{F}$ in the fermionic
kernel $\mathcal{K}_{F}$ commute with the entire gauge algebra is too
stringent because the odd part $\mathfrak{g}_{1}\in \mathfrak{g}$ of a given
Lie superalgebra $\mathfrak{g}=\mathfrak{g}_{0}\oplus \mathfrak{g}_{1}$
provides the carrier space for some faithful representation $R$ of the even
part $\mathfrak{g}_{0}\in \mathfrak{g}$ i.e $\left[ \mathfrak{g}_{0},%
\mathfrak{g}_{1}\right] =R(\mathfrak{g}_{0})\mathfrak{g}_{1}.$ Then,
although tempting, the transformations (\ref{wrong I}), (\ref{wrong II}) are
incorrect in origin and the method used with success in the extended mKdV
hierarchy does not apply here in the extended homogeneous hierarchy anymore.
Another problem is related to the local character of $\Gamma _{L}$ and $%
\Gamma _{R}.$ After a gauge transformation, the supersymmetry parameters $%
\widetilde{D}^{(+1/2)}=\Gamma _{L}^{-1}D^{(+1/2)}\Gamma _{L}$ and $%
\widetilde{D}^{(-1/2)}=\Gamma _{R}^{-1}D^{(-1/2)}\Gamma _{R}$ are not
constants and in principle there is no consistent supersymmetry. However,
consistent supersymmetry flows can be defined when the gauge group is global
because the action of $\mathfrak{h}_{L}\times \mathfrak{h}_{R}$ preserves
the fermionic kernel $\mathcal{K}_{F}^{(\pm 1/2)}.$ They simply rotate the
generators $F_{i}^{(\pm 1/2)},$ say $\left[ K_{L}^{(0)},F_{i}^{(+1/2)}\right]
=R\left( K_{L}^{(0)}\right) _{ij}F_{j}^{(+1/2)}$ , which is equivalent to a
linear combination of the constant grasmannian parameters $\epsilon _{i}$
and $\overline{\epsilon }_{j}$. Then, we see that the global supersymmetry
we are dealing with is of the extended type, namely, they include several
fermionic symmetry flows transforming under the gauge algebra. This should
not come as a surprise because the kernel algebra $\mathcal{K=K}_{B}\oplus 
\mathcal{K}_{F}$ is a sub-superalgebra of $\widehat{\mathfrak{f}}=\mathcal{%
K\oplus M}$ anyway, in which the symmetries $\delta _{\mathcal{K}}$ are
generated by $\mathcal{K}$ through the flow equations (\ref{flow equations}%
),(\ref{gauge equivalent variations}). Below, we will use the
Drinfeld-Sokolov procedure to construct a set of $\dim \mathcal{K}_{F}^{(\pm
1/2)}$ non-local fermionic conserved charges associated to the flows $\delta
_{\pm 1/2}$ and transforming under the global part of the gauge group,
arriving to a meaningful result, see (\ref{AKNS supercharges}).

In the on-shell gauge $A_{-}^{(L)}=A_{+}^{(R)}=0$ with global gauge group,
the supersymmetry transformations of the extended homogeneous hierarchy are
given by (\ref{wrong I}), (\ref{wrong II}) (with $A_{-}^{(L)}=A_{+}^{(R)}=0$%
) and this fix the forms of the non-local terms $\theta ^{(\pm 1/2)}$ to be%
\begin{equation}
\theta ^{(\pm 1/2)}=\partial _{\pm }^{-1}\left( B^{\mp 1}\psi _{\pm
}^{\left( \pm 1/2\right) }B^{\pm 1}\right) ^{\perp },  \label{non-local teta}
\end{equation}%
implying also the invariance of the constraints $A_{+}^{(L)}=A_{-}^{(R)}=0,$
i.e (\ref{new constraints}) under the $\delta _{\pm 1/2}$ flows$.$ Below, we
will show that these supersymmetries are Hamiltonian flows under the second
Poisson structure of the extended homogeneous hierarchy, see (\ref{poisson
susy flows}).

\subsection{Lagrangian formulation of the semi-symmetric space sine-Gordon
models.}

Now that we have identified the dynamical fields, we want to introduce the
action functional which have the fermionic Toda equations (\ref{NA affine
toda equations}) as Euler-Lagrange equations of motion. Below, we shall see
that this is the action functional behind the Pohlmeyer reduction process.

\begin{notation}
The light-cone notation used for the flat Minkowski space $\Sigma $ is $%
x^{\pm }=\frac{1}{2}\left( x^{0}\pm x^{1}\right) $, $\partial _{\pm
}=\partial _{0}\pm \partial _{1}$, $\eta _{+-}=\eta _{-+}=2,\eta ^{+-}=\eta
^{-+}=\frac{1}{2},\epsilon _{+-}=-\epsilon _{-+}=2,\epsilon ^{-+}=-\epsilon
^{+-}=\frac{1}{2}$ corresponding to the metric $\eta _{00}=1,\eta _{11}=-1$
and antisymmetric symbol $\epsilon _{10}=-\epsilon _{01}=+1$. A mass scale
is introduced by setting $\Lambda _{\pm }^{(\pm 1)}\rightarrow \mu \Lambda
_{\pm }^{(\pm 1)}$ and $\psi ^{\left( \pm 1/2\right) }\rightarrow \mu
^{-1/2}\psi ^{\left( \pm 1/2\right) }.$
\end{notation}

Consider the action functional%
\begin{eqnarray}
S_{Toda}[B,\psi ] &=&S_{WZNW}[B]-\frac{k}{4\pi }\dint_{\Sigma }\left\langle
\psi _{+}^{\left( +1/2\right) }\partial _{-}\psi ^{\left( -1/2\right) }+\psi
_{-}^{\left( -1/2\right) }\partial _{+}\psi ^{\left( +1/2\right)
}\right\rangle +  \label{Abelian toda action} \\
&&+\frac{k}{2\pi }\dint_{\Sigma }\left\langle \Lambda _{+}^{(+1)}B\Lambda
_{-}^{(-1)}B^{-1}+\psi _{+}^{\left( +1/2\right) }B\psi _{-}^{\left(
-1/2\right) }B^{-1}\right\rangle ,  \notag \\
S_{WZNW}[B] &=&-\frac{k}{4\pi }\left( \dint_{\Sigma }\left\langle
B^{-1}\partial _{+}BB^{-1}\partial _{-}B\right\rangle -\frac{1}{3}%
\dint_{M}\left\langle \left( B^{-1}dB\right) ^{3}\right\rangle \right) ,
\label{WZNW action}
\end{eqnarray}%
which is the action deduced in \cite{reductions} to describe the
supersymmetric sector of the extended super mKdV hierarchy.

In the present situation, i.e in the extended homogeneous hierarchy, this
action is invariant under the global gauge $H_{L}\times H_{R}$ group
transformations (\ref{component gauge trans}).\ Moreover, it is also
invariant under the Kac-Moody-type transformations%
\begin{eqnarray}
\ \widetilde{B} &=&\Gamma _{L}(x^{+})B\Gamma _{R}(x^{-}),
\label{kac moody symmetry} \\
\widetilde{\psi }^{\left( -1/2\right) } &=&\Gamma _{L}(x^{+})\psi ^{\left(
-1/2\right) }\Gamma _{L}^{-1}(x^{+}),  \notag \\
\widetilde{\psi }^{\left( +1/2\right) } &=&\Gamma _{R}^{-1}(x^{-})\psi
^{\left( +1/2\right) }\Gamma _{R}(x^{-})  \notag
\end{eqnarray}%
as can be seen with the help of the Polyakov-Wiegmann identity\footnote{%
This is given by%
\begin{eqnarray*}
S_{WZNW}\left[ ABC\right] &=&S_{WZNW}\left[ A\right] +S_{WZNW}\left[ B\right]
+S_{WZNW}\left[ C\right] - \\
&&-\frac{k}{2\pi }\dint \left\langle \left( A^{-1}\partial _{-}A\right)
\left( \partial _{+}BB^{-1}\right) +\left( B^{-1}\partial _{-}B\right)
\left( \partial _{+}CC^{-1}\right) +\left( A^{-1}\partial _{-}A\right)
B\left( \partial _{+}CC^{-1}\right) B^{-1}\right\rangle .
\end{eqnarray*}%
}.

Taking $\Gamma _{L}(x^{+})=\exp \omega _{L}(x^{+})$ and $\Gamma
_{R}(x^{-})=\exp \omega _{R}(x^{+})$ we find the transformations 
\begin{equation*}
\delta \psi ^{\left( -1/2\right) }=\left[ \omega _{L}(x^{+}),\psi ^{\left(
-1/2\right) }\right] ,\text{ \ \ }\delta BB^{-1}=\omega _{L}(x^{+})+B\omega
_{R}(x^{-})B^{-1},\text{ \ \ }\delta \psi ^{\left( +1/2\right) }=-\left[
\omega _{R}(x^{-}),\psi ^{\left( +1/2\right) }\right] ,
\end{equation*}%
allowing to compute the variation of (\ref{Abelian toda action}). It is
given by%
\begin{equation*}
\frac{2\pi }{k}\delta S_{Toda}[B,\psi ]=\dint \left\langle \omega
_{L}(x^{+})\partial _{-}\left( \partial _{+}BB^{-1}+Q_{+}^{(0)}\right)
+\omega _{R}(x^{-})\partial _{+}\left( B^{-1}\partial
_{-}B+Q_{-}^{(0)}\right) \right\rangle ,
\end{equation*}%
where we have used the jacobi identity, the ad-invariance of the inner
product and $\left[ \Lambda _{\pm }^{(\pm 1)},\mathcal{K}_{B}^{(0)}\right]
=0.$ This implies the existence of chiral currents $\partial
_{-}J_{+}(x^{+})=\partial _{+}J_{-}(x^{-})=0,$ where (cf. (\ref{new
constraints})) 
\begin{equation*}
J_{+}(x^{+})=\mathcal{P}_{\mathfrak{h}_{L}}\left( \partial
_{+}BB^{-1}+Q_{+}^{(0)}\right) ,\text{ \ \ \ \ \ }J_{-}(x^{-})=\mathcal{P}_{%
\mathfrak{h}_{R}}\left( B^{-1}\partial _{-}B+Q_{-}^{(0)}\right) .
\end{equation*}

The action is also invariant under fermionic shifts 
\begin{equation}
\delta \psi ^{\left( \pm 1/2\right) }=\theta ^{(\pm 1/2)}(x^{\mp }),\text{ }
\label{super kac moody symmetry}
\end{equation}%
where $\theta ^{(\pm 1/2)}\in \mathcal{K}_{F}^{(\pm 1/2)}$ and $\delta
BB^{-1}=0$ leading to the following variation 
\begin{equation*}
\frac{4\pi }{k}\delta S_{Toda}[B,\psi ]=\dint \left\langle \theta
^{(+1/2)}(x^{-})\partial _{+}\psi _{-}^{\left( -1/2\right) }+\theta
^{(-1/2)}(x^{+})\partial _{-}\psi _{+}^{\left( +1/2\right) }\right\rangle .
\end{equation*}

Note the strong resemblance with the super Kac-Moody currents obtained from
a supersymmetric WZNW model \cite{di vecchia}. However, their origin are
quite different as fermions in a supersymmetric WZNW model parametrize the
same Lie algebra as the bosons while in the action (\ref{Abelian toda action}%
) they parametrize the odd subspace of a Lie superalgebra i.e bosons and
fermions are in different subspaces. This difference is very important in
our approach to supersymmetry flows and this will be discussed below when we
study the supersymmetry properties of the Pohlmeyer reduction of superstring
sigma models. Clearly, the invariance under (\ref{kac moody symmetry}), (\ref%
{super kac moody symmetry}) and its current algebra deserves a deeper study.

At this point we can identify the fermionic extension of the non-Abelian
Toda models as the Hamiltonian reduction of the phase space of (\ref{Abelian
toda action}) defined by the vanishing of the Kac-Moody-type currents $%
J_{\pm }(x^{\pm })=0$, which are exactly the constraints given by (\ref{new
constraints}). Then, to obtain a Lagrangian formulation for the equations (%
\ref{gauged fix eq of motion}) we need to impose the constraints $J_{\pm
}(x^{\pm })=0$ off-shell and this can be done by considering a local gauge
group $H_{L}\times H_{R}$ and introducing gauge fields.

The Lagrangian formulation of the system (\ref{gauged fix eq of motion}), (%
\ref{new constraints}) exists, as explained in \cite{Miramontes} (see also 
\cite{bakas} for the original formulation), only when the gauge groups $%
H_{L},$ $H_{R}$ are isomorphic to some Lie group $H=\exp \mathfrak{h}$ with
Lie algebra $\mathfrak{h}$ such that $H_{L}=\epsilon _{L}(H),$ $%
H_{R}=\epsilon _{R}(H),$ where $\epsilon _{L},$ $\epsilon _{R}:H\rightarrow
G $ are two group homomorphisms that descend to embeddings of the
corresponding Lie algebras and that satisfy the anomaly free condition 
\begin{equation*}
\left\langle \epsilon _{L}(a)\epsilon _{L}(b)\right\rangle -\left\langle
\epsilon _{R}(a)\epsilon _{R}(b)\right\rangle =0,\text{ \ }\forall \text{ }%
a,b\in \mathfrak{h.}
\end{equation*}

In this case the gauge group is reduced\footnote{%
This reduction can also be seen as a partially gauge fixing of the $%
H_{L}\times H_{R}$ gauge symmetry \cite{tseytlin}.} from $H_{L}\times H_{R}$
to a diagonal subgroup of $H\subset H_{L}\times H_{R}$ and the Lagrangian is
simply given by an appropriate covariantization of (\ref{Abelian toda action}%
) defining now the semi-symmetric space sine-Gordon model (SSSSG) action%
\begin{eqnarray}
S_{SSSSG}[B,\psi ] &=&S_{gWZNW}[B,A]_{G/H}-\frac{k}{4\pi }\dint_{\Sigma
}\left\langle \psi _{+}^{\left( +1/2\right) }D_{-}^{(L)}\psi ^{\left(
-1/2\right) }+\psi _{-}^{\left( -1/2\right) }D_{+}^{(R)}\psi ^{\left(
+1/2\right) }\right\rangle +  \notag \\
&&+\frac{k}{2\pi }\dint_{\Sigma }\left\langle \Lambda _{+}^{(+1)}B\Lambda
_{-}^{(-1)}B^{-1}+\psi _{+}^{\left( +1/2\right) }B\psi _{-}^{\left(
-1/2\right) }B^{-1}\right\rangle ,  \label{NA Toda action}
\end{eqnarray}%
where $D_{-}^{(L)}=\partial _{-}-\left[ \epsilon _{L}\left( A_{-}\right) ,%
\text{ }\right] $, $D_{+}^{(R)}=\partial _{+}-\left[ \epsilon _{R}\left(
A_{+}\right) ,\text{ }\right] $ are the covariant derivatives for the action
of the gauge group $H$, $A_{\pm }\in \mathfrak{h}$ is the gauge field and 
\begin{equation*}
S_{gWZNW}[B,A]_{G/H}=S_{WZNW}[B]_{G}-\frac{k}{2\pi }\dint_{\Sigma
}\left\langle 
\begin{array}{c}
-\epsilon _{L}\left( A_{-}\right) \partial _{+}BB^{-1}+\epsilon _{R}\left(
A_{+}\right) B^{-1}\partial _{-}B- \\ 
-\epsilon _{L}\left( A_{-}\right) B\epsilon _{R}\left( A_{+}\right)
B^{-1}+\epsilon _{L}(A_{+})\epsilon _{L}(A_{-})%
\end{array}%
\right\rangle
\end{equation*}%
is the standard gauged WZNW action. The action (\ref{NA Toda action}) is
invariant under the following gauge transformations%
\begin{eqnarray}
\widetilde{B} &=&\epsilon _{L}(\Gamma )B\epsilon _{R}(\Gamma ^{-1}),
\label{action gauge transf.} \\
\widetilde{\psi }^{(-1/2)} &=&\epsilon _{L}(\Gamma )\psi ^{(-1/2)}\epsilon
_{L}(\Gamma ^{-1}),  \notag \\
\widetilde{\psi }^{(+1/2)} &=&\epsilon _{R}(\Gamma )\psi ^{(+1/2)}\epsilon
_{R}(\Gamma ^{-1}),  \notag \\
\widetilde{A}_{\pm } &=&\Gamma A_{\pm }\Gamma ^{-1}+\partial _{\pm }\Gamma
\Gamma ^{-1},  \notag
\end{eqnarray}%
where $\Gamma _{L}=\epsilon _{L}\left( \Gamma \right) $ and $\Gamma
_{R}=\epsilon _{R}\left( \Gamma ^{-1}\right) $ in (\ref{kac moody symmetry})
are local elements and we have used (\ref{connection transformation}) in the
last line.

An arbitrary variation of (\ref{NA Toda action}) is given by%
\begin{eqnarray}
\frac{2\pi }{k}\delta S_{SSSSG}[B,\psi ] &=&\dint_{\Sigma }\left\langle
\left( \delta BB^{-1}-B\delta \psi ^{\left( +1/2\right) }B^{-1}-\delta \psi
^{\left( -1/2\right) }\right) F_{+-}\right\rangle +  \notag \\
&&+\dint_{\Sigma }\left\langle \delta A_{-}^{(L)}\left\{
-A_{+}^{(L)}+\partial _{+}BB^{-1}+BA_{+}^{(R)}B^{-1}+Q_{+}^{(0)}\right\}
\right\rangle +  \notag \\
&&+\dint_{\Sigma }\left\langle \delta A_{+}^{(R)}\left\{
-A_{-}^{(R)}-B^{-1}\partial _{-}B+B^{-1}A_{-}^{(L)}B-Q_{-}^{(0)}\right\}
\right\rangle ,  \label{arbitrary variation}
\end{eqnarray}%
where the curvature components $F_{+-}^{(0)}$ and $F_{+-}^{(\pm 1/2)}$ are
given in (\ref{NA affine toda equations}) but now with $A_{-}^{(L)}=\epsilon
_{L}\left( A_{-}\right) $ and $A_{+}^{(R)}=\epsilon _{R}\left( A_{+}\right)
. $ Then, the Lax pair associated to the action (\ref{NA Toda action}) is
simply the reduction of (\ref{Lax operators}), i.e%
\begin{eqnarray}
L_{+}(A) &=&\partial _{+}-\partial _{+}BB^{-1}-B\epsilon _{R}\left(
A_{+}\right) B^{-1}+\psi _{+}^{\left( +1/2\right) }+\Lambda _{+}^{(+1)},
\label{Lax pair for action} \\
L_{-}(A) &=&\partial _{-}-\epsilon _{L}\left( A_{-}\right) -B\left( \psi
_{-}^{\left( -1/2\right) }+\Lambda _{-}^{(-1)}\right) B^{-1},  \notag
\end{eqnarray}%
supplemented by the constraints given by the $A_{\pm }$ equations of motion.
It is not difficult to see that the last two equations of motion in (\ref%
{arbitrary variation}) and the definitions of the missing components (\ref%
{missing componentes}) are the same in this case, and that the two equations
(\ref{flat connections}) reduce to the flatness of the only gauge field
involved $A_{\pm }$ 
\begin{equation*}
\left[ \partial _{+}-A_{+},\partial _{-}-A_{-}\right] =0,
\end{equation*}%
which enables the on-shell gauge $A_{+}=A_{-}=0,$ as in the last line of (%
\ref{gauged fix eq of motion}).

It is important to mention that the $A_{\pm }$ equations of motion can also
be interpreted as a partial gauge fixing of the $H_{L}\times H_{R}$ gauge
symmetry in the definitions (\ref{missing componentes}), see \cite{tseytlin}
for the details.

\subsection{The Drinfeld-Sokolov procedure: the $\dim \mathcal{K}_{F}^{(\pm
1/2)}$ 2D spin $\pm 1/2$ supercharges.}

In section 2.3 we have found some difficulties in finding the supersymmetry
flow variations for the fields due to the local character of the gauge group 
$H_{L}\times H_{R}$. However, a consistent set of transformations were
singled out and we shall return to them later in section 2.7. Here we focus
on the construction of the fermionic conserved charges associated to the
symmetry flows $t_{\pm 1/2}$ in the general situation when the gauge group $%
H_{L}\times H_{R}$ is local.

One of the advantages of the dressing approach adopted above, is that we can
apply the Drinfeld-Sokolov (DS) procedure and viceversa. It provides a
systematic method for constructing the local and non-local conserved charges
associated to the symmetry flows (\ref{flow equations}),(\ref{gauge
equivalent variations}). Inspired by a similar computation done in \cite%
{mira-hollowood}, here we will extract the vector, spinor and tensor
conserved currents associated to the sub-sector $\left( t_{0},t_{\pm
1/2},t_{\pm 1}\right) $.

From (\ref{flow equations}), (\ref{gauge equivalent variations}) but now in
the presence of gauge fields, we have the following dressing relations%
\begin{eqnarray}
L_{+}(A) &=&\Theta \left( \partial _{+}+\Lambda _{+}^{(+1)}\right) \Theta
^{-1}\text{, \ \ \ \ \ \ }L_{-}(A)\text{ }=\text{ }\Theta \left( \partial
_{-}\right) \Theta ^{-1},  \label{dressing} \\
L_{+}^{\prime }(A) &=&\Pi ^{\prime }(\partial _{+})\Pi ^{\prime -1},\text{ \
\ \ \ \ \ \ \ \ \ \ \ \ \ \ \ \ }L_{-}^{\prime }(A)\text{ }=\text{ }\Pi
^{\prime }(\partial _{-}-\Lambda _{-}^{(-1)})\Pi ^{\prime -1},  \notag
\end{eqnarray}%
associated to the Lax operators (\ref{Lax operators}), (\ref{prime Lax
operators}).

The dressing matrix $\Theta $ factorizes as $\Theta =U_{-}S_{-},$ where $%
U_{-}\in \exp u_{-},$ $u_{-}=\oplus _{r\leq -1/2}$ $\widehat{\mathfrak{f}}%
_{r}^{\parallel }\subset \mathcal{M}$ is a local functional of the fields
and $S_{-}\in \exp s_{-},$ $s_{-}=\oplus _{r\leq -1/2}$ $\widehat{\mathfrak{f%
}}_{r}^{\perp }\subset \mathcal{K}$ is a non-local functional of the fields,
splitting the dressing of the vacuum Lax operators, i.e $L_{\pm }=\Theta
L_{\pm }^{V}\Theta ^{-1},$ as a two step process \cite{symmetry flows}. An $%
U_{-}$ and an $S_{-}$ rotation given respectively by 
\begin{eqnarray}
U_{-}^{-1}L_{+}(A)U_{-} &=&\partial _{+}+\Lambda _{+}^{(+1)}+K_{+}^{(-)},%
\text{ \ \ \ \ \ \ \ }U_{-}^{-1}L_{-}(A)U_{-}\text{ }=\text{ }\partial
_{-}+K_{-}^{(-)},  \label{U} \\
\partial _{+}+\Lambda _{+}^{(+1)}+K_{+}^{(-)} &=&S_{-}\left( \partial
_{+}+\Lambda _{+}^{(+1)}\right) S_{-}^{-1},\text{ \ \ \ \ \ \ \ \ }\partial
_{-}+K_{-}^{(-)}\text{ }=\text{ }S_{-}\left( \partial _{-}\right) S_{-}^{-1},
\label{S}
\end{eqnarray}%
where $K_{\pm }^{(-)}$ $=\Sigma _{r\leq 0}K_{\pm }^{(r)},$ $K_{\pm
}^{(r)}\in \mathcal{K}$ are the conserved current components we want to find.

Similarly, for the second line in (\ref{dressing}) we have $\Pi ^{\prime
}=U_{+}S_{+},$ where $U_{+}\in \exp u_{+},$ $u_{+}=\oplus _{r\geq +1/2}$ $%
\widehat{\mathfrak{f}}_{r}^{\parallel }\subset \mathcal{M}$ is local and $%
S_{+}\in \exp s_{+},$ $s_{+}=\oplus _{r\geq +1/2}$ $\widehat{\mathfrak{f}}%
_{r}^{\perp }\subset \mathcal{K}$ is non-local in the fields. Then, we have%
\begin{eqnarray}
U_{+}^{-1}L_{+}^{\prime }(A)U_{+} &=&\partial _{+}+K_{+}^{(+)},\text{ \ \ \
\ \ \ \ \ \ \ \ \ \ }U_{+}^{-1}L_{-}^{\prime }(A)U_{+}\text{ }=\text{ }%
\partial _{-}-\Lambda _{-}^{(-1)}+K_{-}^{(+)},  \label{U gauge} \\
\partial _{+}+K_{+}^{(+)} &=&S_{+}\left( \partial _{+}\right) S_{+}^{-1},%
\text{ \ \ \ \ \ }\partial _{-}-\Lambda _{-}^{(-1)}+K_{-}^{(+)}\text{ }=%
\text{ }S_{-}\left( \partial _{-}-\Lambda _{-}^{(-1)}\right) S_{-}^{-1},
\label{S gauge}
\end{eqnarray}%
where $K_{\pm }^{(+)}$ =$\Sigma _{r\geq 0}K_{\pm }^{(r)},$ $K_{\pm
}^{(r)}\in \mathcal{K}$.

The conservation laws are extracted by projecting the zero curvature
conditions of (\ref{U}) and (\ref{U gauge}) along the kernel subspace $%
\mathcal{K}$ grade by grade. This is roughly speaking, the Drinfeld-Sokolov
procedure. Of course, it provides local conservation laws for all the
isospectral flows $t_{\pm n}.$

It is important to mention that (\ref{U}) and (\ref{U gauge}) are in
canonical form, in the sense that $\Theta =U_{-}S_{-}$ and $\Pi ^{\prime
}=U_{+}S_{+}$ are splitted as local and non-local pieces allowing to obtain $%
K_{\pm }^{(-)},K_{\pm }^{(+)}$ as functionals of the components of $U_{\pm }$
only. However, these relations are subject to an ambiguity induced by the
gauge transformations $U_{\pm }\rightarrow U_{\pm }\widetilde{S}_{\pm }$
with $\widetilde{S}_{\pm }$ parametrized in the same way as $S_{\pm }.$ This
action does not change the LHS of (\ref{U}) and (\ref{U gauge}) but changes
the RHS side. For example, for (\ref{U}) we have 
\begin{equation}
U_{-}^{-1}L_{+}(A)U_{-}=\widetilde{S}_{-}\left( \partial _{+}+\Lambda
_{+}^{(+1)}+\widetilde{K}_{+}^{(-)}\right) \widetilde{S}_{-}^{-1},\text{ \ \
\ \ \ }U_{-}^{-1}L_{-}(A)U_{-}\text{ }=\widetilde{S}_{-}\left( \text{ }%
\partial _{-}+\widetilde{K}_{-}^{(-)}\right) \widetilde{S}_{-}^{-1}
\label{U ambiguous}
\end{equation}%
and a similar expression for (\ref{U gauge}) leading to a non-local gauge
transformation between the current components. This observation will be
useful below because allow for an explanation of an apparent discrepancy
between the non-local supersymmetry flow variations as presented above in (%
\ref{wrong I}), (\ref{wrong II}) in the gauge $A^{(L/R)}=0$ and the local
supersymmetry flow variations induced by the canonical form of the
supercharges extracted from (\ref{U}), (\ref{U gauge}) in the gauge $%
A^{(L/R)}=0$. See (\ref{mKdV supercharges}) and proposition 4 below.

The components of grades $0,-1/2$ and $-1$ of the first equation of (\ref{U}%
) along the kernel $\mathcal{K}$, are given by%
\begin{eqnarray*}
0 &:&K_{+}^{(0)}=-A_{+}^{(L)}, \\
-1/2 &:&K_{+}^{(-1/2)}=\left[ \psi ^{\left( -1/2\right) },\widehat{A}%
_{+}^{(L)}\right] ^{\perp }, \\
-1 &:&\left\langle \Lambda _{+}^{(+1)},K_{+}^{(-1)}\right\rangle =+\frac{1}{2%
}\left\langle \left( \widehat{A}_{+}^{(L)}\right) ^{2}-\left(
A_{+}^{(L)}\right) ^{2}\right\rangle +\frac{1}{2}\left\langle \psi
_{+}^{\left( +1/2\right) }D_{+}^{(L)}\psi ^{\left( -1/2\right)
}\right\rangle +\left\langle \left( Q_{+}^{(0)}\right) ^{2}\right\rangle ,
\end{eqnarray*}%
where we have taken $u^{(-1/2)}=\psi ^{\left( -1/2\right) },$ $u^{(-1)}=\psi
^{\left( -1\right) }$ and used the definitions (\ref{missing componentes})
and (\ref{Baker-Hausdorf fields})$.$ In the last relation we projected along 
$\Lambda _{+}^{(+1)}$ in order to simplify expressions. The components of
grades $0,-1/2$ and $-1$ of the second equation of (\ref{U}) along the
kernel $\mathcal{K}$, are given by%
\begin{eqnarray*}
0 &:&K_{-}^{(0)}=-A_{-}^{(L)}, \\
-1/2 &:&K_{-}^{(-1/2)}=-\left( B\psi _{-}^{\left( -1/2\right) }B^{-1}\right)
^{\perp }, \\
-1 &:&\left\langle \Lambda _{+}^{(+1)},K_{-}^{(-1)}\right\rangle
=-\left\langle \Lambda _{+}^{(+1)}B\Lambda _{-}^{(-1)}B^{-1}+\frac{1}{2}\psi
_{+}^{\left( +1/2\right) }B\psi _{-}^{\left( -1/2\right)
}B^{-1}\right\rangle ,
\end{eqnarray*}%
where in the last relation we have used the equations of motion for $\psi
^{\left( -1/2\right) }.$

In a complete analogous way, we have for the first equation of (\ref{U gauge}%
) along the kernel subspaces of grades $0,+1/2$ and $+1$, the following
components%
\begin{eqnarray*}
0 &:&K_{+}^{\prime (0)}=-A_{+}^{(R)}, \\
+1/2 &:&K_{+}^{(+1/2)}=(B^{-1}\psi _{+}^{\left( +1/2\right) }B)^{\perp }, \\
+1 &:&\left\langle \Lambda _{-}^{(-1)},K_{+}^{(+1)}\right\rangle
=-\left\langle \Lambda _{+}^{(+1)}K_{-}^{(-1)}\right\rangle ,
\end{eqnarray*}%
where $u^{(+1/2)}=-\psi ^{\left( +1/2\right) }$ and $u^{(+1)}=-\psi ^{\left(
+1\right) }$. The second equation of (\ref{U gauge}) provides%
\begin{eqnarray*}
0 &:&K_{-}^{\prime (0)}=-A_{-}^{(R)}, \\
+1/2 &:&K_{-}^{(+1/2)}=-\left[ \psi ^{\left( +1/2\right) },\widehat{A}%
_{-}^{(R)}\right] ^{\perp }, \\
+1 &:&\left\langle \Lambda _{-}^{(-1)},K_{-}^{(+1)}\right\rangle =-\frac{1}{2%
}\left\langle \left( \widehat{A}_{-}^{(R)}\right) ^{2}-\left(
A_{-}^{(R)}\right) ^{2}\right\rangle -\frac{1}{2}\left\langle \psi
_{-}^{\left( -1/2\right) }D_{-}^{(R)}\psi ^{\left( +1/2\right)
}\right\rangle -\left\langle \left( Q_{-}^{(0)}\right) ^{2}\right\rangle .
\end{eqnarray*}

The zero curvatures of the LHS of (\ref{U}) and (\ref{U gauge}) imply%
\footnote{%
The advantage of the DS expressions (\ref{Drinfeld Sokolov}) is that the
process of getting conserved charges can be continued infinitely without
invoking an action functional and without reducing the gauge symmetry to a
diagonal subgroup of $H_{L}\times H_{R}$ as required by a Lagrangian
formulation.}%
\begin{eqnarray}
\partial _{+}K_{-}^{(-)}-\partial _{-}K_{+}^{(-)}+\left[
K_{+}^{(-)},K_{-}^{(-)}\right] &=&0,  \label{Drinfeld Sokolov} \\
\partial _{+}K_{-}^{(+)}-\partial _{-}K_{+}^{(+)}+\left[
K_{+}^{(+)},K_{-}^{(+)}\right] &=&0.  \notag
\end{eqnarray}%
Decomposing the first equation of (\ref{Drinfeld Sokolov}) in terms of $%
0,-1/2$ and $-1$ grades we get 
\begin{eqnarray}
0 &:&\left[ \partial _{+}-A_{+}^{(L)},\partial _{-}-A_{-}^{(L)}\right] =0,
\label{negative conservations} \\
-1/2 &:&D_{+}^{(L)}K_{-}^{(-1/2)}-D_{-}^{(L)}K_{+}^{(-1/2)}=0,  \notag \\
-1 &:&\partial _{-}\left( \left\langle \Lambda
_{+}^{(+1)},K_{+}^{(-1)}\right\rangle \right) +\partial _{+}\left(
-\left\langle \Lambda _{+}^{(+1)},K_{-}^{(-1)}\right\rangle \right) =0. 
\notag
\end{eqnarray}%
Similarly, the second equation of (\ref{Drinfeld Sokolov}) in terms of $%
0,+1/2$ and $+1$ grades gives%
\begin{eqnarray}
0 &:&\left[ \partial _{+}-A_{+}^{(R)},\partial _{-}-A_{-}^{(R)}\right] =0,
\label{positive conservations} \\
+1/2 &:&D_{+}^{(R)}K_{-}^{(+1/2)}-D_{-}^{(R)}K_{+}^{(+1/2)}=0,  \notag \\
+1 &:&\partial _{-}\left( \left\langle \Lambda
_{-}^{(-1)},K_{+}^{(+1)}\right\rangle \right) +\partial _{+}\left(
-\left\langle \Lambda _{-}^{(-1)},K_{-}^{(+1)}\right\rangle \right) =0. 
\notag
\end{eqnarray}%
The first two equations of (\ref{negative conservations}) and (\ref{positive
conservations}) are gauge covariant and the last ones are local and gauge
invariant, as expected.

Let us study first the third equations. Defining the stress-tensor
components of $T_{\mu \nu }$ as 
\begin{equation*}
T_{++}=\left\langle \Lambda _{+}^{(+1)},K_{+}^{(-1)}\right\rangle ,\text{ }%
T_{-+}=-\left\langle \Lambda _{+}^{(+1)},K_{-}^{(-1)}\right\rangle ,\text{ }%
T_{+-}=\left\langle \Lambda _{-}^{(-1)},K_{+}^{(+1)}\right\rangle ,\text{ }%
T_{--}=-\left\langle \Lambda _{-}^{(-1)},K_{-}^{(+1)}\right\rangle .
\end{equation*}%
The last equations of (\ref{negative conservations}), (\ref{positive
conservations}) become $\partial _{-}T_{++}+\partial _{+}T_{-+}=\partial
_{-}T_{+-}+\partial _{+}T_{--}=0$ the conservation laws for the components%
\begin{eqnarray}
T_{++} &=&+\frac{1}{2}\left\langle \left( \widehat{A}_{+}^{(L)}\right)
^{2}-\left( A_{+}^{(L)}\right) ^{2}\right\rangle +\frac{1}{2}\left\langle
\psi _{+}^{\left( +1/2\right) }D_{+}^{(L)}\psi ^{\left( -1/2\right)
}\right\rangle +\left\langle \left( Q_{+}^{(0)}\right) ^{2}\right\rangle , 
\notag \\
T_{--} &=&+\frac{1}{2}\left\langle \left( \widehat{A}_{-}^{(R)}\right)
^{2}-\left( A_{-}^{(R)}\right) ^{2}\right\rangle +\frac{1}{2}\left\langle
\psi _{-}^{\left( -1/2\right) }D_{-}^{(R)}\psi ^{\left( +1/2\right)
}\right\rangle +\left\langle \left( Q_{-}^{(0)}\right) ^{2}\right\rangle , 
\notag \\
T_{+-} &=&T_{-+}=+\left\langle \Lambda _{+}^{(+1)}B\Lambda _{-}^{(-1)}B^{-1}+%
\frac{1}{2}\psi _{+}^{\left( +1/2\right) }B\psi _{-}^{\left( -1/2\right)
}B^{-1}\right\rangle .  \label{Tensor}
\end{eqnarray}%
The conserves charges\footnote{%
We have used the change of basis 
\begin{equation*}
4T_{00}=T_{++}+T_{--}+2T_{+-},\text{ }4T_{01}=T_{++}-T_{--},\text{ }%
T_{10}=T_{01},\text{ }T_{11}=T_{++}+T_{--}-2T_{+-}
\end{equation*}%
and set $T_{00}=H$ and $T_{01}=P.$} for these grade $\pm 1$ equations are%
\begin{equation}
\left( H+P\right) =\frac{1}{2}\dint_{-\infty }^{+\infty }dx^{1}\left(
T_{++}+T_{-+}\right) \text{, \ \ \ \ \ }\left( H-P\right) =\frac{1}{2}%
\dint_{-\infty }^{+\infty }dx^{1}\left( T_{--}+T_{+-}\right) .
\label{H and P}
\end{equation}

Now, we analyze the fermionic equations which are the ones we are mainly
interested. They are covariant but not gauge invariant and at first sight
provide no conservations laws. However, the connections $A_{\mu }^{(L/R)}$
are flat and can be written in the pure gauge form%
\begin{equation*}
A_{\pm }^{(L)}=\partial _{\pm }g_{L}g_{L}^{-1}\text{, \ \ \ \ \ }A_{\pm
}^{(R)}=\partial _{\pm }g_{R}g_{R}^{-1}.
\end{equation*}%
The gauge transformations (\ref{connection transformation}) becomes now%
\begin{equation*}
\widetilde{g}_{L}=\Gamma _{L}g_{L}\Gamma _{cL}^{-1},\text{ \ \ \ \ \ }%
\widetilde{g}_{R}=\Gamma _{R}^{-1}g_{R}\Gamma _{cR},
\end{equation*}%
where $\Gamma _{cL/R}$ are constant elements of the global part of the gauge
group $H_{L}\times H_{R}.$

The flatness of the gauge fields allow to write 
\begin{equation*}
D_{\pm }^{(L)}K_{\mp }^{(-1/2)}=g_{L}\partial _{\pm }\left( g_{L}^{-1}K_{\mp
}^{(-1/2)}g_{L}\right) g_{L}^{-1}\text{, \ \ \ \ \ }D_{\pm }^{(R)}K_{\mp
}^{(+1/2)}=g_{R}\partial _{\pm }\left( g_{R}^{-1}K_{\mp
}^{(+1/2)}g_{R}\right) g_{R}^{-1}.
\end{equation*}%
Then, we can transform the $\pm 1/2$ grade equations into the following
non-local fermionic conservation laws 
\begin{eqnarray*}
\partial _{-}\left( g_{L}^{-1}K_{+}^{(-1/2)}g_{L}\right) +\partial
_{+}\left( -g_{L}^{-1}K_{-}^{(-1/2)}g_{L}\right) &=&0, \\
\partial _{+}\left( g_{R}^{-1}K_{-}^{(+1/2)}g_{R}\right) +\partial
_{-}\left( -g_{R}^{-1}K_{+}^{(+1/2)}g_{R}\right) &=&0,
\end{eqnarray*}%
because of the presence of the Wilson lines%
\begin{equation}
g_{L/R}(p)=P\exp \left( \dint\limits_{p_{0}}^{p}dx^{\mu }A_{\mu
}^{(L/R)}\right) ,  \label{Wilson lines}
\end{equation}%
where $p=(t,x)\in \Sigma $ is an arbitrary point and $p_{0}\in \Sigma $ is a
fixed reference point\footnote{%
Under gauge transformations we have $\Gamma _{cL/R}=$ $\Gamma _{L/R}(p_{0}).$%
}.

These equations are gauge invariant under the local part of the gauge group
because under (\ref{component gauge trans})$,$ we have%
\begin{equation*}
\widetilde{K}_{\pm }^{(-1/2)}=\Gamma _{L}K_{\pm }^{(-1/2)}\Gamma _{L}^{-1},%
\text{ \ \ \ \ \ }\widetilde{K}_{\pm }^{(+1/2)}=\Gamma _{R}^{-1}K_{\pm
}^{(+1/2)}\Gamma _{R}
\end{equation*}%
but transforms under the global part of it by conjugations with $\Gamma
_{L/R}^{c}.$ However, as we discussed before (section 2.3) the global part
of the gauge group preserve the fermionic kernel and it is in this sense
that we have well defined conserved charges. They are given by the following 
$\dim \mathcal{K}_{F}^{(\pm 1/2)}$ non-local conserved supercharges $%
Q(\delta _{\pm 1/2})$ associated to the $\delta _{\pm 1/2}$ symmetry flows%
\begin{eqnarray}
Q\left( \delta _{+1/2}\right) &=&\dint\nolimits_{-\infty }^{+\infty
}dx^{1}G\left( \delta _{+1/2}\right) =Q_{i}^{+}F_{i}^{(-1/2)},
\label{AKNS supercharges} \\
G\left( \delta _{+1/2}\right) &=&g_{L}^{-1}\left( \left[ \psi ^{(-1/2)},%
\widehat{A}_{+}^{(L)}\right] +B\psi _{-}^{(-1/2)}B^{-1}\right) ^{\perp
}g_{L},  \notag \\
Q\left( \delta _{-1/2}\right) &=&\dint\nolimits_{-\infty }^{+\infty
}dx^{1}G\left( \delta _{-1/2}\right) =Q_{i}^{-}F_{i}^{(+1/2)},  \notag \\
G\left( \delta _{-1/2}\right) &=&g_{R}^{-1}\left( -\left[ \psi ^{(+1/2)},%
\widehat{A}_{-}^{(R)}\right] -B^{-1}\psi _{+}^{(+1/2)}B\right) ^{\perp
}g_{R},  \notag
\end{eqnarray}%
where $i=1,...,\dim \mathcal{K}_{F}^{(\pm 1/2)}$. The action of $\Gamma
_{cL/R}$ on the charges is given by 
\begin{equation*}
\widetilde{Q}(\delta _{+1/2})=\Gamma _{cL}^{-1}Q(\delta _{+1/2})\Gamma
_{cL}\in \mathcal{K}_{F}^{(-1/2)}\text{, \ \ \ \ \ }\widetilde{Q}(\delta
_{-1/2})=\Gamma _{cR}Q(\delta _{-1/2})\Gamma _{cR}^{-1}\in \mathcal{K}%
_{F}^{(+1/2)}
\end{equation*}%
and, as mentioned above, we expect to obtain an extended global symmetry
superalgebra $\mathcal{K\subset \widehat{\mathfrak{f}}\rightarrow \delta }_{%
\mathcal{K}}$ $\mathfrak{.}$

From the analysis around (\ref{fermion lorentz transformations}) we can
compute the 2D spin for all the conserved charges extracted from (\ref%
{Drinfeld Sokolov}). In particular, from (\ref{negative conservations}), (%
\ref{positive conservations}) we confirm that $G(\delta _{\pm 1/2})$ are
indeed 2D spinorial currents because the power of the spectral parameter is
half-integer$,$ i.e $z^{\pm 1/2}.$

The most interesting situation to be considered and which is related to the
Pohlmeyer reductions of superstring sigma models, is when the gauge group $%
H_{L}\times H_{R}$ is reduced to $H$ and the only gauge field $A_{\pm }$
involved obey the equations of motion provided by (\ref{arbitrary variation}%
). In this case the gauge field is also flat but their components are
functions of the dynamical fields $B,\psi ^{\left( \pm 1/2\right) }$ turning
the conjugations in $Q(\delta _{\pm 1/2})$ with the Wilson lines (\ref%
{Wilson lines}) non-trivial. This situation will be addressed elsewhere
because we have found some difficulties in trying to obtain the field
variations from these non-local charges. Fortunately, in the on-shell gauge $%
A_{\pm }^{(L/R)}=0,$ the symmetries induced by (\ref{AKNS supercharges}) are
symmetries of the field configurations that solve the equations of motion.
Below in the examples, we will ignore the Wilson lines and explore a little
bit the relation between $Q_{i}^{\pm }$ and some well-known results obtained
from superspace in order to motivate further the study of these new 2D
supersymmetries.

Now, in the following we restrict ourselves to the on-shell gauge $A_{\pm
}^{(L/R)}=0$ in which sharper statements can be made$.$ In this gauge we have%
\begin{eqnarray}
Q(\delta _{+1/2}) &=&\dint_{-\infty }^{+\infty }dx^{1}\left( \left[ \psi
^{\left( -1/2\right) },\partial _{+}BB^{-1}\right] +B\psi _{-}^{\left(
-1/2\right) }B^{-1}\right) ^{\perp },  \label{mKdV supercharges} \\
Q(\delta _{-1/2}) &=&\dint_{-\infty }^{+\infty }dx^{1}\left( \left[ \psi
^{\left( +1/2\right) },B^{-1}\partial _{-}B\right] -B^{-1}\psi _{+}^{\left(
+1/2\right) }B\right) ^{\perp }.  \notag
\end{eqnarray}%
Below, in section 2.7, we shall show by using Poisson brackets, that these
supercharges generate the supersymmetry transformations (\ref{wrong I}), (%
\ref{wrong II}) with $A_{\pm }^{(L/R)}=0,$ showing that they are fermionic
Hamiltonian flows on the phase space of the system.

When there are no gauge symmetries at all, i.e $\mathcal{K}%
_{B}^{(0)}=\varnothing ,$ we recover the supercharges of \cite%
{susyflows-mKdV} for the extended super mKdV hierarchy. The supersymmetry
transformations induced by (\ref{mKdV supercharges}) in this case, are
exactly of the same form (\ref{wrong I}), (\ref{wrong II}) with $A_{\pm
}^{(L/R)}=\theta ^{(\pm 1/2)}=0$ and are symmetries of the action functional
(\ref{Abelian toda action}). The associated Noether conserved charges are
exactly (\ref{mKdV supercharges}) and the same occurs here when $\mathcal{K}%
_{B}^{(0)}\neq \varnothing ,$ with a global gauge group, see (\ref{1/2
variation}) below. Thus, we conclude that the supercharges extracted from
the Drinfeld-Sokolov and Noether procedures coincide.

Finally, we consider the grade zero equations of (\ref{negative
conservations}) and (\ref{positive conservations}). They provide the
following non-local conserved charges%
\begin{equation*}
\Omega _{L/R}=P\exp \left( \frac{1}{2}\dint_{-\infty }^{+\infty
}dx^{1}\left( A_{+}^{(L/R)}-A_{-}^{(L/R)}\right) \right)
\end{equation*}%
and in the case when the gauge algebras $\mathfrak{h}_{L/R}$ are Abelian,
they reduce to 
\begin{equation}
Q_{L/R}=\frac{1}{2}\dint_{-\infty }^{+\infty
}dx^{1}(A_{+}^{(L/R)}-A_{-}^{(L/R)}).  \label{abelian charges}
\end{equation}%
Below, in the examples, we will see that these conserved charges encode some
symmetry data of the target space of the SSSSG model action (\ref{NA Toda
action}).

We will end this section by comparing the supersymmetry involved in the
supersymmetrization of a WZNW model based on an ordinary Lie algebra with
the supersymmetry flows (SF) involved in the formulation of the SSSSG models
(\ref{NA Toda action}), (\ref{Lax pair for action}). \ In the superspace
WZNW based on a Lie algebra $\mathfrak{g,}$ the bosonic fields are replaced
by scalar superfields, as a consequence, bosons $\phi $ and fermions $\psi $
are both parametrized by the same elements in $\mathfrak{g}$ i.e $\phi =\phi
^{i}T_{i},$ $\psi =\psi ^{i}T_{i}$, where $T_{i}$ are the generators of $%
\mathfrak{g.}$ In our case the situation is very different, the physical
fields parametrize the image part $\mathcal{M}$ of a superalgebra in the
decomposition $\widehat{\mathfrak{f}}$ $\mathfrak{=}$ $\mathcal{K+M}$, while
the symmetries $\delta _{\mathcal{K}}$ are generated by the kernel part $%
\mathcal{K}$ of it. Due to the fact that $\left[ \mathcal{K}\text{,}\mathcal{%
M}\right] \subset \mathcal{M}$, we have a map from physical fields to
physical fields $\delta _{\mathcal{K}}:\mathcal{M\rightarrow M}$.
Decomposing $\mathcal{M=M}_{B}\mathcal{\oplus M}_{F}$ in bosonic and
fermionic parts and $\mathcal{K=K}_{B}\mathcal{\oplus K}_{F}$ in a similar
way, we see that a supersymmetry flow obeys $\delta _{SF}:$ $\mathcal{M}%
_{F}\rightarrow \mathcal{M}_{B}$ and $\delta _{SF}:$ $\mathcal{M}%
_{B}\rightarrow \mathcal{M}_{F},$ mapping the odd part into the even one and
viceversa$.$ In a supermatrix representation, such a map is roughly of the
form%
\begin{eqnarray*}
\delta _{SF} &:&%
\begin{pmatrix}
0 & \psi \\ 
\psi ^{\prime } & 0%
\end{pmatrix}%
\in \mathcal{M}_{F}\rightarrow 
\begin{pmatrix}
\delta (\psi ,\psi ^{\prime }) & 0 \\ 
0 & \delta ^{\prime }(\psi ,\psi ^{\prime })%
\end{pmatrix}%
\in \mathcal{M}_{B} \\
\delta _{SF} &:&%
\begin{pmatrix}
\phi & 0 \\ 
0 & \phi ^{\prime }%
\end{pmatrix}%
\in \mathcal{M}_{B}\rightarrow 
\begin{pmatrix}
0 & \delta (\phi ,\phi ^{\prime }) \\ 
\delta ^{\prime }(\phi ,\phi ^{\prime }) & 0%
\end{pmatrix}%
\in \mathcal{M}_{F}.
\end{eqnarray*}

Of course we have to guarantee that the number of bosonic and fermionic
generators in $\mathcal{M}$ match in the appropriate way for $\delta _{%
\mathcal{K}}$ to be considered as a supersymmetry. Fortunately, this is
guaranteed by the finer $%
\mathbb{Z}
_{4}$ grading decomposition entering the Lie algebraic structure of the
integrable hierarchy, which was defined if (\ref{half-integer expansion}).
Recall also that the gauge group $H$ have actions $\delta _{H}$ which mixes
with $\delta _{SF},$ i.e $\left[ \delta _{H},\delta _{SF}\right] =\delta
_{SF}^{\prime }.$

\subsection{Bi-Hamiltonian structure of the extended homogeneous hierarchy.}

In this section we introduce two Hamiltonian structures associated to the
extended homogeneous hierarchy. They can be extracted directly from the Lax
operators but to achieve this it is useful first to introduce the notion of
differential of a functional on a co-adjoint orbit $\Xi $. In this section
and the next, we assume that the constraints (\ref{new constraints}) are
satisfied\footnote{%
In particular, this happens for the soliton solutions constructed from the
dressing method, e.g see \cite{SSSSG AdS(5)xS(5)}.}.

Start by introducing the integrated inner product%
\begin{equation}
\left( X,Y\right) =\dint_{-\infty }^{+\infty }dx^{1}\left\langle
X,Y\right\rangle ,  \label{space inner product}
\end{equation}%
with $\left\langle X,Y\right\rangle _{\widehat{\mathfrak{f}}}$ is defined as
in (\ref{inner product}). The strategy is to look for the two Hamiltonian
structures for functionals defined on the co-adjoint orbits $L_{+}\in 
\widehat{\mathfrak{f}}^{\ast }$ and $L_{-}^{\prime }\in \widehat{\mathfrak{f}%
}^{\ast }.$ From these "light-cone" orbits we select the true phase space as
the spatial component of the Lax operator, i.e $L_{x}$ $=\frac{1}{2}$ $%
\left( L_{+}-L_{-}\right) $. Note that the relativistic counterpart of $%
L_{+} $ is taken as $L_{-}^{\prime }$ and not $L_{-},$ which is more natural
as can be seen from the equations (\ref{Lax operators}), (\ref{prime Lax
operators}). Now that $L_{+},L_{-}^{\prime }$ are considered as phase
spaces, denoted generically by $\Xi $, it is useful to recall the definition
of a differential $d_{\Xi }h$ of a functional on the orbit $\Xi $, i.e of $%
h\left( \Xi \right) :\Xi \rightarrow \mathcal{F}$ ($\mathcal{F}$ is a
field), which is a linear form in $\left( \widehat{\mathfrak{f}}^{\ast
}\right) ^{\ast }\sim \widehat{\mathfrak{f}}$. Then, with $\Xi \in \widehat{%
\mathfrak{f}}^{\ast }$ and a function $h,$ we find the differential $d_{\Xi
}h$ of $h$ through the taylor-like relation%
\begin{equation*}
h\left( \Xi +\delta \Xi \right) =h\left( \Xi \right) +\left( d_{\Xi
}h\right) \circ (\delta \Xi )+\mathcal{O}\left( \delta \Xi ^{2}\right) ,
\end{equation*}%
where $\delta \Xi \in \widehat{\mathfrak{f}}^{\ast }$ is an arbitrary
variation of $\Xi .$

Under the inner products (\ref{space inner product}) or (\ref{inner product}%
), $\widehat{\mathfrak{f}}$ and its dual $\widehat{\mathfrak{f}}^{\ast }$
are identified and we can write $\left( d_{\Xi }h\right) \circ (\delta \Xi
)=\left( d_{\Xi }h,\delta \Xi \right) .$ When the orbit $\Xi $ takes values
on some subspace $\widehat{\mathfrak{f}}_{\Xi }\subset $ $\widehat{\mathfrak{%
f}}$ of the Lie algebra $\widehat{\mathfrak{f}}\mathfrak{,}$ we write the
variation in the form $\delta \Xi =\varepsilon r_{\Xi },$ where $\varepsilon
<<1$, $r_{\Xi }\in $ $\widehat{\mathfrak{f}}_{\Xi }$ and this leads to the
following definition of \ the differential of a function\footnote{%
This definition is equivalent to the usual notion of functional
differentiation after taking the trace \cite{generalized DS II}.} $h$ on $%
\Xi $ 
\begin{equation}
\frac{d}{d\varepsilon }h\left( \Xi +\varepsilon r_{\Xi }\right) \mid
_{\varepsilon =0}\text{ }\equiv \text{ }\left( d_{\Xi }h,r_{\Xi }\right) ,
\label{differentials}
\end{equation}%
where we need to compute explicitly the LHS in order to isolate the
differential $d_{\Xi }h$ in the RHS$.$ From this we immediately conclude
that the differential\footnote{%
The use of the symbol $\perp $ should not be confuse with the same symbol
used before to denote projection along the kernel algebra $\mathcal{K}.$} $%
d_{\Xi }h\in \widehat{\mathfrak{f}}_{\Xi }^{\bot }$ belongs to the
ortho-complement of $\widehat{\mathfrak{f}}_{\Xi }$ in $\widehat{\mathfrak{f}%
}$ because of the operator $\mathcal{P}_{\perp }(\ast )=\left( r_{\Xi },\ast
\right) :\widehat{\mathfrak{f}}_{\Xi }\rightarrow \widehat{\mathfrak{f}}%
_{\Xi }^{\bot }$ is a projector along $\widehat{\mathfrak{f}}_{\Xi }^{\bot
}. $

Under an arbitrary conjugation $\widetilde{\Xi }=S\Xi S^{-1},$ we get $r_{%
\widetilde{\Xi }}=Sr_{\Xi }S^{-1}$ and from (\ref{differentials}) we can see
the effect of a conjugation on a differential, which is given by $d_{%
\widetilde{\Xi }}h=S\left( d_{\Xi }h\right) S^{-1}.$ This relation has to be
used in order to find the differentials on $L_{+}^{\prime },L_{-}$ starting
from those differentials computed on the orbits $L_{+},L_{-}^{\prime }$,
recall that we have $L_{\pm }^{\prime }=B^{-1}L_{\pm }B,$ c.f (\ref{lax
conjugation}). We denote the differentials of a functional $h$ defined on $%
L_{+}$ and $L_{-}^{\prime }$ as $d_{+}h$ and $d_{-}^{\prime }h$
respectively, and a similar notation for $L_{+}^{\prime }$ and $L_{-}$ after
the $B$-conjugation$.$

Now, we proceed to compute the differentials of the functions we are most
interested, namely the conserved charges found in section 2.5 by means of
the Drinfeld-Sokolov procedure.

Consider the stress-tensor components $T_{\mu \nu }$ extracted from the DS
procedure as functionals on $L_{\pm },$ $L_{\pm }^{\prime }.$ They are 
\begin{equation*}
T_{++}=\left\langle \Lambda _{+}^{(+1)},U_{-}^{-1}L_{+}U_{-}\right\rangle ,%
\text{ }T_{-+}=-\left\langle \Lambda
_{+}^{(+1)},U_{-}^{-1}L_{-}U_{-}\right\rangle ,\text{ }T_{+-}=\left\langle
\Lambda _{-}^{(-1)},U_{+}^{-1}L_{+}^{\prime }U_{+}\right\rangle ,\text{ }%
T_{--}=-\left\langle \Lambda _{-}^{(-1)},U_{+}^{-1}L_{-}^{\prime
}U_{+}\right\rangle ,
\end{equation*}%
where we have used (\ref{U}) and (\ref{U gauge}) in the $A=0$ gauge. Taking
into account the variations\footnote{%
The $r_{\geq 0}$, $r_{\leq 0}$ stands for arbitrary terms taking values on
the domains of the Lax connections $\Lambda _{\Theta }^{(+1)},$ $\Lambda
_{\Pi ^{\prime }}^{(-1)}$.} $\delta L_{+}=\varepsilon r_{\geq 0}$, $\delta
L_{-}^{\prime }=\varepsilon r_{\leq 0},$ the fact that $U_{\pm }(\varepsilon
)$ depend on $\varepsilon $ through the fields$,$ the ad-invariance of the
inner product and the relation $\left[ \Lambda _{\pm }^{(\pm 1)},L_{\pm }^{V}%
\right] =0,$ where $L_{\pm }^{V}$ are the vacuum Lax operators, we get from (%
\ref{differentials}), the associated differentials%
\begin{eqnarray*}
d_{+}T_{++} &=&\left( \Theta \Lambda _{+}^{(+1)}\Theta ^{-1}\right) _{\leq 0}%
\text{, \ \ \ \ \ }d_{-}^{\prime }T_{-+}\text{ }=\text{ }-\left( \Theta
^{\prime }\Lambda _{+}^{(+1)}\Theta ^{\prime -1}\right) _{\geq 0}, \\
d_{+}T_{+-} &=&\left( \Pi \Lambda _{-}^{(-1)}\Pi ^{-1}\right) _{\leq 0}\text{%
, \ \ \ \ \ }d_{-}^{\prime }T_{--}\text{ }=\text{ }-\left( \Pi ^{\prime
}\Lambda _{-}^{(-1)}\Pi ^{\prime -1}\right) _{\geq 0},
\end{eqnarray*}%
where we have used the factorizations $\Theta =U_{-}S_{-},$ $\Pi
=BU_{+}S_{+},$ $\Theta ^{\prime }=B^{-1}\Theta ,$ $\Pi ^{\prime }=U_{+}S_{+}$
and the fact that the kernel components $S_{\pm }$ does not contribute.

In a very similar way but this time using the fact that $\widehat{\mathfrak{f%
}}^{\perp }\cap \widehat{\mathfrak{f}}^{\parallel }=\oslash $, we have for
the fermionic functions%
\begin{eqnarray}
m_{+} &=&\left( D^{(+1/2)},K_{+}^{(-1/2)}\right) ,\text{ \ \ \ \ \ }m_{-}%
\text{ }=\text{ }-\left( D^{(+1/2)},K_{-}^{(-1/2)}\right) ,
\label{supercharge pieces} \\
n_{+} &=&-\left( D^{(-1/2)},K_{+}^{(+1/2)}\right) ,\text{ \ \ \ }n_{-}\text{ 
}=\text{ }\left( D^{(-1/2)},K_{-}^{(+1/2)}\right) ,  \notag
\end{eqnarray}%
where $D^{(\pm 1/2)}=\epsilon _{i}F_{i}^{(\pm 1/2)},$\ $i=1,...,\dim 
\mathcal{K}_{F}^{(\pm 1/2)},$ the associated differentials%
\begin{eqnarray}
d_{+}m_{+} &=&\left( U_{-}D^{(+1/2)}U_{-}^{-1}\right) _{\leq 0},\text{ \ \ \
\ \ \ \ \ \ \ \ \ \ \ }d_{-}^{\prime }m_{-}\text{ }=\text{ }-B^{-1}\left(
U_{-}D^{(+1/2)}U_{-}^{-1}\right) _{\geq 0}B,
\label{supercharge differentials} \\
d_{+}n_{+} &=&-B\left( U_{+}D^{(-1/2)}U_{+}^{-1}\right) _{\leq 0}B^{-1}\text{%
, \ \ \ \ }d_{-}^{\prime }n_{-}\text{ }=\text{ }\left(
U_{+}D^{(-1/2)}U_{+}^{-1}\right) _{\geq 0}.  \notag
\end{eqnarray}

To identify the Poisson structures, we follow the approach adopted in \cite%
{generalized DS II} to study the generalized Drinfeld-Sokolov hierarchies
and apply those results to our extended homogeneous hierarchy. For ease of
simplicity, we perform the calculation for a bosonic hierarchy with an
integer homogeneous gradation and at the end we comment on the necessary
changes required in the supersymmetric case.

Consider $L_{+}$ as the phase space and write the equations of motion $\left[
L_{+},L_{-}\right] =0$ with $L_{-}=\partial _{-}-\left( \Pi \Lambda
_{-}^{(-1)}\Pi ^{-1}\right) _{<0},$ in the following two equivalent Lax forms%
\begin{equation}
\partial _{-}L_{+}=\left[ \left( \Pi \Lambda _{-}^{(-1)}\Pi ^{-1}\right)
_{<0},L_{+}\right] =-\left[ \left( \Pi \Lambda _{-}^{(-1)}\Pi ^{-1}\right)
_{\geq 0},L_{+}\right] .  \label{L+ as orbit}
\end{equation}%
The first and second terms on the RHS of (\ref{L+ as orbit}) will lead,
respectively, to the first and second Poisson structures defined on $L_{+}$
as we now see.

Considering the first form of $\partial _{-}L_{+}$ and using $z^{-1}\left(
\Pi z\Lambda _{-}^{(-1)}\Pi ^{-1}\right) _{\leq 0}=\left( \Pi \Lambda
_{-}^{(-1)}\Pi ^{-1}\right) _{<0},$ because $z$ have homogeneous grade $+1$,
we find that%
\begin{equation*}
z^{-1}d_{+}T_{+-}(z)=\left( \Pi \Lambda _{-}^{(-1)}\Pi ^{-1}\right) _{<0},
\end{equation*}%
where $d_{+}T_{+-}(z)$ means that $\Lambda _{-}^{(-1)}$ is replaced by $%
z\Lambda _{-}^{(-1)}$ in the definition of $d_{+}T_{+-}.$ Then, we have the
differential representation of the first form%
\begin{equation}
\partial _{-}L_{+}=\left[ z^{-1}d_{+}T_{+-}(z),L_{+}\right] _{\geq 0},
\label{first form}
\end{equation}%
where we have, for consistency, projected along the same grade space
decomposition of the Lax connection in $L_{+}.$

Now, for the second form of $\partial _{-}L_{+}$ we use $z\left( \Pi
z^{-1}\Lambda _{-}^{(-1)}\Pi ^{-1}\right) _{\geq 0}=\left( \Pi \Lambda
_{-}^{(-1)}\Pi ^{-1}\right) _{>0},$ because $z^{-1}$ have homogeneous grade $%
-1,$ giving%
\begin{equation*}
\partial _{-}L_{+}=-\left[ \left( d_{+}T_{+-}\right) _{0},L_{+}\right] -z%
\left[ \left( \Pi z^{-1}\Lambda _{-}^{(-1)}\Pi ^{-1}\right) _{\geq 0},L_{+}%
\right] .
\end{equation*}%
Noting that $\partial _{z^{-1}(-)}L_{+}=-\left[ \left( \Pi z^{-1}\Lambda
_{-}^{(-1)}\Pi ^{-1}\right) _{\geq 0},L_{+}\right] ,$ where $\partial
_{z^{-1}(-)}L_{+}$ means that $\Lambda _{-}^{(-1)}$ is replaced by $%
z^{-1}\Lambda _{-}^{(-1)}$, we use the first form representation (\ref{first
form}) to write $\partial _{z^{-1}(-)}L_{+}=\left[ z^{-1}d_{+}T_{+-},L_{+}%
\right] _{\geq 0}$ and obtain 
\begin{equation*}
\partial _{-}L_{+}=-\left[ \left( d_{+}T_{+-}\right) _{0},L_{+}\right] +z%
\left[ z^{-1}d_{+}T_{+-},L_{+}\right] _{\geq 0}.
\end{equation*}%
Finally, using $z\left[ z^{-1}d_{+}T_{+-},L_{+}\right] _{\geq 0}=\left[
d_{+}T_{+-},L_{+}\right] _{>0},$ we obtain the differential representation
of the second form%
\begin{equation}
\partial _{-}L_{+}=-\left[ \left( d_{+}T_{+-}\right) _{0},L_{+}\right] +%
\left[ d_{+}T_{+-},L_{+}\right] _{>0}.  \label{second form}
\end{equation}

From these results we can write two equivalent forms for the $\partial _{-}$
evolution of a functional $\varphi $ on $L_{+},$ i.e 
\begin{eqnarray}
\partial _{-}\varphi &=&\left( d_{+}\varphi ,\partial _{-}L_{+}\right)
=\left( d_{+}\varphi ,\left[ z^{-1}d_{+}T_{+-}(z),L_{+}\right] \right) ,
\label{equivalent forms} \\
\partial _{-}\varphi &=&\left( d_{+}\varphi ,\partial _{-}L_{+}\right)
=\left( d_{+}\varphi ,-\left[ \left( d_{+}T_{+-}\right) _{0},L_{+}\right] +%
\left[ d_{+}T_{+-},L_{+}\right] _{>0}\right) ,  \notag
\end{eqnarray}%
leading to the following two Kostant-Kirillov brackets on the orbit $L_{+}$ 
\begin{eqnarray}
\left\{ \varphi ,\psi \right\} _{1}(L_{+}) &=&-\left( L_{+},z^{-1}\left[
d_{+}\varphi ,d_{+}\psi \right] \right) ,  \label{brackets L mas} \\
\left\{ \varphi ,\psi \right\} _{2}(L_{+}) &=&\left( L_{+},\left[
(d_{+}\varphi )_{0},(d_{+}\psi )_{0}\right] -\left[ (d_{+}\varphi
)_{<0},(d_{+}\psi )_{<0}\right] \right) ,  \notag
\end{eqnarray}%
where we have used the decomposition $d_{+}f=(d_{+}f)_{0}+(d_{+}f)_{<0}$ in
order to simplify the second bracket. From (\ref{brackets L mas}) we can
write (\ref{equivalent forms}) as a recursion relation%
\begin{equation}
\partial _{-}\varphi =\left\{ T_{+-}(z),\varphi \right\} _{1}(L_{+})=\left\{
T_{+-},\varphi \right\} _{2}(L_{+}).  \label{recursion 1}
\end{equation}

In a complete analogous way, we consider $L_{-}^{\prime }$ as the phase
space and write $\left[ L_{+}^{\prime },L_{-}^{\prime }\right] =0$ in the
two equivalent Lax forms. We get%
\begin{eqnarray}
\partial _{+}L_{-}^{\prime } &=&\left[ zd_{-}^{\prime
}T_{-+}(z^{-1}),L_{-}^{\prime }\right] _{\leq 0},\text{ }
\label{two rep for L minus prime} \\
\partial _{+}L_{-}^{\prime } &=&-\left[ \left( d_{-}^{\prime }T_{-+}\right)
_{0},L_{-}^{\prime }\right] +\left[ d_{-}^{\prime }T_{-+},L_{-}^{\prime }%
\right] _{<0},  \notag
\end{eqnarray}%
leading to the following two Kostant-Kirillov brackets on the orbit $%
L_{-}^{\prime }$ 
\begin{eqnarray}
\left\{ \varphi ,\psi \right\} _{1}(L_{-}^{\prime }) &=&\left( L_{-}^{\prime
},z\left[ d_{-}^{\prime }\varphi ,d_{-}^{\prime }\psi \right] \right) ,
\label{brackets L menos} \\
\left\{ \varphi ,\psi \right\} _{2}(L_{-}^{\prime }) &=&-\left(
L_{-}^{\prime },\left[ (d_{-}^{\prime }\varphi )_{0},(d_{-}^{\prime }\psi
)_{0}\right] -\left[ (d_{-}^{\prime }\varphi )_{>0},(d_{-}^{\prime }\psi
)_{>0}\right] \right) ,  \notag
\end{eqnarray}%
where we have, for convenience, multiplied by a global factor $-1.$ The
recursion relation becomes now 
\begin{equation}
\partial _{+}\varphi =-\left\{ T_{-+}(z^{-1}),\varphi \right\}
_{1}(L_{-}^{\prime })=-\left\{ T_{-+},\varphi \right\} _{2}(L_{-}^{\prime }).
\label{recursion 2}
\end{equation}

It is instructive to repeat the same analysis for the identities $\left[
L_{+},L_{+}\right] =0$ and $\left[ L_{-}^{\prime },L_{-}^{\prime }\right]
=0. $ We find that $\partial _{+}L_{+}$ and $\partial _{-}L_{-}^{\prime }$
can be written in two equivalent ways 
\begin{eqnarray*}
\partial _{+}L_{+} &=&\left[ z^{-1}d_{+}T_{++}(z),L_{+}\right] _{\geq 0}=-%
\left[ \left( d_{+}T_{++}\right) _{0},L_{+}\right] +\left[ d_{+}T_{++},L_{+}%
\right] _{>0}, \\
\partial _{-}L_{-}^{\prime } &=&\left[ zd_{-}^{\prime
}T_{--}(z^{-1}),L_{-}^{\prime }\right] =-\left[ \left( d_{-}^{\prime
}T_{--}\right) _{0},L_{-}^{\prime }\right] +\left[ d_{-}^{\prime
}T_{--},L_{-}^{\prime }\right] _{<0},
\end{eqnarray*}%
showing that the mixed components of the stress-tensor, i.e $T_{\pm \mp },$
are responsible for coupling the two sectors of the relativistic part of the
extended integrable hierarchy as shown by (\ref{first form}),(\ref{second
form}).

The brackets (\ref{brackets L mas}),(\ref{brackets L menos}) can be written
in the following compact $r$-bracket forms%
\begin{eqnarray}
\left\{ \varphi ,\psi \right\} _{\mu }(L_{+}) &=&\left( L_{+},\left[
d_{+}\varphi ,d_{+}\psi \right] _{\mathcal{R}_{<,\mu }}\right) ,
\label{compact R bracket form} \\
\left\{ \varphi ,\psi \right\} _{\nu }(L_{-}^{\prime }) &=&\left(
L_{-}^{\prime },\left[ d_{-}^{\prime }\varphi ,d_{-}^{\prime }\psi \right] _{%
\mathcal{R}_{>,\nu }}\right) ,  \notag
\end{eqnarray}%
where ($\mu ,\nu \in 
\mathbb{C}
$) \ 
\begin{eqnarray}
\mathcal{R}_{<,\mu } &=&\left( \mathcal{R}_{<}-\mu \cdot z^{-1}\right) ,%
\text{ \ \ \ \ \ }\mathcal{R}_{<}\text{ }=\text{ }\frac{1}{2}\left( \mathcal{%
P}_{0}-\mathcal{P}_{<0}\right) ,  \label{R matrices} \\
\mathcal{R}_{>,\nu } &=&-\left( \mathcal{R}_{>}-\nu \cdot z\right) ,\text{ \
\ \ \ \ \ }\mathcal{R}_{>}\text{ }=\text{ }\frac{1}{2}\left( \mathcal{P}_{0}-%
\mathcal{P}_{>0}\right) ,  \notag
\end{eqnarray}%
are the corresponding $r$-matrices entering the $r$-bracket\footnote{%
Explicitly: $\left[ X,Y\right] _{\mathcal{R}_{\lessgtr }}=\pm \left( \left[
X_{0},Y_{0}\right] -\left[ X_{_{\lessgtr }0},Y_{_{\lessgtr }0}\right]
\right) .$} $\left[ X,Y\right] _{\mathcal{R}}=\left[ \mathcal{R}(X),Y\right]
+\left[ X,\mathcal{R}(Y)\right] $ and $\mathcal{P}_{0},\mathcal{P}_{<0},%
\mathcal{P}_{>0}$ are the projectors along zero, negative and positive
grades.

The Jacobi identity and the compatibility of (\ref{compact R bracket form})
are guaranteed because each $r$-matrix $\mathcal{R}_{<,\mu },\mathcal{R}%
_{>,\nu }$ satisfy separately the classical modified Yang-Baxter equation 
\begin{equation}
\left[ \mathcal{R}_{i}\left( X\right) ,\mathcal{R}_{i}\left( Y\right) \right]
-\mathcal{R}_{i}\left( \left[ \mathcal{R}_{i}\left( X\right) ,Y\right] +%
\left[ X,\mathcal{R}_{i}\left( Y\right) \right] \right) =\lambda _{i}\left[
X,Y\right] ,\text{ \ \ }\lambda _{i}\in 
\mathbb{C}
,i=1,2  \label{class. mod. super YB}
\end{equation}%
with $\mathcal{R}_{1}=\mathcal{R}_{<,\mu }$, $\mathcal{R}_{2}=\mathcal{R}%
_{>,\nu }$ and%
\begin{equation*}
\lambda _{1}=-\left( \frac{1}{2}+\mu \cdot z^{-1}\right) ^{2},\text{ \ \ \ \
\ }\lambda _{2}=-\left( \frac{1}{2}+\nu \cdot z\right) ^{2}.
\end{equation*}

In what follows we will choose the second symplectic structure because, for
the homogeneous hierarchies, it is nothing but the canonical symplectic
structure associated to the Lagrangian (\ref{Abelian toda action}), see (\ref%
{SSSSG model bracket}) below. In the superalgebra case, we only have to
replace trace by supertrace and the integer gradation by half-integer one.
For the moment, it is of great importance to mention that the form of the
first Hamiltonian structure above is a consequence of the special properties
of the only gradation (naively) used, i.e, the integer homogenous gradation.
Note that we are not taking into account any other gradations. This does not
necessarily mean that the integrable system described by the action (\ref%
{Abelian toda action}) admits such an structure in the form presented above,
i.e, that the supersymmetrization of the first structure takes the same form
as in (\ref{brackets L mas}), (\ref{brackets L menos}), because the
existence of the finer $%
\mathbb{Z}
_{4}$ decomposition entering in (\ref{loop algebra}) could preclude it. In
that case, the recursion relations (\ref{recursion 1}), (\ref{recursion 2})
have to be modified in an appropriate way and this changes the explicit form
of the first Poisson structure, in fact it becomes non-local \cite{us}.
However, the second structure is enough for our present purposes\footnote{%
The first bracket is requiered only when we want to link the sigma/SSSSG
model symplectic structures.}.

The $r$-matrices (\ref{R matrices}) satisfy (\ref{class. mod. super YB})
separately and based on this fact we now propose a bracket on $L_{x}$
through the following definition.

\begin{definition}
The bracket on the spatial orbit $L=L_{x}$ is given by\footnote{%
We have absorbed the $1/2$ in $L=\frac{1}{2}\left( L_{+}-L_{-}\right) .$} 
\begin{equation}
\left\{ \varphi ,\psi \right\} _{2}(L)=\left\{ \varphi ,\psi \right\}
_{2}(L_{+})-\left\{ \varphi ,\psi \right\} _{2}(L_{-})=\left( L_{+},\left[
d_{+}\varphi ,d_{+}\psi \right] _{\mathcal{R}_{<}}\right) +\left( L_{-},%
\left[ d_{-}\varphi ,d_{-}\psi \right] _{\mathcal{R}_{>}}\right) ,
\label{spatial 2 bracket}
\end{equation}%
where we find $\left\{ \varphi ,\psi \right\} _{2}(L_{-})\ $and the
differentials $d_{-}\varphi ,d_{-}\psi $ by using the map $L_{-}^{\prime
}=B^{-1}L_{-}B.$ The differentials $d\varphi ,d\psi $ on $L$ are constructed
by restricting the functionals $\varphi ,\psi $ on the respective domains of
definitions of $L_{+}$ and $L_{-}^{\prime }$. The bracket for the $B$%
-equivalent representation $L^{\prime }$ is defined accordingly.
\end{definition}

For computational purposes, we use $L_{+}=\partial _{+}+\Lambda _{\Theta
}^{(+1)}$ and $L_{-}^{\prime }=\partial _{-}-\Lambda _{\Pi ^{\prime
}}^{(-1)} $ in order to rewrite the brackets expressions. We get\footnote{%
Concerning the first expression right below, a similar bracket was introduce
in \cite{Delduc-Gallot} by using a superspace approach to the
Drinfeld-Sokolov reduction.}%
\begin{eqnarray}
\left\{ \varphi ,\psi \right\} _{2}(L_{+}) &=&\left( \Lambda _{\Theta
}^{(+1)},\left[ d_{+}\varphi ,d_{+}\psi \right] _{\mathcal{R}_{<}}\right)
+\left( \partial _{+}\left( d_{+}\varphi \right) ,\mathcal{R}_{<}(d_{+}\psi
)\right) -\left( \mathcal{R}_{<}\left( d_{+}\varphi \right) ,\partial
_{+}(d_{+}\psi )\right) ,  \label{simplified brackets} \\
\left\{ \varphi ,\psi \right\} _{2}(L_{-}^{\prime }) &=&\left( \Lambda _{\Pi
^{\prime }}^{(-1)},\left[ d_{-}^{\prime }\varphi ,d_{-}^{\prime }\psi \right]
_{\mathcal{R}_{>}}\right) -\left( \partial _{-}\left( d_{-}^{\prime }\varphi
\right) ,\mathcal{R}_{>}(d_{-}^{\prime }\psi )\right) +\left( \mathcal{R}%
_{>}\left( d_{-}^{\prime }\varphi \right) ,\partial _{-}(d_{-}^{\prime }\psi
)\right) ,  \notag
\end{eqnarray}%
where we have used the ad-invariance of the inner product to write%
\begin{equation}
\left( \partial ,\left[ X,Y\right] _{\mathcal{R}}\right) =\left( \partial X,%
\mathcal{R}(Y)\right) -\left( \mathcal{R}(X),\partial Y\right) .
\label{derivative term}
\end{equation}%
Note that these derivative terms only receive contributions from the zero
grade parts of the algebra.

\subsection{The second Hamiltonian structure: the supercharges algebra and
the field variations.}

The purpose of this section is threefold. We will find part of the algebra
obeyed by the supercharges (\ref{mKdV supercharges}) and comment later on
some issues related to the role of gauge group. We shall obtain, by using
the second Poisson bracket (\ref{spatial 2 bracket}), the supersymmetric
field variations induced by the supercharges and also show that the second
symplectic structure found above coincides with the symplectic structure of
the WZNW model (\ref{Abelian toda action}).

Let us start with the following proposition.

\begin{proposition}
Under the second Poisson structure, the functionals (\ref{supercharge pieces}%
) satisfy the relations%
\begin{eqnarray*}
\left\{ m_{+},\widetilde{m}_{+}\right\} _{2}(L_{+}) &=&2\text{ }\epsilon
\cdot \widetilde{\epsilon }\text{ }\dint_{-\infty }^{+\infty }dx^{1}T_{++},%
\text{ \ \ \ \ }\left\{ m_{-},\widetilde{m}_{-}\right\} _{2}(L_{-})\text{ }=%
\text{ }-2\text{ }\epsilon \cdot \widetilde{\epsilon }\text{ }\dint_{-\infty
}^{+\infty }dx^{1}T_{-+} \\
\left\{ n_{+},\widetilde{n}_{+}\right\} _{2}(L_{+}^{\prime }) &=&2\text{ }%
\overline{\epsilon }\cdot \widetilde{\overline{\epsilon }}\text{ }%
\dint_{-\infty }^{+\infty }dx^{1}T_{+-},\text{ \ \ \ \ \ \ }\left\{ n_{-},%
\widetilde{n}_{-}\right\} _{2}(L_{-}^{\prime })\text{ }=\text{ }-2\text{ }%
\overline{\epsilon }\cdot \widetilde{\overline{\epsilon }}\text{ }%
\dint_{-\infty }^{+\infty }dx^{1}T_{--}.
\end{eqnarray*}%
Moreover, under (\ref{spatial 2 bracket}) the supercharges (\ref{mKdV
supercharges}) written as $Q^{+}=m_{+}+m_{-},$ $Q^{-}=n_{+}+n_{-}$ satisfy%
\begin{equation}
\left\{ Q^{+},\widetilde{Q}^{+}\right\} _{2}(L)=2\text{ }\epsilon \cdot 
\widetilde{\epsilon }\text{ }\left( H+P\right) ,\text{ \ \ \ \ \ }\left\{
Q^{-},\widetilde{Q}^{-}\right\} _{2}(L^{\prime })=2\text{ }\overline{%
\epsilon }\cdot \widetilde{\overline{\epsilon }}\text{ }\left( H-P\right) .
\label{charge algebra}
\end{equation}
\end{proposition}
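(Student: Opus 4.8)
The plan is to compute the Poisson brackets of the fermionic functionals $m_\pm, n_\pm$ directly from the second Kostant-Kirillov structure, using the differentials already computed in (\ref{supercharge differentials}) together with the simplified bracket form (\ref{simplified brackets}). The key observation that makes this tractable is that the supercharges $Q^\pm$ are built from the fermionic functionals $m_\pm$ and $n_\pm$, whose differentials $d_+m_+, d_-'m_-, d_+n_+, d_-'n_-$ are explicit (and, crucially, take values in the image subspace $\mathcal{M}$ after the relevant projection). First I would substitute the differentials (\ref{supercharge differentials}) into the bracket $\{m_+,\widetilde m_+\}_2(L_+)$ using the form $(\Lambda_\Theta^{(+1)},[d_+m_+,d_+\widetilde m_+]_{\mathcal{R}_<}) + (\partial_+ d_+m_+, \mathcal{R}_<(d_+\widetilde m_+)) - (\mathcal{R}_<(d_+m_+),\partial_+ d_+\widetilde m_+)$, and similarly for the other three brackets.

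The main computational engine is the $r$-matrix structure. Since $d_+m_+ = (U_-D^{(+1/2)}U_-^{-1})_{\le 0}$ lives in grades $\le 0$ and $\mathcal{R}_< = \tfrac12(\mathcal{P}_0 - \mathcal{P}_{<0})$ acts by a sign split on the grade-zero versus negative pieces, the $r$-bracket term collapses substantially. The product $[d_+m_+, d_+\widetilde m_+]$ of two grade-$(-1/2)$-dominated fermionic objects sits predominantly in grade $-1$ and grade $0$; pairing against $\Lambda_+^{(+1)}$ (contained in $\Lambda_\Theta^{(+1)}$) via the grade-selecting inner product (\ref{inner product}) projects out exactly the grade $-1$ piece. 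I expect that, after using the factorization $\Theta = U_-S_-$ and the fact that the kernel factors $S_\pm$ do not contribute to these differentials, the right-hand side will reassemble into the grade $-1$ current component whose $\Lambda_+^{(+1)}$-pairing is precisely $T_{++}$, as read off from the Drinfeld-Sokolov output in section 2.5. The factor $2\,\epsilon\cdot\widetilde\epsilon$ arises from the bilinear dependence on $D^{(+1/2)} = \epsilon_i F_i^{(+1/2)}$ and $\widetilde D^{(+1/2)} = \widetilde\epsilon_i F_i^{(+1/2)}$ together with the normalization of $\langle F_i^{(-1/2)}, F_j^{(+1/2)}\rangle$; the relative sign between the $m_+$ and $m_-$ brackets tracks the sign difference in the definitions (\ref{supercharge pieces}) and the opposite projection ($\ge 0$ versus $\le 0$) in $K_\pm$.

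Having established the four individual relations for $m_\pm$ and $n_\pm$ on their respective orbits $L_\pm, L_\pm'$, the second statement follows by assembling the spatial bracket. Writing $Q^+ = m_+ + m_-$ and using the definition (\ref{spatial 2 bracket}), $\{Q^+,\widetilde Q^+\}_2(L) = \{m_+,\widetilde m_+\}_2(L_+) - \{m_-,\widetilde m_-\}_2(L_-)$, where the cross terms $\{m_+,\widetilde m_-\}$ must be shown to vanish because $d_+m_+$ and $d_-m_-$ have support on disjoint (light-cone opposite) halves of the algebra and the spatial bracket couples them only through the orbit evaluations $L_+$ and $L_-$ separately. Substituting the two established relations yields $2\,\epsilon\cdot\widetilde\epsilon\int dx^1(T_{++} + T_{-+}) = 2\,\epsilon\cdot\widetilde\epsilon\,(H+P)$ by the definition (\ref{H and P}); the $Q^-$ relation is identical on the primed side giving $(H-P)$.

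\textbf{The hard part} will be the careful bookkeeping of the grade projections and the $B$-conjugations when transporting differentials between the orbits $L_-'$ and $L_-$ (via $L_-' = B^{-1}L_-B$, so that $d_-h = B(d_-'h)B^{-1}$), since the fermionic differentials in (\ref{supercharge differentials}) are stated on the light-cone orbits $L_+, L_-'$ and the spatial bracket (\ref{spatial 2 bracket}) requires both $L_+$ and $L_-$ data. I also anticipate subtlety in confirming that the grade-zero contributions (the $\mathcal{R}_<$ derivative terms, which by the remark after (\ref{derivative term}) only pick up grade-zero parts) combine correctly with the $Q_\pm^{(0)}$ fermion-bilinear pieces inside $T_{\pm\pm}$ rather than producing spurious terms; reconciling these with the constraint (\ref{new constraints}) — assumed to hold throughout this section — is where the computation is most likely to require genuine care rather than routine manipulation.
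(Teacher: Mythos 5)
Your plan follows essentially the same route as the paper's proof: substitute the explicit differentials into the simplified $r$-bracket form (\ref{simplified brackets}), sort the contributions by grade (the grade-$0$ commutator paired against $Q_{+}^{(0)}$, the grade-$(-1/2)$ commutator paired against $\Lambda _{+}^{(+1)}$, plus the derivative term), and then assemble the spatial bracket using the vanishing of the cross terms together with (\ref{H and P}). The one ingredient you should name explicitly is the kernel anticommutator $\left\{ F_{i}^{(\pm 1/2)},F_{j}^{(\pm 1/2)}\right\} =\pm 2\delta _{ij}\Lambda _{\pm }^{(\pm 1)}$: in the paper's computation the factor $2\,\epsilon \cdot \widetilde{\epsilon }$ and the reassembly into $T_{++}$ come from $\left[ D^{(+1/2)},\widetilde{D}^{(+1/2)}\right] =2\,\epsilon \cdot \widetilde{\epsilon }\,\Lambda _{+}^{(+1)}$ after a Jacobi rearrangement (and a Grassmann-parity cancellation of the remaining terms), not from the normalization of $\left\langle F_{i}^{(-1/2)},F_{j}^{(+1/2)}\right\rangle $ as you suggest.
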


\begin{proof}
We will compute one of the brackets only, as the others computations are
quite similar. Starting with, cf. (\ref{simplified brackets}), 
\begin{equation*}
\left\{ m_{+},\widetilde{m}_{+}\right\} _{2}(L_{+})=\left( \Lambda _{\Theta
}^{(+1)},\left[ d_{+}m_{+},d_{+}\widetilde{m}_{+}\right] _{\mathcal{R}%
_{<}}\right) +\left( \partial _{+}\left( d_{+}m_{+}\right) ,\mathcal{R}%
_{<}(d_{+}\widetilde{m}_{+})\right) -\left( \mathcal{R}_{<}\left(
d_{+}m_{+}\right) ,\partial _{+}(d_{+}\widetilde{m}_{+})\right) ,
\end{equation*}%
using the explicit form of the Lax connection 
\begin{equation*}
\Lambda _{\Theta }^{(+1)}=A_{+}^{(0)}+Q_{+}^{(0)}+\psi _{+}^{\left(
+1/2\right) }+\Lambda _{+}^{(+1)},
\end{equation*}%
the involved $r$-matrix $\mathcal{R}_{<}=\frac{1}{2}\left( \mathcal{P}_{0}-%
\mathcal{P}_{<0}\right) $ and the projected components of the differentials 
\begin{eqnarray*}
\left( d_{+}m_{+}\right) _{0} &=&\left[ \psi ^{(-1/2)},D^{(+1/2)}\right] , \\
\left( d_{+}m_{+}\right) _{-1/2} &=&\left[ \psi ^{(-1)},D^{(+1/2)}\right] +%
\frac{1}{2}\ \left[ \psi ^{(-1/2)},\left[ \psi ^{(-1/2)},D^{(+1/2)}\right] %
\right] ,
\end{eqnarray*}%
we identify the relevant contributions we need to compute:%
\begin{eqnarray*}
A &=&\left( \Lambda _{\Theta }^{(+1)},\left[ \left( d_{+}m_{+}\right)
_{0},\left( d_{+}\widetilde{m}_{+}\right) _{0}\right] \right) =\left(
Q_{+}^{(0)},\left[ \left[ \psi ^{(-1/2)},D^{(+1/2)}\right] ,\left[ \psi
^{(-1/2)},\widetilde{D}^{(+1/2)}\right] \right] \right) , \\
B &=&\left( \Lambda _{\Theta }^{(+1)},\left[ (d_{+}m_{+})_{-1/2},(d_{+}%
\widetilde{m}_{+})_{-1/2}\right] \right) =\left( \Lambda _{+}^{(+1)},\left[ %
\left[ \psi ^{(-1)},D^{(+1/2)}\right] ,\left[ \psi ^{(-1)},\widetilde{D}%
^{(+1/2)}\right] \right] \right) , \\
C &=&\text{derivative term}=\frac{1}{2}\left( \partial _{+}\psi ^{(-1/2)},%
\left[ D^{(+1/2)},\left[ \psi ^{(-1/2)},\widetilde{D}^{(+1/2)}\right] \right]
-\left[ \widetilde{D}^{(+1/2)},\left[ \psi ^{(-1/2)},D^{(+1/2)}\right] %
\right] \right) .
\end{eqnarray*}

Considering $A$, we anti-symmetrize and use the ad-invariance of the inner
product to get%
\begin{equation*}
A=\frac{1}{2}\left( \left[ Q_{+}^{(0)},\left[ \psi ^{(-1/2)},D^{(+1/2)}%
\right] \right] ,\left[ \psi ^{(-1/2)},\widetilde{D}^{(+1/2)}\right] \right)
-\frac{1}{2}\left( \left[ Q_{+}^{(0)},\left[ \psi ^{(-1/2)},\widetilde{D}%
^{(+1/2)}\right] \right] ,\left[ \psi ^{(-1/2)},D^{(+1/2)}\right] \right) .
\end{equation*}%
Now, use the Jacobi identity to write 
\begin{equation*}
\left[ Q_{+}^{(0)},\left[ \psi ^{(-1/2)},X\right] \right] =-\left[ X,\left[
Q_{+}^{(0)},\psi ^{(-1/2)}\right] \right] -\left[ \psi ^{(-1/2)},\left[
X,Q_{+}^{(0)}\right] \right] ,
\end{equation*}%
where $X=D^{(+1/2)}$ and $\widetilde{D}^{(+1/2)}.$ Using the ad-invariance
of the inner product again we obtain%
\begin{eqnarray*}
A &=&\left( \left[ Q_{+}^{(0)},\psi ^{(-1/2)}\right] ,\left[ D^{(+1/2)},%
\left[ \psi ^{(-1/2)},\widetilde{D}^{(+1/2)}\right] \right] -\left[ 
\widetilde{D}^{(+1/2)},\left[ \psi ^{(-1/2)},D^{(+1/2)}\right] \right]
\right) - \\
&&-\frac{1}{2}\left( \left[ Q_{+}^{(0)},D^{(+1/2)}\right] ,\left[ \psi
^{(-1/2)},\left[ \psi ^{(-1/2)},\widetilde{D}^{(+1/2)}\right] \right]
\right) +\frac{1}{2}\left( \left[ Q_{+}^{(0)},\widetilde{D}^{(+1/2)}\right] ,%
\left[ \psi ^{(-1/2)},\left[ \psi ^{(-1/2)},D^{(+1/2)}\right] \right]
\right) .
\end{eqnarray*}%
The second line right above vanishes because of the anti-commutativity of
the constant Grassmannian parameters $\epsilon ,\widetilde{\epsilon }$ and
the supersymmetry of the trace and we get%
\begin{equation*}
A=\left( \left[ Q_{+}^{(0)},\psi ^{(-1/2)}\right] ,\left[ D^{(+1/2)},\left[
\psi ^{(-1/2)},\widetilde{D}^{(+1/2)}\right] \right] -\left[ \widetilde{D}%
^{(+1/2)},\left[ \psi ^{(-1/2)},D^{(+1/2)}\right] \right] \right) .
\end{equation*}%
Using once again the Jacobi identity we obtain%
\begin{equation}
\left[ D^{(+1/2)},\left[ \psi ^{(-1/2)},\widetilde{D}^{(+1/2)}\right] \right]
-\left[ \widetilde{D}^{(+1/2)},\left[ \psi ^{(-1/2)},D^{(+1/2)}\right] %
\right] =\left[ \psi ^{(-1/2)},\left[ D^{(+1/2)},\widetilde{D}^{(+1/2)}%
\right] \right] =2\text{ }\epsilon \cdot \widetilde{\epsilon }\left[ \psi
^{(-1/2)},\Lambda _{+}^{(+1)}\right] ,  \label{result}
\end{equation}%
where $\epsilon \cdot \widetilde{\epsilon }=\epsilon _{i}\widetilde{\epsilon 
}_{i},$ $i=1,...,\dim \mathcal{K}_{F}^{(+1/2)}$ and where we have assumed
that\footnote{%
In particular, this is satisfied by the superalgebras entering the Pohlmeyer
reduction of $AdS_{n}\times S^{n},$ $n=2,3,5$ and $AdS_{4}\times 
\mathbb{C}
P^{3}$ superstring sigma models. For explicit examples, see below (\ref{psu
anticom}) and (\ref{psuxpsu anticom}).} 
\begin{equation*}
\left\{ F_{i}^{(+1/2)},F_{j}^{(+1/2)}\right\} =2\delta _{ij}\Lambda
_{+}^{(+1)},\text{ \ \ \ }i=1,...,\dim \mathcal{K}_{F}^{(+1/2)}.
\end{equation*}%
Now, we use the ad-invariance of the inner product and the relations (\ref%
{Baker-Hausdorf fields}) to obtain%
\begin{equation*}
A=2\text{ }\epsilon \cdot \widetilde{\epsilon }\text{ }\left( \left(
Q_{+}^{(0)}\right) ^{2}\right) \text{ , \ }C=2\text{ }\epsilon \cdot 
\widetilde{\epsilon }\text{ }\frac{1}{2}\left( \psi _{+}^{\left( +1/2\right)
}\partial _{+}\psi ^{(-1/2)}\right) ,
\end{equation*}%
where we have used (\ref{result}) to find $C.$

It remains to compute $B$. Anti-symmetrizing and using ad-invariance of the
trace we get%
\begin{equation*}
B=\frac{1}{2}\left( \left[ \Lambda _{+}^{(+1)},\left[ \psi ^{(-1)},D^{(+1/2)}%
\right] \right] ,\left[ \psi ^{(-1)},\widetilde{D}^{(+1/2)}\right] \right) -%
\frac{1}{2}\left( \left[ \Lambda _{+}^{(+1)},\left[ \psi ^{(-1)},\widetilde{D%
}^{(+1/2)}\right] \right] ,\left[ \psi ^{(-1)},D^{(+1/2)}\right] \right) .
\end{equation*}%
Using the Jacobi identity, the fact that $\left[ \Lambda _{+}^{(+1)},X\right]
=0,$ with $X$ as above and the relations (\ref{Baker-Hausdorf fields}) we
write%
\begin{equation*}
\left[ \Lambda _{+}^{(+1)},\left[ \psi ^{(-1)},X\right] \right] =\left[
X,A_{-}^{(0)}\right] .
\end{equation*}%
This last result, the equation (\ref{result}) with $\psi
^{(-1/2)}\rightarrow \psi ^{(-1)}$ and once again (\ref{Baker-Hausdorf
fields}) gives 
\begin{equation*}
B=-2\text{ }\epsilon \cdot \widetilde{\epsilon }\text{ }\frac{1}{2}\left(
\left( A_{+}^{(0)}\right) ^{2}\right) .
\end{equation*}%
Finally, putting all together, i.e $A-B+C$ we arrive to the final expression%
\begin{equation*}
\left\{ m_{+},\widetilde{m}_{+}\right\} _{2}(L_{+})=2\text{ }\epsilon \cdot 
\widetilde{\epsilon }\text{ }\dint_{-\infty }^{+\infty }dx^{1}T_{++},
\end{equation*}%
where we have used (\ref{kinetic A's}) in the gauge $A=0$. A similar
computation follows for the other brackets, the only difference is that for
the brackets involving the mixed components $T_{-+},T_{+-}$ we have to use
the fermion equations of motion in (\ref{gauged fix eq of motion}) and also
the relation $\left\{ F_{i}^{(-1/2)},F_{j}^{(-1/2)}\right\} =-2\delta
_{ij}\Lambda _{-}^{(-1)}$. The supercharge algebra follows directly after
noting that the differentials defined on $L_{+}$ and $L_{-}^{\prime }$
vanishes when restricted to $L_{-}^{\prime }$ and $L_{+}$ respectively,
getting $\left\{ Q^{+},\widetilde{Q}^{+}\right\} _{2}(L)=\left\{ m_{+},%
\widetilde{m}_{+}\right\} _{2}(L_{+})-\left\{ m_{-},\widetilde{m}%
_{-}\right\} _{2}(L_{-})$ and a similar relation for $\left\{ Q^{-},%
\widetilde{Q}^{-}\right\} _{2}(L^{\prime }).$ Taking into account (\ref{H
and P}), the result follows.
\end{proof}

Now, in order to find the supersymmetry field variations we need to find the
differentials associated to the physical fields, which are the simplest to
find. Let us first introduce the following quantities%
\begin{eqnarray*}
G_{\mu }^{(-1/2)} &=&\mu _{i}G_{i}^{(-1/2)},\text{ \ }G_{\nu }^{(+1/2)}\text{
}=\text{ }\nu _{i}G_{i}^{(+1/2)},\text{ \ }M_{x}^{(0)}\text{ }=\text{ }%
x_{a}M_{a}^{(0)}, \\
\left\langle G_{i}^{(-1/2)},G_{j}^{(+1/2)}\right\rangle &=&-\left\langle
G_{j}^{(-1/2)},G_{i}^{(+1/2)}\right\rangle \text{ }=\text{ }\delta _{ij},%
\text{ \ }\left\langle M_{a}^{(0)},M_{b}^{(0)}\right\rangle \text{ }=\text{ }%
\left\langle M_{b}^{(0)},M_{a}^{(0)}\right\rangle \text{ }=\text{ }\delta
_{ab},
\end{eqnarray*}%
where $G_{i}^{(-1/2)}\in \mathcal{M}_{F}^{(-1/2)},G_{i}^{(+1/2)}\in \mathcal{%
M}_{F}^{(+1/2)},$ $M_{a}^{(0)}\in \mathcal{M}_{B}^{(0)}$ span the image part 
$\mathcal{M}$ of the algebra $\widehat{\mathfrak{f}}$ and $x$ and $\mu ,\nu $
are even/odd constant parameters, respectively. Setting 
\begin{equation*}
\psi _{+}^{\left( +1/2\right) }=\psi _{i}G_{i}^{(+1/2)},\text{ \ }\psi
_{-}^{\left( -1/2\right) }=\overline{\psi }_{i}G_{i}^{(-1/2)},\text{ \ }%
A_{+}^{(0)}=A_{+a}^{(0)}\text{ }M_{a}^{(0)},\text{ \ }%
A_{-}^{(0)}=A_{-a}^{(0)}\text{ }M_{a}^{(0)}
\end{equation*}%
and using $L_{+},L_{-}^{\prime }$ in their original forms 
\begin{equation*}
L_{+}=\partial _{+}+\left( A_{+}^{(0)}+Q_{+}^{(0)}+\psi _{+}^{\left(
+1/2\right) }+\Lambda _{+}^{(+1)}\right) ,\text{ \ \ }L_{-}^{\prime
}=\partial _{-}-\left( A_{-}^{(0)}+Q_{-}^{(0)}+\psi _{-}^{\left( -1/2\right)
}+\Lambda _{-}^{(-1)}\right) ,
\end{equation*}%
we easily find the differentials associated to the dynamical fields%
\begin{eqnarray}
\mu \cdot \psi &=&\left\langle G_{\mu }^{(-1/2)},L_{+}\right\rangle \text{ }%
\rightarrow \text{ \ \ \ \ }d_{+}\left( \mu \cdot \psi \right) \text{ }=%
\text{ }G_{\mu }^{(-1/2)},  \label{image fields differentials} \\
\nu \cdot \overline{\psi } &=&-\left\langle G_{\nu }^{(+1/2)},L_{-}^{\prime
}\right\rangle \text{ }\rightarrow \text{ \ }d_{-}^{\prime }\left( \nu \cdot 
\overline{\psi }\right) \text{ }=\text{ }-G_{\nu }^{(+1/2)},  \notag \\
x\cdot A_{+}^{(0)} &=&\left\langle M_{x}^{(0)},L_{+}\right\rangle \text{ }%
\rightarrow \text{ \ \ \ }d_{+}\left( x\cdot A_{+}^{(0)}\right) \text{ }=%
\text{ }M_{x}^{(0)},  \notag \\
y\cdot A_{-}^{(0)} &=&\left\langle M_{y}^{(0)},L_{-}^{\prime }\right\rangle 
\text{ }\rightarrow \text{ \ \ \ }d_{-}^{\prime }\left( y\cdot
A_{-}^{(0)}\right) \text{ }=\text{ }-M_{y}^{(0)}.  \notag
\end{eqnarray}

The Poisson form of the supersymmetry flow transformations $\delta _{\pm
1/2} $ of the field components is encoded in the following proposition.

\begin{proposition}
The supersymmetry transformations of the fields $\Phi (x)$ are Hamiltonian
flows on the reduced phase space$\ $and are induced by the supercharges $%
Q^{\pm }$ in terms of the second Hamiltonian structure%
\begin{equation}
\rho \cdot \delta _{+1/2}\Phi (x)=-\left\{ Q^{+},\Phi (x)\right\} _{2}(L),%
\text{ \ \ \ \ \ }\lambda \cdot \delta _{-1/2}\Phi (x)=+\left\{ Q^{-},\Phi
(x)\right\} _{2}(L^{\prime }),  \label{poisson susy flows}
\end{equation}%
where $\rho ,\lambda $ are constant even/odd parameters depending
respectively, if the field $\Phi $ is even/odd. This definition\ is
equivalent to the following matrix supersymmetry transformations%
\begin{eqnarray}
\left( \delta _{+1/2}BB^{-1}\right) ^{\parallel }{} &=&\left[ \psi ^{\left(
-1/2\right) },D^{(+1/2)}\right] \text{ ,}  \label{poisson susy 1/2} \\
\delta _{+1/2}\psi _{+}^{\left( +1/2\right) } &=&\left[ \left( \partial
_{+}BB^{-1}\right) ^{\parallel },D^{(+1/2)}\right] \text{ ,}  \notag \\
\delta _{+1/2}\psi _{-}^{\left( -1/2\right) } &=&-\left[ \Lambda
_{-}^{(-1)},B^{-1}D^{(+1/2)}B\right] ,\text{\ }  \notag
\end{eqnarray}%
and%
\begin{eqnarray}
\left( B^{-1}\delta _{-1/2}B\right) ^{\parallel } &=&-\left[ \psi ^{\left(
+1/2\right) },D^{(-1/2)}\right] \text{ ,}  \label{poisson susy -1/2} \\
\delta _{-1/2}\psi _{+}^{\left( +1/2\right) } &=&\left[ \Lambda
_{+}^{(+1)},BD^{(-1/2)}B^{-1}\right] \text{ \ ,}  \notag \\
\delta _{-1/2}\psi _{-}^{\left( -1/2\right) } &=&-\left[ \left(
B^{-1}\partial _{-}B\right) ^{\parallel },D^{(-1/2)}\right] .  \notag
\end{eqnarray}%
which should be compared with (\ref{wrong I}), (\ref{wrong II}) in the $%
A=\theta =0$ gauge.
\end{proposition}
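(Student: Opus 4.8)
The plan is to verify the Poisson formula (\ref{poisson susy flows}) field by field, showing that evaluating the second bracket on the supercharges reproduces the algebraic variations (\ref{poisson susy 1/2}), (\ref{poisson susy -1/2}). The organizing principle is that $Q^{+}=m_{+}+m_{-}$ and $Q^{-}=n_{+}+n_{-}$ are built from pieces that are functionals of a single light-cone orbit: $m_{+},n_{+}$ live on $L_{+}$ and $m_{-},n_{-}$ on $L_{-}^{\prime}$, so the cross differentials $d_{+}m_{-}=d_{-}^{\prime}m_{+}=0$ vanish, exactly as used at the end of the proof of the supercharge algebra above. Together with the split $\{\cdot,\cdot\}_{2}(L)=\{\cdot,\cdot\}_{2}(L_{+})-\{\cdot,\cdot\}_{2}(L_{-})$ from the Definition, this means that for any phase-space field $\Phi$ only one term survives, $\{Q^{+},\Phi\}_{2}(L)=\{m_{+},\Phi\}_{2}(L_{+})-\{m_{-},\Phi\}_{2}(L_{-})$, where the first piece contributes when $\Phi$ is an $L_{+}$-field ($A_{+}^{(0)},\psi_{+}^{(+1/2)}$) and the second when $\Phi$ is an $L_{-}^{\prime}$-field ($\psi_{-}^{(-1/2)}$).

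First I would treat the fermionic variations, which are cleanest. Inserting the field differentials (\ref{image fields differentials}) and the supercharge differentials (\ref{supercharge differentials}) into the simplified brackets (\ref{simplified brackets}), I note that because $G_{\mu}^{(-1/2)},G_{\nu}^{(+1/2)}$ are constant, the $\partial_{\pm}$-derivative terms of (\ref{simplified brackets}) drop by grade counting, as do all but one grade of the $r$-matrix commutator when paired against $\Lambda_{\Theta}^{(+1)}$ (resp. $\Lambda_{\Pi^{\prime}}^{(-1)}$). For $\psi_{+}^{(+1/2)}$ this isolates $-\langle[\Lambda_{+}^{(+1)},(d_{+}m_{+})_{-1/2}],G_{\mu}^{(-1/2)}\rangle$; expanding $(d_{+}m_{+})_{-1/2}$ and repeatedly using the Jacobi identity, the centrality $[\Lambda_{\pm}^{(\pm1)},\mathcal{K}]=0$, and the Baker-Hausdorff identifications (\ref{Baker-Hausdorf fields}) turns the $[\psi^{(-1)},D^{(+1/2)}]$ term into $-[A_{+}^{(0)},D^{(+1/2)}]=[(\partial_{+}BB^{-1})^{\parallel},D^{(+1/2)}]$, while the cubic term lands in $\mathcal{K}$ and is annihilated by the pairing with $G_{\mu}^{(-1/2)}\in\mathcal{M}$ (using $\langle\mathcal{K},\mathcal{M}\rangle=0$). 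This reproduces the second line of (\ref{poisson susy 1/2}) after the overall sign in (\ref{poisson susy flows}). The variation of $\psi_{-}^{(-1/2)}$ comes entirely from $-\{m_{-},\cdot\}_{2}(L_{-}^{\prime})$ and, after the $B$-conjugation $d_{-}=B\,d_{-}^{\prime}B^{-1}$ relating the two representations, collapses to $-[\Lambda_{-}^{(-1)},B^{-1}D^{(+1/2)}B]$, i.e. the third line. The $\delta_{-1/2}$ transformations follow by the same computation with $L\to L^{\prime}$, $m\to n$, $\Lambda_{+}\to\Lambda_{-}$, and the conjugate normalization $\{F_{i}^{(-1/2)},F_{j}^{(-1/2)}\}=-2\delta_{ij}\Lambda_{-}^{(-1)}$.

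The main obstacle I expect is the bosonic grade-zero variation, the first line of (\ref{poisson susy 1/2}). Here the relevant functional is $x\cdot A_{+}^{(0)}=\langle M_{x}^{(0)},L_{+}\rangle$, and in the simplified bracket the $\partial_{+}$-derivative term no longer drops: it produces $\tfrac12(\partial_{+}[\psi^{(-1/2)},D^{(+1/2)}],M_{x}^{(0)})$, while the $r$-matrix term contributes through both the $A_{+}^{(0)}$ and $Q_{+}^{(0)}$ pieces of the grade-zero part of $\Lambda_{\Theta}^{(+1)}$. The delicate point is that $\delta_{+1/2}A_{+}^{(0)}$ is therefore not algebraic but carries a derivative, precisely as it must since $A_{+}^{(0)}=-(\partial_{+}BB^{-1})^{\parallel}$ and a variation $\delta BB^{-1}=\xi$ induces $\delta(\partial_{+}BB^{-1})=\partial_{+}\xi-[\partial_{+}BB^{-1},\xi]$. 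I would show that the derivative term and the commutator terms assemble into $-\bigl(\partial_{+}\xi-[\partial_{+}BB^{-1},\xi]\bigr)^{\parallel}$ with $\xi=[\psi^{(-1/2)},D^{(+1/2)}]$, invoking the fermionic equations of motion (\ref{gauged fix eq of motion}) and the constraints (\ref{new constraints}) to dispose of the $Q_{+}^{(0)}$ contributions; this is equivalent to the first line of (\ref{poisson susy 1/2}) and so completes the identification of the Poisson flows with the matrix transformations. The residual bookkeeping of Grassmann signs from $\rho,\lambda,\epsilon$ and the overall factors follows the pattern already established in the preceding proof.
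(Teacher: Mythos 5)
Your proposal follows essentially the same route as the paper's proof: restrict each piece $m_{\pm},n_{\pm}$ to its own light-cone orbit, evaluate the simplified $r$-bracket against the explicit field differentials (\ref{image fields differentials}) for the fermions, and obtain the $B$-variation indirectly by computing $\delta_{+1/2}A_{+}^{(0)}$ from the bracket and matching it against the variation induced on $A_{+}^{(0)}=-\left(\partial_{+}BB^{-1}\right)^{\parallel}$ by an arbitrary $\delta_{+1/2}BB^{-1}$, using the constraints (\ref{new constraints}). Two small slips to fix when you carry this out: in the grade-zero bracket only the $Q_{+}^{(0)}$ piece of $\Lambda_{\Theta}^{(+1)}$ survives, since $\left[\left[\psi^{(-1/2)},D^{(+1/2)}\right],M_{x}^{(0)}\right]\in\mathcal{K}$ is orthogonal to $A_{+}^{(0)}\in\mathcal{M}$; and the derivative term comes out with coefficient $1$ rather than $\tfrac{1}{2}$ once the localization of $x\cdot A_{+}^{(0)}(y)$ is handled (integrating the second derivative term by parts), which is exactly what is needed for the comparison with $-\partial_{+}\left(\delta_{+1/2}BB^{-1}\right)^{\parallel}$ to close without any factor mismatch.
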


\begin{proof}
We will prove for the $\delta _{+1/2}$ supersymmetry flow variations as for
the $\delta _{-1/2}$ the proof follows exactly the same lines. For the
components $\psi $ in $\psi _{+}^{\left( +1/2\right) }=\psi
_{i}G_{i}^{(+1/2)},$ we have (with $\mu $ fermionic) that 
\begin{equation*}
\mu \cdot \delta _{+1/2}\psi =-\left\{ m_{+},\mu \cdot \psi \right\}
_{2}(L_{+}),
\end{equation*}%
because $\psi $ and its differential vanishes when restricted to $%
L_{-}^{\prime }.$ Using the explicit form of the Lax connection 
\begin{equation*}
\Lambda _{\Theta }^{(+1)}=A_{+}^{(0)}+Q_{+}^{(0)}+\psi _{+}^{\left(
+1/2\right) }+\Lambda _{+}^{(+1)},
\end{equation*}%
the $r$-matrix $\mathcal{R}_{<}=\frac{1}{2}\left( \mathcal{P}_{0}-\mathcal{P}%
_{<0}\right) $ and the first equation in (\ref{simplified brackets}), we
conclude that there is not contribution from the derivative terms because $%
d_{+}\left( \mu \cdot \psi \right) $ has $Q$ grade $-1/2.$ Then, the
non-zero contribution to the bracket is\footnote{%
We have drop the integration of $\mu \cdot \psi (x)$ inside the bracket but
it have to be taken into account always when considering quantities in which
the derivative terms contribute.}%
\begin{equation*}
\left\{ m_{+},\mu \cdot \psi \right\} _{2}(L_{+})=\left\langle \Lambda
_{\Theta }^{(+1)},\left[ d_{+}m_{+},d_{+}(\mu \cdot \psi )\right] _{\mathcal{%
R}_{<}}\right\rangle =-\left\langle \Lambda _{+}^{(+1)},\left[ \left(
d_{+}m_{+}\right) _{-1/2},G_{\mu }^{(-1/2)}\right] \right\rangle .
\end{equation*}%
The only contribution from the differential $\left( d_{+}m_{+}\right)
_{-1/2} $ is the one along the image, which is given by $\left(
d_{+}m_{+}\right) _{-1/2}^{\parallel }=\left[ \psi ^{(-1)},D^{(+1/2)}\right]
.$ Now, using the ad-invariance of the inner product, the Jacobi identity,
the definition $A_{+}^{(0)}=\left[ \psi ^{(-1)},\Lambda _{+}^{(+1)}\right] $
and (\ref{image fields}) in the gauge $A=0,$ we find that%
\begin{equation*}
\text{ }\mu \cdot \delta _{+1/2}\psi =\left\langle \left[ \left( \partial
_{+}BB^{-1}\right) ^{\parallel },D^{(+1/2)}\right] ,G_{\mu
}^{(-1/2)}\right\rangle ,\text{ \ \ \ \ \ }\delta _{+1/2}\psi _{+}^{\left(
+1/2\right) }=\left[ \left( \partial _{+}BB^{-1}\right) ^{\parallel
},D^{(+1/2)}\right] ,
\end{equation*}%
where we have used $\mu \cdot \delta _{+1/2}\psi =\left\langle \delta
_{+1/2}\psi _{+}^{\left( +1/2\right) },G_{\mu }^{(-1/2)}\right\rangle $ in
order to isolate the supermatrix variation $\delta _{+1/2}\psi _{+}^{\left(
+1/2\right) }.$

Similarly, for the components $\overline{\psi }$ in $\psi _{-}^{\left(
-1/2\right) }=\overline{\psi }_{i}G_{i}^{(-1/2)},$ we get by the same
restriction argument applied this time to $\overline{\psi }$ on $L_{-},$
that 
\begin{equation*}
\nu \cdot \delta _{+1/2}\overline{\psi }=\left\{ m_{-},\nu \cdot \overline{%
\psi }\right\} _{2}(L_{-}).
\end{equation*}%
Using the explicit form of the Lax connection 
\begin{equation*}
\Lambda _{\Pi }^{(-1)}=B\left( \psi _{-}^{\left( -1/2\right) }+\Lambda
_{-}^{(-1)}\right) B^{-1},
\end{equation*}%
the projected components of the differential 
\begin{equation*}
\left( d_{-}m_{-}\right) _{+1/2}=-D^{(+1/2)},\text{ \ \ \ \ \ }d_{-}\left(
\nu \cdot \overline{\psi }\right) =-BG_{\nu }^{(+1/2)}B^{-1},
\end{equation*}%
the second line of (\ref{simplified brackets}) and noting that the
derivative terms does not contribute, we obtain%
\begin{equation*}
\left\{ m_{-},\nu \cdot \overline{\psi }\right\} _{2}(L_{-})=\left\langle
\Lambda _{\Pi }^{(-1)},\left[ d_{-}m_{-},d_{-}(\nu \cdot \overline{\psi })%
\right] _{\mathcal{R}_{>}}\right\rangle =\left\langle B\Lambda
_{-}^{(-1)}B^{-1},\left[ \left( d_{-}m_{-}\right) _{+1/2},BG_{\nu
}^{(+1/2)}B^{-1}\right] \right\rangle .
\end{equation*}%
From this, we easily obtain the variation%
\begin{equation*}
\text{ }\nu \cdot \delta _{+1/2}\overline{\psi }=-\left\langle \left[
\Lambda _{-}^{(-1)},B^{-1}D^{(+1/2)}B\right] ,G_{\nu }^{(+1/2)}\right\rangle
,\text{ \ \ \ \ \ }\delta _{+1/2}\psi _{-}^{\left( -1/2\right) }=-\left[
\Lambda _{-}^{(-1)},B^{-1}D^{(+1/2)}B\right] ,
\end{equation*}%
where we have used the ad-invariance of the inner product and $\nu \cdot
\delta _{+1/2}\overline{\psi }=\left\langle \delta _{+1/2}\psi _{-}^{\left(
-1/2\right) },G_{\nu }^{(+1/2)}\right\rangle $ to isolate $\delta
_{+1/2}\psi _{-}^{\left( -1/2\right) }.$

The transformation for $B$ is going to be found in an indirect way through
the transformation of the quantity $A_{+}^{(0)},$ which is more natural.
From the restriction of $A_{+}^{(0)}$ to $L_{+}$ we find (with $x$ bosonic)
that%
\begin{equation*}
x\cdot \delta _{+1/2}A_{+}^{(0)}=-\left\{ m_{+},x\cdot A_{+}^{(0)}\right\}
_{2}(L_{+}).
\end{equation*}%
The differentials entering the bracket are%
\begin{equation*}
\left( d_{+}m_{+}\right) _{0}=\left[ \psi ^{\left( -1/2\right) },D^{(+1/2)}%
\right] ,\text{ \ \ \ \ \ }d_{+}\left( x\cdot A_{+}^{(0)}\right) =M_{x}^{(0)}
\end{equation*}%
and by using (\ref{simplified brackets}) once more we obtain the non-zero
contributions to it%
\begin{eqnarray*}
\left\{ m_{+},x\cdot A_{+}^{(0)}\right\} _{2}(L_{+}) &=&\left\langle \Lambda
_{\Theta }^{(+1)},\left[ \left( d_{+}m_{+}\right) _{0},d_{+}\left( x\cdot
A_{+}^{(0)}\right) \right] \right\rangle +\left\langle \partial _{+}\left(
d_{+}m_{+}\right) _{0},d_{+}\left( x\cdot A_{+}^{(0)}\right) \right\rangle \\
&=&\left\langle \left[ Q_{+}^{(0)},\left[ \psi ^{\left( -1/2\right)
},D^{(+1/2)}\right] \right] +\partial _{+}\left[ \psi ^{\left( -1/2\right)
},D^{(+1/2)}\right] ,M_{x}^{(0)}\right\rangle
\end{eqnarray*}%
implying that%
\begin{equation}
\text{ }x\cdot \delta _{+1/2}A_{+}^{(0)}=-\left\langle \left[ Q_{+}^{(0)},%
\left[ \psi ^{\left( -1/2\right) },D^{(+1/2)}\right] \right] +\partial _{+}%
\left[ \psi ^{\left( -1/2\right) },D^{(+1/2)}\right] ,M_{x}^{(0)}\right%
\rangle .  \label{A}
\end{equation}

Now, in order to get the variation of the Toda field $B,$ we perform a $%
\delta _{+1/2}$ variation on the definition of $A_{+}^{(0)}$, i.e (\ref%
{image fields}) in the $A=0$ gauge$.$ This gives%
\begin{equation*}
\delta _{+1/2}A_{+}^{(0)}=-\partial _{+}\left( \delta _{+1/2}BB^{-1}\right)
^{\parallel }-\left[ \left( \delta _{+1/2}BB^{-1}\right) ^{\parallel
},\left( \partial _{+}BB^{-1}\right) ^{\perp }\right] -\left[ \left( \delta
_{+1/2}BB^{-1}\right) ^{\perp },\left( \partial _{+}BB^{-1}\right)
^{\parallel }\right] .
\end{equation*}%
Using (\ref{image fields}), the last line of (\ref{gauged fix eq of motion})
and projecting the resulting expression with $M_{x}^{(0)},$ we obtain%
\begin{equation}
\text{ }x\cdot \delta _{+1/2}A_{+}^{(0)}=-\left\langle \left[
Q_{+}^{(0)},\left( \delta _{+1/2}BB^{-1}\right) ^{\parallel }\right]
+\partial _{+}\left( \delta _{+1/2}BB^{-1}\right) ^{\parallel
},M_{x}^{(0)}\right\rangle -\left\langle \left[ A_{+}^{(0)},\left( \delta
_{+1/2}BB^{-1}\right) ^{\perp }\right] ,M_{x}^{(0)}\right\rangle .  \label{B}
\end{equation}%
Finally, by comparing (\ref{A}) and (\ref{B}) we conclude that%
\begin{equation*}
\left( \delta _{+1/2}BB^{-1}\right) ^{\parallel }=\left[ \psi ^{\left(
-1/2\right) },D^{(+1/2)}\right] ,\text{ \ \ \ \ \ }\left( \delta
_{+1/2}BB^{-1}\right) ^{\perp }=0.
\end{equation*}

Following exactly the same steps for the $\delta _{-1/2}$ variation we
arrive to the final result.
\end{proof}

At this point we have an aparent discrepancy between the transformations (%
\ref{poisson susy 1/2}), (\ref{poisson susy -1/2}) and (\ref{wrong I}), (\ref%
{wrong II}). One set is local while the other is not because of the presence
of the $\theta ^{(\pm 1/2)}$ terms (\ref{non-local teta}). The first set do
not preserve the constraints (\ref{new constraints}) while the second set
do. The easiest way to explain this difference is to work out the Noether
procedure on reverse\footnote{%
I want to thank Tim Hollowood for discussions on this point.}. We will do it
only for the $\delta _{+1/2}$ variations, for the obvious reason.

Consider the arbitrary variation (\ref{arbitrary variation}) with $\delta
=\delta _{+1/2}$ and without gauge fields, i.e the action functional (\ref%
{Abelian toda action}). Taking into account (\ref{U}) and (\ref{U ambiguous}%
) we get%
\begin{eqnarray}
\frac{2\pi }{k}\delta _{+1/2}S_{Toda}[B,\psi ] &=&\dint_{\Sigma
}\left\langle \left( \partial _{+}K_{-}^{(-1/2)}-\partial
_{-}K_{+}^{(-1/2)}\right) ,\left( U_{-}^{-1}\delta _{+1/2}\Phi U_{-}\right)
\right\rangle  \label{1/2 variation} \\
&=&\dint_{\Sigma }\left\langle \left( \partial _{+}\widetilde{K}%
_{-}^{(-1/2)}-\partial _{-}\widetilde{K}_{+}^{(-1/2)}\right) ,\left( 
\widetilde{S}_{-}^{-1}U_{-}^{-1}\delta _{+1/2}\Phi U_{-}\widetilde{S}%
_{-}\right) \right\rangle ,  \notag
\end{eqnarray}%
where $\delta _{+1/2}\Phi =\delta _{+1/2}BB^{-1}-B\delta _{+1/2}\psi
^{(+1/2)}B^{-1}-\delta _{+1/2}\psi ^{(-1/2)}$ and where we have to solve the
relations%
\begin{equation*}
\left( U_{-}^{-1}\delta _{+1/2}\Phi U_{-}\right) _{+1/2}^{\perp }=\left( 
\widetilde{S}_{-}^{-1}U_{-}^{-1}\delta _{+1/2}\Phi U_{-}\widetilde{S}%
_{-}\right) _{+1/2}^{\perp }=D^{(+1/2)}.
\end{equation*}%
Taking respectively, the following solutions 
\begin{equation*}
\delta _{+1/2}\Phi =\left( U_{-}D^{(+1/2)}U_{-}^{-1}\right) _{\geq
-1/2}=\left( U_{-}\widetilde{S}_{-}D^{(+1/2)}\widetilde{S}%
_{-}^{-1}U_{-}^{-1}\right) _{\geq -1/2},
\end{equation*}%
with $\widetilde{S}_{-}=\exp \left( \theta ^{(-1/2)}+\theta
^{(-1)}+...\right) $, $\theta ^{(r)}\in $ $\mathcal{K}$ and using (\ref%
{Baker-Hausdorf fields}), we get the local form (\ref{poisson susy 1/2}), (%
\ref{poisson susy -1/2}) corresponding to the local canonical DS
supercharges (\ref{mKdV supercharges}) and the non-local form (\ref{wrong I}%
), (\ref{wrong II}) corresponding to the gauge transformation of the
supercharges (\ref{mKdV supercharges}), respectively$.$ This shows that we
can always perform a compensating gauge transformation $U_{\pm }\rightarrow
U_{\pm }\widetilde{S}_{\pm }$ in order to preserve the constraints (\ref{new
constraints}) and this selects the special value of $\theta ^{(\pm 1/2)}$ to
be (\ref{non-local teta}). This means that the flows $\delta _{\pm 1/2}$ are
symmetry flows on the coadjoint orbits $L_{+},L_{-}^{\prime }.$

Let us compute the algebra of the supersymmetry flows (\ref{poisson susy 1/2}%
), (\ref{poisson susy -1/2}) in the perturbative limit $B=1+b.$ Using the
fermionic equations of motion given by (\ref{gauged fix eq of motion}), we
find, as expected from (\ref{charge algebra}) that%
\begin{equation}
\left[ \delta _{+1/2},\delta _{+1/2}^{\prime }\right] =2\text{ }\epsilon
\cdot \epsilon ^{\prime }\text{ }\partial _{+},\text{ \ \ \ \ \ }\left[
\delta _{-1/2},\delta _{-1/2}^{\prime }\right] =2\text{ }\overline{\epsilon }%
\cdot \overline{\epsilon }^{\prime }\text{ }\partial _{-},
\label{un-deformed}
\end{equation}%
where we have used $\left[ D^{(+1/2)},D^{\prime (+1/2)}\right] =2$ $\epsilon
\cdot \epsilon ^{\prime }\Lambda _{+}^{(+1)},$ $\left[ D^{(-1/2)},D^{\prime
(-1/2)}\right] =-2$ $\overline{\epsilon }\cdot \overline{\epsilon }^{\prime
}\Lambda _{-}^{(-1)},$ the jacobi identity and the relations (\ref%
{Baker-Hausdorf fields}). However, for the mixed terms we obtain%
\begin{equation}
\left[ \delta _{+1/2},\delta _{-1/2}\right] =\delta _{H},  \label{deformed}
\end{equation}%
where $\delta _{H}=\left[ \omega ,\right] $ is a gauge transformation with
parameter $\omega =\left[ D^{(+1/2)},D^{(-1/2)}\right] \in \mathcal{K}%
_{B}^{(0)}$. In principle, for the mixed bracket we should obtain $\left\{
Q^{+},Q^{-}\right\} _{2}\sim Q_{H},$ where $Q_{H}$ is the Noether charge
corresponding to the gauge group $H,$ but the corresponding Noether current
components $J^{\mu }$ are precisely the constraints (\ref{new constraints})
which vanishes identically. Then, the non-zero contribution to the conserved
charge must come from the ambiguous term in the definition of a conserved
current $\mathcal{J}^{\mu }$=$J^{\mu }+\varepsilon ^{\mu \nu }\partial _{\nu
}\mathcal{F}$, with $\mathcal{F}$ and arbitrary function. For this reason we
have not writen the mixed bracket $\left\{ Q^{+},Q^{-}\right\} _{2}$ above$.$
A more refined study of the supercharge algebra in terms of the subtracted
monodromy\footnote{%
In \cite{class-quant solitons} it was shown, by using monodromy matrix
arguments, that $\mathcal{F}$ is unambiguously fixed and that the conserved
charge $Q_{H}$ is of a kink type given by $B(+\infty )B(-\infty )^{-1}=\exp
Q_{H}.$} matrix will be done elsewhere \cite{us}. For the moment, we content
ourselves with the result (\ref{charge algebra}), but let us note that the
perturbative excitations $b^{\parallel },\psi _{\pm }^{(\pm 1/2)}$ do
transform under the kernel algebra $\mathcal{K=}$ $\widehat{\mathfrak{f}}%
^{\perp }$ through the variations $\delta _{\mathcal{K}}.$

Let us end this section by writing the brackets for the fundamental fields $%
B $ and $\psi ^{(\pm 1/2)}$. To do this, it is more convenient to use (\ref%
{Lax operators}) and (\ref{prime Lax operators}) in the gauge $A=0.$
Defining the currents $J_{+}=-\partial _{+}BB^{-1},$ $J_{-}=B^{-1}\partial
_{-}B,$ which have to be considered as functionals on $L_{+}$ and $%
L_{-}^{\prime }$ respectively, and introducing a normalized basis for $%
\widehat{\mathfrak{f}}_{0}=span(T_{a}^{(0)})$ with $\left\langle
T_{a}^{(0)},T_{b}^{(0)}\right\rangle =\delta _{ab},$ we find the current
differentials%
\begin{equation*}
d(\lambda \cdot J_{+})=T_{\lambda }^{(0)},\text{ \ \ \ \ \ }d_{-}^{\prime
}(\lambda \cdot J_{-})=T_{\lambda }^{(0)},
\end{equation*}%
where $\lambda \cdot J_{\pm }=\lambda _{a}J_{\pm }^{a},$ $T_{\lambda
}^{(0)}=\lambda _{a}T_{a}^{(0)},$ $a=1,...,\dim \widehat{\mathfrak{f}}^{(0)}$
with $\lambda $ bosonic constants$.$ Adding the fermionic field
differentials (\ref{image fields differentials}) and using (\ref{spatial 2
bracket}), (\ref{simplified brackets}) we find the brackets%
\begin{eqnarray}
\left\{ \lambda \cdot J_{+}(x),\rho \cdot J_{+}(y)\right\} _{2}(L) &=&\delta
\left( x-y\right) \left\langle J_{+}(x),\left[ T_{\lambda }^{(0)},T_{\rho
}^{(0)}\right] \right\rangle -\delta ^{\prime }\left( x-y\right)
\left\langle T_{\lambda }^{(0)},T_{\rho }^{(0)}\right\rangle ,
\label{SSSSG model bracket} \\
\left\{ \mu \cdot \psi (x),\nu \cdot \psi (y)\right\} _{2}(L) &=&-\mu \cdot
a^{+}\cdot \nu \delta \left( x-y\right) ,  \notag
\end{eqnarray}%
where we have assumed that $\left[ \Lambda _{+}^{(+1)},G_{i}^{(-1/2)}\right]
=a_{ij}^{+}G_{j}^{(+1/2)}.$ For $J_{-}$ and $\overline{\psi },$ the brackets
in $L^{\prime }$ are quite similar but with $a_{ij}^{+}\rightarrow
-a_{ij}^{-}.$ Then, the second Hamiltonian structure is of Kac-Moody type%
\footnote{%
This was already noticed in \cite{generalized DS II} in the purely bosonic
case. See the last example worked out in that reference.}.

Finally, if we compute the canonical brackets for the action functional (\ref%
{Abelian toda action}) we arrive to the conclusion that $\left\{ X,Y\right\}
_{2}\sim \left\{ X,Y\right\} _{WZNW}.$ It would be interesting to
investigate more deeply the relation between the brackets $\left\{
X,Y\right\} _{1,2}$ and the brackets $\left\{ X,Y\right\} ^{[\pm 2],[0]}$
introduced in \cite{Mikhailov}, which studies the connection between the
Hamiltonian structures of the Green-Schwarz action and the generalized
sine-Gordon models involved in the reduction of the $AdS_{5}\times S^{5}$
superstring sigma model. This will be done elsewhere \cite{us}.

\section{Pohlmeyer reduction of GS superstring sigma models.}

In this chapter we briefly review the steps involved in the reduction of
Green-Schwarz (GS) superstring sigma models on semi-symmetric superspaces.
The aim is to show how emerges the extended homogeneous hierarchy integrable
structure defined above and also to show the connection between the number
of fermionic symmetry flows with the elements in the rank of the kappa
symmetry.

A coset $F/G$ of a supergroup $F$ is a semi-symmetric superspace if it is
invariant under a $%
\mathbb{Z}
_{4}$ symmetry and the superalgebra admits the decomposition (\ref{Z4
grading}), $\mathfrak{f=f}_{0}\mathfrak{\oplus f}_{1}\mathfrak{\oplus f}_{2}%
\mathfrak{\oplus f}_{3},$ which is consistent with the relations $[\mathfrak{%
f}_{i},\mathfrak{f}_{j}]\subset \mathfrak{f}_{(i+j)\func{mod}4},$ the
subspace $\mathfrak{f}_{j}$ is defined by $\Omega (\mathfrak{f}_{j})=i^{j}$ $%
\mathfrak{f}_{j}$ and the denominator subalgebra is the invariant subspace $%
\mathfrak{f}_{0}.$

We assume that the following decomposition of the bosonic subalgebra $%
\mathfrak{f}_{B}=\mathfrak{f}_{0}\mathfrak{\oplus f}_{2}$ also holds%
\begin{eqnarray}
\mathfrak{f}_{0} &=&\mathfrak{m}_{0}\oplus \mathfrak{h}_{0},\text{ \ }%
\mathfrak{f}_{2}\text{ }=\text{ }\mathfrak{a}_{2}\oplus \mathfrak{n}_{2},%
\text{ \ }\left[ \mathfrak{a}_{2},\mathfrak{a}_{2}\right] \text{ }=\text{ }0,%
\text{ \ }\left[ \mathfrak{a}_{2},\mathfrak{h}_{0}\right] \text{ }=\text{ }0,%
\text{ \ }\left[ \mathfrak{h}_{0},\mathfrak{h}_{0}\right] \text{ }\subset 
\text{ }\mathfrak{h}_{0},\text{\ }  \label{finite decomposition} \\
\left[ \mathfrak{m}_{0},\mathfrak{m}_{0}\right] &\subset &\mathfrak{h}_{0},%
\text{ \ }\left[ \mathfrak{h}_{0},\mathfrak{m}_{0}\right] \text{ }\subset 
\text{ }\mathfrak{m}_{0},\text{ \ }\left[ \mathfrak{m}_{0},\mathfrak{a}_{2}%
\right] \text{ }\subset \text{ }\mathfrak{n}_{2},\text{ \ }\left[ \mathfrak{a%
}_{2},\mathfrak{n}_{2}\right] \text{ }\subset \text{ }\mathfrak{m}_{0}, 
\notag
\end{eqnarray}%
where $\mathfrak{a}_{2}$ is a maximal Abelian subalgebra and $\mathfrak{h}%
_{0}$ its centralizer. The algebra $\mathfrak{h}_{0}$ turns out to be
related to the gauge flows $H_{L}\times H_{R}$ introduced above.

The action of the sigma model can be written in terms of the $%
\mathbb{Z}
_{4}$ decomposition of the current%
\begin{equation}
J_{\mu }=f^{-1}\partial _{\mu }f=\mathcal{A}_{\mu }+Q_{1\mu }+P_{\mu
}+Q_{3\mu },  \label{z current decomposition}
\end{equation}%
where $\mathcal{A}_{\mu }\in \mathfrak{f}_{0},$ $Q_{1\mu }\in \mathfrak{f}%
_{1},$ $P_{\mu }\in \mathfrak{f}_{2},$ $Q_{3\mu }\in \mathfrak{f}_{3}.$ The
current $J$ is invariant under global left $F$-gauge transformations $%
f(x)\rightarrow f^{\prime }f(x)$ and under local right $G$-gauge
transformations $f(x)\rightarrow f(x)g(x),$ the component $\mathcal{A}$
transform as a gauge connection, while the others components transform
covariantly (in the adjoint).

The action of the non linear sigma model must be gauge invariant and $%
\mathbb{Z}
_{4}$ symmetric and we are interested here in those models which are
described by the following (Green-Schwarz) action%
\begin{equation}
S_{GS}=\frac{1}{2\kappa ^{2}}\dint \left\langle \gamma ^{\mu \nu }P_{\mu
}P_{\nu }+\epsilon ^{\mu \nu }Q_{1\mu }Q_{3\nu }\right\rangle ,
\label{GS lagrangian}
\end{equation}%
where $\gamma ^{\mu \nu }=\sqrt{-g}g^{\mu \nu },$ $g_{\mu \nu }$ is the two
dimensional world-sheet metric, $\epsilon _{\mu \nu }$ the anti-symmetric
symbol and $\kappa $ is the sigma model coupling. The action (\ref{GS
lagrangian}) is invariant under 2D conformal transformations, $G$-gauge
transformations and $\kappa $-symmetry, which is a local fermionic gauge
transformation and all this means that some of the bosonic and fermionic
degrees of freedom in the lagrangian (\ref{GS lagrangian}) are un-physical.

The Pohlmeyer reduction consist of a classical removal of all the
un-physical degrees of freedom and such a reduction is performed by gauge
fixing all the above mentioned symmetries. We now briefly show how this is
achieved, for full details see the original paper \cite{tseytlin}, see also 
\cite{Mikha II}.

Start by writing (\ref{GS lagrangian}) in the conformal gauge\footnote{%
We use here the light-cone convention $\gamma ^{+-}=\gamma ^{-+}=1$ and $%
\epsilon ^{+-}=-\epsilon ^{-+}=1$ of \cite{tseytlin}.}%
\begin{equation}
\mathcal{L}_{GS}=\left\langle P_{+}P_{-}+\frac{1}{2}\left(
Q_{1+}Q_{3-}-Q_{1-}Q_{3+}\right) \right\rangle .
\label{light cone GS lagrangian}
\end{equation}%
The classical field theory is describe by the equations of motion extracted
from varying the action functional of (\ref{light cone GS lagrangian}) 
\begin{eqnarray}
0 &=&\partial _{+}P_{-}+\left[ \mathcal{A}_{+},P_{-}\right] +\left[
Q_{3+},Q_{3-}\right] ,  \label{sigma model eq of motion} \\
0 &=&\partial _{-}P_{+}+\left[ \mathcal{A}_{-},P_{+}\right] +\left[
Q_{1-},Q_{1+}\right] ,  \notag \\
0 &=&\left[ P_{+},Q_{1-}\right] =\left[ P_{-},Q_{3+}\right]  \notag
\end{eqnarray}%
and by the Maurer-Cartan identity for the flat current (\ref{z current
decomposition})%
\begin{equation}
\partial _{+}J_{-}-\partial _{-}J_{+}+\left[ J_{+},J_{-}\right] =0,
\label{maurer-cartan}
\end{equation}%
both supplemented with the Virasoro constraints%
\begin{equation}
\left\langle P_{+},P_{+}\right\rangle =0,\text{ \ \ \ \ \ }\left\langle
P_{-},P_{-}\right\rangle =0.  \label{virasoro constraints}
\end{equation}

Now, by using the polar decomposition theorem and the $G$-gauge freedom on
the coset we can go to the so-called reduction gauge in which the components
of the current $P_{\mu }\in \mathfrak{f}_{2}$ can be put in the following
form 
\begin{equation}
P_{+}=\mu _{+}(x^{+})\Lambda _{+},\text{ \ \ \ \ \ }P_{-}=\mu
_{-}(x^{-})B\Lambda _{-}B^{-1},  \label{x-dependet virasoro sol}
\end{equation}%
where $\Lambda _{\pm }\in \mathfrak{a}_{2}$ are constant elements in $%
\mathfrak{f},$ $\mu _{\pm }(x^{\pm })$ are functions of $x^{\pm }$ only and $%
B\in G=\exp \mathfrak{f}_{0}$ belongs to the sigma model gauge group, i.e
the denominator group in the coset $F/G.$ For the semi-symmetric-space
cosets of interest, we have that $\dim \mathfrak{a}_{2}=1$ meaning that $%
\Lambda _{+}=\Lambda _{-}=\Lambda $ is unique$.$ We use this constant
element $\Lambda ,$ which is also semisimple$,$ to introduce the $%
\mathbb{Z}
_{2}$ superalgebra decomposition (\ref{kernel-image finite}) with $\mathfrak{%
f}=$ $\mathfrak{f}^{\perp }\oplus \mathfrak{f}^{\parallel }.$

The $\kappa $-symmetry can be partially fixed by the gauge condition $%
Q_{1-}=0$ and $Q_{3+}=0$ and this simplifies the equations (\ref{sigma model
eq of motion})$.\ $Replacing (\ref{x-dependet virasoro sol}) into (\ref%
{sigma model eq of motion}), using (\ref{virasoro constraints}) and the
residual conformal transformations we can set the functions $\mu _{\pm
}(x^{\pm })$ to constants $\mu _{\pm }$ . The remaining equations of motion
in (\ref{sigma model eq of motion}) are equations for the gauge field
components $\mathcal{A}_{\pm }$ only and allow the following solution%
\begin{equation}
\mathcal{A}_{+}=-(\partial _{+}BB^{-1}+BA_{+}^{(R)}B^{-1}),\text{ \ \ \ \ \ }%
\mathcal{A}_{-}^{\perp }=-A_{-}^{(L)},\text{ \ \ \ \ \ }\mathcal{A}%
_{-}^{\parallel }=0,  \notag
\end{equation}%
where $A_{+}^{(R)},A_{-}^{(L)}\in \mathfrak{f}_{0}^{\perp }=\mathfrak{h}_{0}$
belong now to a subalgebra of the former sigma model gauge algebra $%
\mathfrak{f}_{0}$, see(\ref{finite decomposition}).$\ $All this exhaust the
equations of motion in (\ref{sigma model eq of motion}).

The solutions of the Virasoro constraints (\ref{virasoro constraints}) are
now\footnote{%
From now we normalize $\mu _{\pm }\rightarrow 1$.}%
\begin{equation}
P_{+}=\mu _{+}\Lambda _{+},\text{ \ \ \ \ \ }P_{-}=\mu _{-}B\Lambda
_{-}B^{-1}\text{\ }  \label{virasoro solution}
\end{equation}%
and the gauge group that preserves these solutions and the Virasoro
"surface" (\ref{virasoro constraints}) is precisely the $H_{L}\times H_{R}$
gauge symmetry introduced above, $\widetilde{B}=\Gamma _{L}B\Gamma _{R}$.
This explains the use of the notation $L,R.$

In terms of the new bosonic field variables $A_{+}^{(R)},A_{-}^{(L)}\in 
\mathfrak{f}_{0}^{\perp },$ $B\in G$ and in the gauge $Q_{1-}=0,$ $Q_{3+}=0,$
the (remaining) Maurer-Cartan equations (\ref{maurer-cartan}) can be put in
the form 
\begin{eqnarray}
D_{-}^{(L)}Q_{1+} &=&\left[ \Lambda _{+},Q_{3-}\right] ,
\label{pre final maurer cartan} \\
D_{-}^{(L)}\left( \partial _{+}BB^{-1}+BA_{+}^{(R)}B^{-1}\right) -\partial
_{+}A_{-}^{(L)} &=&-\left[ \Lambda _{+},B\Lambda _{+}B^{-1}\right] -\left[
Q_{1+},Q_{3-}\right] ,  \notag \\
D_{+}^{(R)}\left( B^{-1}Q_{3-}B\right) &=&\left[ \Lambda _{-},B^{-1}Q_{1+}B%
\right] ,  \notag
\end{eqnarray}%
where $D_{-}^{(L)}=\partial _{-}-\left[ A_{-}^{(L)},\right] ,$ $%
D_{+}^{(R)}=\partial _{+}-\left[ A_{+}^{(R)},\right] $ are the covariant
derivatives for the $H_{L}\times H_{R}$ action of the gauge group and $%
\widetilde{Q}_{1+}=\Gamma _{L}Q_{1+}\Gamma _{L}^{-1},$ $\widetilde{Q}%
_{3-}=\Gamma _{L}Q_{3-}\Gamma _{L}^{-1}$ are the gauge transformations for
the remaining fermionic current components$.$ At this point it is worth it
to compare these equations with (\ref{NA affine toda equations}).

We now make the following change of field variables $Q_{1+},Q_{3-}%
\rightarrow \Psi _{1},\Psi _{3}$ defined by 
\begin{equation*}
\Psi _{1}=Q_{1+}\text{, \ \ \ \ \ }\Psi _{3}=-B^{-1}Q_{3-}B.\text{\ }
\end{equation*}%
The $\kappa $-symmetry is completely fixed by putting to zero all the
components of $\Psi _{1},\Psi _{3}$ belonging to the fermionic part of
kernel, i.e $\mathfrak{f}_{1}^{\perp },\mathfrak{f}_{3}^{\perp }.\ $The
remaining components in the fermionic part in the Image, i.e $\mathfrak{f}%
_{1}^{\parallel },\mathfrak{f}_{3}^{\parallel }$ are the truly fermionic
physical degrees of freedom%
\begin{equation}
\Psi _{L}=\Psi _{1}^{\parallel },\text{ \ \ \ \ \ }\Psi _{R}=\Psi
_{3}^{\parallel }.  \label{physical fermions}
\end{equation}

In terms of the Pohlmeyer reduced model field content $%
A_{+}^{(R)},A_{-}^{(L)},$ $B$, $\Psi _{R}$ and $\Psi _{L},$ the
Maurer-Cartan equations (\ref{pre final maurer cartan}) can be written, with
the help of a spectral parameter $z,$ as the compatibility condition of the
following Lax pair%
\begin{eqnarray}
L_{+}(A) &=&\partial _{+}-\partial
_{+}BB^{-1}-BA_{+}^{(R)}B^{-1}+(z^{+1/2}\Psi _{L})+(z\Lambda _{+}),
\label{Tseytlin Lax} \\
L_{-}(A) &=&\partial _{-}-A_{-}^{(L)}-B\left[ (z^{-1/2}\Psi
_{R})+(-z^{-1}\Lambda _{-})\right] B^{-1}.  \notag
\end{eqnarray}%
Making the following identifications%
\begin{eqnarray*}
(z^{+1/2}\Psi _{R}) &=&\psi _{+}^{\left( +1/2\right) }\text{, \ \ \ \ \ \ \
\ \ \ \ }(z\Lambda _{+})\text{ }=\text{ }\Lambda _{+}^{(+1)}, \\
\text{\ }(z^{-1/2}\Psi _{L}) &=&\psi _{-}^{\left( -1/2\right) }\text{, \ \ \
\ \ }(-z^{-1}\Lambda _{-})\text{ }=\text{ }\Lambda _{-}^{(-1)}
\end{eqnarray*}%
and comparing with (\ref{Lax operators}) we see that the Lax operators (\ref%
{Tseytlin Lax}) and (\ref{Lax pair for action}) describe the same integrable
hierarchy (the extended homogeneous hierarchy). From (\ref{loop algebra}) we
see that the fermionic identifications are consistent%
\begin{equation*}
(z^{+1/2}\Psi _{L})=\left[ \psi ^{\left( -1/2\right) },\Lambda _{+}^{(+1)}%
\right] ,\text{ \ \ \ \ \ }(z^{-1/2}\Psi _{R})=-\left[ \psi ^{\left(
+1/2\right) },\Lambda _{-}^{(-1)}\right] \text{\ }
\end{equation*}%
because $\Psi _{L}\in \mathfrak{f}_{1}^{\parallel },$ $\psi ^{\left(
-1/2\right) }\in \widehat{\mathfrak{f}}_{3}^{\parallel }$ and $\Psi _{R}\in 
\mathfrak{f}_{3}^{\parallel },$ $\psi ^{\left( +1/2\right) }\in \widehat{%
\mathfrak{f}}_{1}^{\parallel }$. Recall that $ad(\Lambda )$ maps the
subspaces $\mathfrak{f}_{1,3}\rightarrow \mathfrak{f}_{3,1}$ each other. To
obtain the equivalent formulation $L_{\pm }^{\prime }(A)$ we use the $B$%
-conjugated solution of (\ref{virasoro solution}).

These equations are invariant under the gauge group $H_{L}\times H_{R}$ and
all the results found above also applies here$.$ Note that $\mathfrak{f}%
_{0}^{\perp }=\mathfrak{h}_{0}\subset \ker \left( ad(\Lambda )\right) $
generates gauge flows associated to grade zero elements after embedding $%
\mathfrak{f}$ into the loop algebra $\widehat{\mathfrak{f}}$ in the form (%
\ref{loop algebra})$.$ The Pohlmeyer reduction in now clear. It states that (%
\ref{GS lagrangian}) and (\ref{NA Toda action}) describes the same classical
field theory\footnote{%
The action (\ref{NA Toda action}) is slightly different than the one
originally constructed in \cite{tseytlin} in the particular case of the $%
AdS_{5}\times S^{5}$ superstring sigma model, the difference is in the
potential term and it is not essential.}. Note that the net effect of the
reduction is to trade the Euler-Lagrange equations (\ref{sigma model eq of
motion}) of (\ref{GS lagrangian}) by the Maurer-Cartan identities (\ref%
{maurer-cartan}) with associated Lax Pair (\ref{Tseytlin Lax}), which
correspond now to the Euler-Lagrange equations of the SSSSG action (\ref{NA
Toda action}). The local gauge symmetry is reduced from the right $G$-gauge
action on the coset $F/G$ of the GS sigma model to the left-right $H$-gauge
action on the coset $G/H$ of the gauged WZNW model coupled to fermions and
the fields which include the un-physical degrees of freedom are reduced from 
$P_{\mu },Q_{1\mu },Q_{3\mu }$ to $B,\psi _{\pm }^{(\pm 1/2)},A_{\pm
}^{(R/L)},$ which have physical degrees of freedom only$.$ Concerning the
local $\kappa $-symmetry, it seems to be that its global remnant is related
to the existence of 2D world-sheet extended supersymmetry in the reduced
models. We do not have a formal proof of this statement but we will provide
some evidence that this is the case because, as discussed above, the
elements in the fermionic kernel $\mathcal{K}_{F}^{(\pm 1/2)}$ generate $dim%
\mathcal{K}_{F}^{(\pm 1/2)}$ global symmetry flows with conserved
supercharges (\ref{AKNS supercharges}). Note that physical fields are
parametrized in $\mathcal{M}$ while the symmetries are associated to the
sub-superalgebra $\mathcal{K\subset }$ $\widehat{\mathfrak{f}}$ and we can
see now the role of the dressing flow equations (\ref{flow equations}), (\ref%
{gauge equivalent variations}): they generate global world-sheet symmetries $%
\delta _{\mathcal{K}}$ in the reduced model from the loop algebra $\mathcal{K%
}$ constructed out of the subalgebra $\mathfrak{f}^{\perp }\subset \mathfrak{%
f}$ of the global target space symmetry of the sigma model$.$

We consider now the rank of the $\kappa $-symmetry. Following \cite{zarembo}%
, it is defined to be the number of fermionic generators in $\mathfrak{f}%
_{1} $ and $\mathfrak{f}_{3}$ which are annihilated by the adjoint action of 
$P_{\pm }\in \mathfrak{f}_{2}$%
\begin{equation}
N_{\kappa }=\dim \ker \left( ad(P_{-})\mid \mathfrak{f}_{3}\right) ,\text{ \
\ \ \ \ }N_{\widetilde{\kappa }}=\dim \ker \left( ad(P_{+})\mid \mathfrak{f}%
_{1}\right)  \label{kappa rank}
\end{equation}%
with $P_{\pm }$ satisfying the Virasoro constraints (\ref{virasoro
constraints}). We assume that we are in the situation of generic classical
solutions which is when the number of zero modes $N_{\kappa },$ $N_{%
\widetilde{\kappa }}$ are field (i.e $P_{\pm }$) independent and are
determined entirely by the Lie superalgebra properties of the algebra $%
\mathfrak{f}$ of interest. This means that we can use $ad\left( \Lambda
_{\pm }\right) $ instead of $ad(P_{\pm })$ to compute the dimensions (\ref%
{kappa rank}). In \cite{zarembo} the dimensions (\ref{kappa rank}) were
calculated for several Lie superalgebras admitting the $%
\mathbb{Z}
_{4}$ decomposition (\ref{Z4 grading}). Here we write some of those
dimensions of our interest%
\begin{eqnarray}
AdS_{2}\times S^{2} &\rightarrow &N_{\kappa }=N_{\widetilde{\kappa }}=2,
\label{table} \\
AdS_{3}\times S^{3} &\rightarrow &N_{\kappa }=N_{\widetilde{\kappa }}=4, 
\notag \\
AdS_{4}\times 
\mathbb{C}
P^{3} &\rightarrow &N_{\kappa }=N_{\widetilde{\kappa }}=4,  \notag \\
AdS_{5}\times S^{5} &\rightarrow &N_{\kappa }=N_{\widetilde{\kappa }}=8. 
\notag
\end{eqnarray}%
These are exactly the number of supersymmetry flows generated by the
supercharges (\ref{AKNS supercharges}). In our notation and for the first
two models in the list we have $\dim \mathcal{K}_{F}^{(\pm 1/2)}=2$ and $%
\dim \mathcal{K}_{F}^{(\pm 1/2)}=4,$ respectively. See (\ref{psu susy flows
I}),(\ref{psu susy flows II}),(\ref{psuxpsu susy flows I}) and (\ref{psuxpsu
susy flows II}) for the explicit expressions of the generators in $\mathcal{K%
}_{F}$.

The $AdS_{5}\times S^{5}$ case is treated in detail in \cite{SSSSG
AdS(5)xS(5)}, where it is shown that the symmetry algebra of the solitonic
spectrum of the reduced model, in semi-classical quantization, is precisely
the kernel algebra $\mathcal{K}$, which in this case turns out to be
isomorphic to a centrally extended $su(2\mid 2)^{\times 2}$ superalgebra$.$
This algebra has $8+8$ fermionic elements in its odd part generating $16$
supersymmetry flows and $12$ bosonic elements in its even part generating
the gauge algebra $su(2)^{\times 4}$. A similar set of conserved
supercharges (\ref{AKNS supercharges}) and supersymmetry transformations (%
\ref{wrong I}), (\ref{wrong II}) in the on-shell $A_{\pm }=0$ gauge are also
constructed for this case.

\section{Examples: supercharges from superspace and from symmetry flows.}

Because of the supersymmetry we are dealing with is quite non-standard, it
is important to study its relation with the usual supersymmetry obtained
from superspace by working out some examples. Here we consider the Pohlmeyer
reduction of the superstring on $AdS_{2}\times S^{2}$ and $AdS_{3}\times
S^{3}$ which are already known$.$ As we are mainly interested in
understanding the role of the supercharges found above, we will try to be as
close as possible to the superspace results. However, we have to mention
that in the $AdS_{3}\times S^{3}$ we will ignore the Wilson lines appearing
in (\ref{AKNS supercharges}) and make the computations in some field limits.

\subsection{Supercharges of the Landau-Ginzburg models.}

\begin{notation}
In this section, the light-cone notation used is $x^{\pm }=x^{0}\pm x^{1}$, $%
\partial _{\pm }=\frac{1}{2}\left( \partial _{0}\pm \partial _{1}\right) $, $%
\eta _{+-}=\eta _{-+}=-\frac{1}{2},\eta ^{+-}=\eta ^{-+}=-2,$ corresponding
to the metric $\eta _{00}=-1,\eta _{11}=+1.$
\end{notation}

A Landau-Ginzburg model is defined by a Lagrangian density of the from (e.g
see \cite{Mirror book})%
\begin{equation}
\mathcal{L=}\dint d^{4}\theta K(\Phi ^{i},\overline{\Phi }^{\overline{i}})+%
\frac{1}{2}\left( \dint d^{2}\theta W(\Phi ^{i})+c.c\right) \text{ \ , \ }i,%
\overline{i}=1,...,n,  \label{LG model}
\end{equation}%
where $\Phi =\phi (y^{\pm })+\theta ^{\alpha }\psi _{\alpha }(y^{\pm
})+\theta ^{+}\theta ^{-}F(y^{\pm })$ is a chiral superfield, $y^{\pm
}=x^{\pm }-i\theta ^{\pm }\overline{\theta }^{\pm }$, $\overline{\theta }%
^{\pm }\equiv (\theta ^{\pm })^{\ast }$ and $(\psi _{1}\psi _{2})^{\ast
}\equiv (\psi _{2}^{\ast }\psi _{1}^{\ast })$ stands for the complex
conjugation convention acting on fermions$.$

In components, the lagrangian density is%
\begin{eqnarray}
\mathcal{L} &=&-g_{i\overline{j}}\partial ^{\mu }\phi ^{i}\partial _{\mu }%
\overline{\phi }^{\overline{j}}+2ig_{i\overline{j}}\overline{\psi }_{-}^{%
\overline{j}}D_{+}\psi _{-}^{i}+2ig_{i\overline{j}}\overline{\psi }_{+}^{%
\overline{j}}D_{-}\psi _{+}^{i}+R_{i\overline{j}k\overline{l}}\psi
_{+}^{i}\psi _{-}^{k}\overline{\psi }_{-}^{\overline{j}}\overline{\psi }%
_{+}^{\overline{l}}-  \notag \\
&&-\frac{1}{4}g^{\overline{i}j}\partial _{\overline{i}}\overline{W}\partial
_{j}W-\frac{1}{2}D_{i}\partial _{j}W\psi _{+}^{i}\psi _{-}^{j}-\frac{1}{2}D_{%
\overline{i}}\partial _{\overline{j}}\overline{W}\overline{\psi }_{-}^{%
\overline{i}}\overline{\psi }_{+}^{\overline{j}},  \label{LG lagrangian}
\end{eqnarray}%
where $g_{i\overline{j}}=\partial _{i}\partial _{\overline{j}}K(\phi ^{i},%
\overline{\phi }^{\overline{i}})$, $K$ is the K\"{a}hler potential and $%
D_{\mu }\psi _{\pm }^{i}=\partial _{\mu }\psi _{\pm }^{i}+\partial _{\mu
}\phi ^{j}\Gamma _{jk}^{i}\psi _{\pm }^{k}$ , $D_{i}\partial _{j}W=\partial
_{i}\partial _{j}W-\Gamma _{ij}^{k}\partial _{k}W.$ \ The $(2,2)$ Noether
supercharges associated to the model (\ref{LG lagrangian}) are given by%
\begin{equation*}
Q_{\pm }=\dint_{-\infty }^{+\infty }dx^{1}G_{\pm }^{0},\text{ \ \ \ \ \ }%
G_{\pm }^{0}=2g_{i\overline{j}}\partial _{\pm }\overline{\phi }^{\overline{j}%
}\psi _{\pm }^{i}\mp \frac{i}{2}\overline{\psi }_{\mp }^{\overline{i}%
}\partial _{\overline{i}}\overline{W}.
\end{equation*}

\subsubsection{The $(2,2)$ sine-Gordon model.}

The first model of interest is when $i=1,$ $K(\Phi ^{1},\overline{\Phi }^{%
\overline{1}})=\overline{\Phi }\Phi $ and $W\rightarrow 2W$ in the
Lagrangian (\ref{LG lagrangian}). In this case we have%
\begin{equation}
\mathcal{L}=2\left( \partial _{+}\varrho \partial _{-}\overline{\varrho }%
+\partial _{-}\varrho \partial _{+}\overline{\varrho }\right) +2i\overline{%
\psi }_{-}\partial _{+}\psi _{-}+2i\overline{\psi }_{+}\partial _{-}\psi
_{+}-\left\vert W^{\prime }(\varrho )\right\vert ^{2}-\left[ W^{\prime
\prime }(\varrho )\psi _{+}\psi _{-}+\overline{W}^{\prime \prime }(\overline{%
\varrho })\overline{\psi }_{-}\overline{\psi }_{+}\right] ,
\label{flat lagrangian}
\end{equation}%
where we have denoted by $\varrho $ the complex scalar field component of
the superfield $\Phi .$ The densities are%
\begin{equation}
G_{\pm }^{0}=2\partial _{\pm }\overline{\varrho }\psi _{\pm }\mp i\overline{%
\psi }_{\mp }\overline{W}^{\prime }(\overline{\varrho }).
\label{flat supercharges}
\end{equation}

Taking now the following choice of field components 
\begin{equation*}
\varrho =\varphi +i\phi ,\text{ \ \ }\psi _{-}=\lambda _{1}+i\lambda _{2},%
\text{\ \ \ }\psi _{+}=\overline{\lambda }_{1}+i\overline{\lambda }_{2},%
\text{ \ \ }W(\varrho )=2\mu \cos \varrho ,
\end{equation*}%
we have from (\ref{flat lagrangian}), the $N=(2,2)$ supersymmetric extension
of the sine-Gordon model \cite{uematsu}%
\begin{eqnarray}
\frac{1}{4}\mathcal{L} &=&\partial _{+}\phi \partial _{-}\phi +\partial
_{+}\varphi \partial _{-}\varphi +\frac{i}{2}\left( \lambda _{1}\partial
_{-}\lambda _{1}+\lambda _{2}\partial _{-}\lambda _{2}+\overline{\lambda }%
_{1}\partial _{+}\overline{\lambda }_{1}+\overline{\lambda }_{2}\partial _{+}%
\overline{\lambda }_{2}\right) -\frac{\mu ^{2}}{2}\left( \cosh 2\phi -\cos
2\varphi \right) +  \label{N=2 SS sine-Gordon lagrangian} \\
&&+\mu i\left\{ \cos \varphi \cosh \phi \left( \lambda _{1}\overline{\lambda 
}_{2}+\lambda _{2}\overline{\lambda }_{1}\right) -\sin \varphi \sinh \phi
\left( \lambda _{1}\overline{\lambda }_{1}-\lambda _{2}\overline{\lambda }%
_{2}\right) \right\} .  \notag
\end{eqnarray}

The fermionic densities (\ref{flat supercharges}) can be written as $G_{\pm
}^{0}=2\left( q_{1}^{\pm }+iq_{2}^{\pm }\right) $ in terms of the following $%
2+2$ real components\footnote{%
SS means Super-Space.}%
\begin{eqnarray}
^{SS}q_{1}^{+} &=&\lambda _{1}\partial _{+}\varphi +\lambda _{2}\partial
_{+}\phi +\mu \left( \overline{\lambda }_{1}\cosh \phi \sin \varphi +%
\overline{\lambda }_{2}\sinh \phi \cos \varphi \right) ,
\label{N=2 SS sine-Gordon supercharges} \\
^{SS}q_{2}^{+} &=&-\lambda _{1}\partial _{+}\phi +\lambda _{2}\partial
_{+}\varphi +\mu \left( -\overline{\lambda }_{1}\sinh \phi \cos \varphi +%
\overline{\lambda }_{2}\cosh \phi \sin \varphi \right) ,  \notag \\
^{SS}q_{1}^{-} &=&\overline{\lambda }_{1}\partial _{-}\varphi +\overline{%
\lambda }_{2}\partial _{-}\phi -\mu \left( \lambda _{1}\cosh \phi \sin
\varphi +\lambda _{2}\sinh \phi \cos \varphi \right) ,  \notag \\
^{SS}q_{2}^{-} &=&-\overline{\lambda }_{1}\partial _{-}\phi +\overline{%
\lambda }_{2}\partial _{-}\varphi +\mu \left( \lambda _{1}\sinh \phi \cos
\varphi -\lambda _{2}\cosh \phi \sin \varphi \right) .  \notag
\end{eqnarray}

\subsubsection{The $(2,2)$ complex sine-Gordon and its hyperbolic
counterpart.}

The second models of interest are when $i=1$ in the "compact" and
"non-compact" cases, which are possibly related to two different truncations
of the Pohlmeyer reduced $AdS_{3}\times S^{3}$ superstring. For the compact
and non-compact models we choose, respectively, the following superfields
components%
\begin{eqnarray*}
\phi ^{1} &=&\ln \cos \varphi +i\theta \text{, \ \ \ \ \ \ \ \ }\phi ^{1}%
\text{ }=\text{ }\ln \cosh \phi +i\chi , \\
\psi _{\pm }^{1} &=&\tan \varphi e^{-i\theta }\chi _{\pm }^{1}\text{,\ \ \ \
\ \ \ }\psi _{\pm }^{1}\text{ }=\text{ }\tanh \phi e^{-i\chi }\rho _{\pm
}^{1}, \\
W(\phi ^{1}) &=&4\mu e^{\phi ^{1}},\text{ \ \ \ \ \ \ \ \ \ \ }W(\phi ^{1})%
\text{ }=\text{ }4\mu e^{\phi ^{1}},
\end{eqnarray*}%
where\footnote{%
The bar over the real fermion components should not be confused with the
complex conjugation which is denoted by the same symbol.} $\chi
_{+}^{1}=(\lambda _{1}+i\lambda _{2}),$ $\chi _{-}^{1}=(\overline{\lambda }%
_{1}+i\overline{\lambda }_{2}),$ $\rho _{+}^{1}=(\lambda _{3}+i\lambda _{4})$
and $\rho _{-}^{1}=(\overline{\lambda }_{3}+i\overline{\lambda }_{4}).$

The kahler potentials are chosen such that 
\begin{equation*}
g_{1\overline{1}}=-\frac{1}{1-\left\vert e^{-\phi ^{1}}\right\vert ^{2}}%
=\cot ^{2}\varphi ,\text{ \ \ \ \ \ }g_{1\overline{1}}=\frac{1}{1-\left\vert
e^{-\phi ^{1}}\right\vert ^{2}}=\coth ^{2}\phi ,
\end{equation*}%
which imply 
\begin{equation*}
\Gamma _{11}^{1}=\frac{1}{\sin ^{2}\varphi },\text{ \ \ \ \ \ }R_{1\overline{%
1}1\overline{1}}=\frac{\cot ^{4}\varphi }{\sin ^{2}\varphi }\text{ \ and \ }%
\Gamma _{11}^{1}=-\frac{1}{\sinh ^{2}\phi },\text{ \ \ \ \ \ }R_{1\overline{1%
}1\overline{1}}=-\frac{\coth ^{4}\phi }{\sinh ^{2}\phi },
\end{equation*}%
respectively$.$

In the real variables, the Lagrangian density (\ref{LG lagrangian}) is (see
also \cite{napolitano})%
\begin{eqnarray}
\frac{1}{4}\mathcal{L} &=&\partial _{+}\varphi \partial _{-}\varphi +\cot
^{2}\varphi \mathcal{\partial }_{+}\theta \mathcal{\partial }_{-}\theta +%
\frac{i}{2}(\lambda _{1}\partial _{-}\lambda _{1}+\lambda _{2}\partial
_{-}\lambda _{2}+\overline{\lambda }_{1}\partial _{+}\overline{\lambda }_{1}+%
\overline{\lambda }_{2}\partial _{+}\overline{\lambda }_{2})+  \notag \\
&&+i\cot ^{2}\varphi \left[ \mathcal{\partial }_{-}\theta \lambda
_{1}\lambda _{2}+\mathcal{\partial }_{+}\theta \overline{\lambda }_{1}%
\overline{\lambda }_{2}\right] +\frac{1}{\sin ^{2}\varphi }\lambda
_{1}\lambda _{2}\overline{\lambda }_{1}\overline{\lambda }_{2}-  \notag \\
&&-\mu ^{2}\sin ^{2}\varphi +\mu i\cos \varphi \left[ \cos \theta (\lambda
_{1}\overline{\lambda }_{2}+\lambda _{2}\overline{\lambda }_{1})-\sin \theta
(\lambda _{1}\overline{\lambda }_{1}-\lambda _{2}\overline{\lambda }_{2})%
\right] .  \label{N=2 complex sine-gordon}
\end{eqnarray}%
A similar lagrangian for $\mu =0$ was constructed in \cite{nakatsu} by using
a conventional (Kazama-Suzuki) gauged super WZNW model.

The supercharge densities can be expressed as $G_{\pm }^{0}=2(q_{1}^{\pm
}+iq_{2}^{\pm })$ in terms of the following $2+2$ real components%
\begin{eqnarray}
^{SS}q_{1}^{+} &=&\frac{1}{Y_{3}}\left( \lambda _{1}\partial
_{+}Y_{1}+\lambda _{2}\partial _{+}Y_{2}\right) -\mu \overline{\lambda }%
_{2}Y_{3},  \notag \\
^{SS}q_{2}^{+} &=&\frac{1}{Y_{3}}\left( -\lambda _{1}\partial
_{+}Y_{2}+\lambda _{2}\partial _{+}Y_{1}\right) -\mu \overline{\lambda }%
_{1}Y_{3},  \notag \\
^{SS}q_{1}^{-} &=&\frac{1}{Y_{3}}\left( \overline{\lambda }_{1}\partial
_{-}Y_{1}+\overline{\lambda }_{2}\partial _{-}Y_{2}\right) +\mu \lambda
_{2}Y_{3},  \notag \\
^{SS}q_{2}^{-} &=&\frac{1}{Y_{3}}\left( -\overline{\lambda }_{1}\partial
_{-}Y_{2}+\overline{\lambda }_{2}\partial _{-}Y_{1}\right) +\mu \lambda
_{1}Y_{3},  \label{SS sine-gordon}
\end{eqnarray}%
where $Y_{1}=\cos \theta \cos \varphi ,$ $Y_{2}=\sin \theta \cos \varphi $
and $Y_{3}=\sin \varphi .$

Similarly, for the non-compact model we have the Lagrangian density 
\begin{eqnarray}
\frac{1}{4}\mathcal{L} &=&\partial _{+}\phi \partial _{-}\phi +\coth
^{2}\phi \mathcal{\partial }_{+}\chi \mathcal{\partial }_{-}\chi +\frac{i}{2}%
(\lambda _{3}\partial _{-}\lambda _{3}+\lambda _{4}\partial _{-}\lambda _{4}+%
\overline{\lambda }_{3}\partial _{+}\overline{\lambda }_{3}+\overline{%
\lambda }_{4}\partial _{+}\overline{\lambda }_{4})+  \notag \\
&&+i\coth ^{2}\phi \left[ \mathcal{\partial }_{-}\chi \lambda _{3}\lambda
_{4}+\mathcal{\partial }_{+}\chi \overline{\lambda }_{3}\overline{\lambda }%
_{4}\right] +\frac{1}{\sinh ^{2}\phi }\lambda _{3}\lambda _{4}\overline{%
\lambda }_{3}\overline{\lambda }_{4}-  \notag \\
&&-\mu ^{2}\sinh ^{2}\phi -\mu i\cosh \phi \left[ \cos \chi (\lambda _{3}%
\overline{\lambda }_{4}+\lambda _{4}\overline{\lambda }_{3})-\sin \chi
(\lambda _{3}\overline{\lambda }_{3}-\lambda _{4}\overline{\lambda }_{4})%
\right]  \label{N=2 complex sinh-Gordon}
\end{eqnarray}%
and the supercharge densities $G_{\pm }^{0}=2(q_{1}^{\pm }+iq_{2}^{\pm })$
written in terms of the following $2+2$ real components%
\begin{eqnarray}
^{SS}q_{1}^{+} &=&\frac{1}{X_{3}}\left( \lambda _{3}\partial
_{+}X_{1}+\lambda _{4}\partial _{+}X_{2}\right) -\mu \overline{\lambda }%
_{4}X_{3},  \notag \\
^{SS}q_{2}^{+} &=&\frac{1}{X_{3}}\left( -\lambda _{3}\partial
_{+}X_{2}+\lambda _{4}\partial _{+}X_{1}\right) -\mu \overline{\lambda }%
_{3}X_{3},  \notag \\
^{SS}q_{1}^{-} &=&\frac{1}{X_{3}}\left( \overline{\lambda }_{3}\partial
_{-}X_{1}+\overline{\lambda }_{4}\partial _{-}X_{2}\right) +\mu \lambda
_{4}X_{3},  \notag \\
^{SS}q_{2}^{-} &=&\frac{1}{X_{3}}\left( -\overline{\lambda }_{3}\partial
_{-}X_{2}+\overline{\lambda }_{4}\partial _{-}X_{1}\right) +\mu \lambda
_{3}X_{3},  \label{SS sinh-gordon}
\end{eqnarray}%
where $X_{1}=\cos \chi \cosh \phi ,$ $X_{2}=\sin \chi \cosh \phi $ and $%
X_{3}=\sinh \phi .$

Now, we proceed to use our formulation.

\subsection{Supercharges of the Pohlmeyer reduced models.}

As mentioned above, the aim is to try to relate the superspace expressions (%
\ref{N=2 SS sine-Gordon supercharges}), (\ref{SS sine-gordon}) and (\ref{SS
sinh-gordon}) with the supersymmetry flow result (\ref{AKNS supercharges}).

\subsubsection{Reduction of the $AdS_{2}\times S^{2}$ superstring and $(2,2)$
2D SUSY.}

This is the only known case in which a reduced model posses 2D world-sheet
supersymmetry \cite{tseytlin}. However, the supersymmetry of this $%
AdS_{2}\times S^{2}$ reduced model was identified through its superspace
description (\ref{N=2 SS sine-Gordon lagrangian}), i.e the $N=2$
supersymmetric sine-Gordon model. Here we use our general flow approach to
confirm this fact from a different point of view.

From the general discussion we identify $\frac{F}{G}=\frac{PSU(1,1\mid 2)}{%
U(1)\times U(1)}$ and $\frac{G}{H}=G$ because $H=\oslash ,$ as can be seen
from (\ref{psu gauge group}). Then, this model has no gauge symmetries $%
\mathcal{K}_{B}^{(0)}$=$\oslash $ and from (\ref{psu anticom}) we expect to
obtain a reduced model with an ordinary extended $(2,2)$ supersymmetry.
Fortunately, in this case we do note have gauge fields and Wilson lines and
this means that the variations $\delta _{\pm 1/2}$ can be lifted easily to
the Lagrangian level (\ref{Abelian toda action}). Note also that we do not
have to deal with the constraints (\ref{new constraints}).

Using the basis (\ref{psu loop basis}), we parametrize the physical fields as%
\begin{equation*}
B=diag\left( B_{A},B_{S}\right) ,\text{ \ \ \ \ \ }\psi ^{\left( -1/2\right)
}=\psi _{i}G_{i}^{(-1/2)},\text{ \ \ \ \ \ }\psi ^{\left( +1/2\right) }=%
\overline{\psi }_{i}G_{i}^{(+1/2)},
\end{equation*}%
where $i=1,2$ and $B_{A},$ $B_{S}$ are given by\footnote{%
The $A,$ $S$ stands for $AdS_{n}$ and $S^{n}.$} 
\begin{eqnarray*}
B_{A} &=&\exp \left( \phi M_{1}^{(0)}\right) \text{ }=\text{ }%
\begin{pmatrix}
\cosh \phi & \sinh \phi \\ 
\sinh \phi & \cosh \phi%
\end{pmatrix}%
\in U(1), \\
B_{S} &=&\exp \left( \varphi M_{2}^{(0)}\right) \text{ }=\text{ }%
\begin{pmatrix}
\cos \varphi & i\sin \varphi \\ 
i\sin \varphi & \cos \varphi%
\end{pmatrix}%
\in U(1).
\end{eqnarray*}

With $\Lambda $ defined in (\ref{T for psu}), we can compute all the terms
entering the action (\ref{Abelian toda action}). They are 
\begin{eqnarray*}
\left\langle B^{-1}\partial _{+}BB^{-1}\partial _{-}B\right\rangle
&=&2\left( \partial _{+}\phi \partial _{-}\phi +\partial _{+}\varphi
\partial _{-}\varphi \right) , \\
\left\langle \psi _{+}^{\left( +1/2\right) }\partial _{-}\psi ^{\left(
-1/2\right) }+\psi _{-}^{\left( -1/2\right) }\partial _{+}\psi ^{\left(
+1/2\right) }\right\rangle &=&2\left( \psi _{i}\partial _{-}\psi _{i}+%
\overline{\psi }_{i}\partial _{+}\overline{\psi }_{i}\right) , \\
\mu ^{2}\left\langle \Lambda _{+}^{(+1)}B\Lambda
_{-}^{(-1)}B^{-1}\right\rangle &=&-\frac{\mu ^{2}}{2}\left( \cos 2\varphi
-\cosh 2\phi \right) , \\
\mu \left\langle \psi _{+}^{\left( +1/2\right) }B\psi _{-}^{\left(
-1/2\right) }B^{-1}\right\rangle &=&-2\mu \left\{ \cos \varphi \cosh \phi
\left( \psi _{1}\overline{\psi }_{1}+\psi _{2}\overline{\psi }_{2}\right)
+\sin \varphi \sinh \phi \left( \psi _{2}\overline{\psi }_{1}-\psi _{1}%
\overline{\psi }_{2}\right) \right\}
\end{eqnarray*}%
and the total Lagrangian density of the corresponding reduced model is 
\begin{eqnarray}
-\frac{2\pi }{k}\mathcal{L} &=&\partial _{+}\phi \partial _{-}\phi +\partial
_{+}\varphi \partial _{-}\varphi +\psi _{1}\partial _{-}\psi _{1}+\psi
_{2}\partial _{-}\psi _{2}+\overline{\psi }_{1}\partial _{+}\overline{\psi }%
_{1}+\overline{\psi }_{2}\partial _{+}\overline{\psi }_{2}-V
\label{N=2 sine-Gordon} \\
V &=&\mu ^{2}\left\langle \Lambda _{+}^{(+1)}B\Lambda
_{-}^{(-1)}B^{-1}\right\rangle +\mu \left\langle \psi _{+}^{\left(
+1/2\right) }B\psi _{-}^{\left( -1/2\right) }B^{-1}\right\rangle .  \notag
\end{eqnarray}

Now comes the interesting part. To compute the supercharges associated to (%
\ref{N=2 sine-Gordon}) we use the general formula (\ref{AKNS supercharges})
in the gauge $A_{\pm }=0$ because $\mathcal{K}_{B}^{(0)}=\oslash ,$ which
reduce to (\ref{mKdV supercharges}) in the mKdV hierarchy. Writing the
supercharges in the form $Q(\delta _{\pm 1/2})=\dint_{-\infty }^{+\infty
}dx^{1}G(\delta _{\pm 1/2}),$ we find the densities $G(\delta _{\pm
1/2})=-q_{i}^{\pm }F_{i}^{(\mp 1/2)}$ in terms of the following $2+2$ real
components\footnote{%
SF means Supersymmetry Flows.} 
\begin{eqnarray}
^{SF}q_{1}^{+} &=&\psi _{1}\partial _{+}\varphi +\psi _{2}\partial _{+}\phi
+\mu \left( \overline{\psi }_{1}\cosh \phi \sin \varphi +\overline{\psi }%
_{2}\sinh \phi \cos \varphi \right) ,
\label{SF N=2 sine-Gordon supercharges} \\
^{SF}q_{2}^{+} &=&\psi _{1}\partial _{+}\phi -\psi _{2}\partial _{+}\varphi
+\mu \left( \overline{\psi }_{1}\sinh \phi \cos \varphi -\overline{\psi }%
_{2}\cosh \phi \sin \varphi \right) ,  \notag \\
^{SF}q_{1}^{-} &=&-\overline{\psi }_{1}\partial _{-}\varphi +\overline{\psi }%
_{2}\partial _{-}\phi +\mu \left( \psi _{1}\cosh \phi \sin \varphi -\psi
_{2}\sinh \phi \cos \varphi \right) ,  \notag \\
^{SF}q_{2}^{-} &=&\overline{\psi }_{1}\partial _{-}\phi +\overline{\psi }%
_{2}\partial _{-}\varphi -\mu \left( \psi _{1}\sinh \phi \cos \varphi +\psi
_{2}\cosh \phi \sin \varphi \right)  \notag
\end{eqnarray}%
and this is because we have $\dim \mathcal{K}_{F}^{(\pm 1/2)}=2,$ as shown
in (\ref{table}), (\ref{psu susy flows I}), (\ref{psu susy flows II}).
Compare (\ref{SF N=2 sine-Gordon supercharges}) with the superspace result (%
\ref{N=2 SS sine-Gordon supercharges}).

Inserting $D^{(+1/2)}=\epsilon _{i}F_{i}^{(+1/2)},D^{(-1/2)}=\overline{%
\epsilon }_{i}F_{i}^{(-1/2)}$ in the supersymmetry variations (\ref{poisson
susy 1/2}), (\ref{poisson susy -1/2}) and using (\ref{psu anticom}) we
obtain the $(2,2)$ supersymmetry algebra 
\begin{equation}
\left[ \delta _{+1/2},\delta _{+1/2}^{\prime }\right] =2\left( \epsilon
_{1}\epsilon _{1}^{\prime }+\epsilon _{2}\epsilon _{2}^{\prime }\right)
\partial _{+},\text{ \ \ \ }\left[ \delta _{-1/2},\delta _{-1/2}^{\prime }%
\right] =2\left( \overline{\epsilon }_{1}\overline{\epsilon }_{1}^{\prime }+%
\overline{\epsilon }_{2}\overline{\epsilon }_{2}^{\prime }\right) \partial
_{-},\text{ \ \ \ }\left[ \delta _{+1/2},\delta _{-1/2}\right] =0  \notag
\end{equation}%
in agreement with the result (\ref{un-deformed}), (\ref{deformed}) in the
absence of gauge group. These supersymmetry transformations $\delta _{\pm
1/2}$ are the same as the ones induced by the supercharges (\ref{N=2 SS
sine-Gordon supercharges}). See also \cite{susyflows-mKdV} for similar
models constructed from the twisted superalgebras $sl(2\mid 1)^{(2)}$ and $%
psl(2\mid 2)^{(2)}$ .

\subsubsection{Reduction of the $AdS_{3}\times S^{3}$ superstring and
possible $(4,4)$ 2D SUSY.}

This model is more complicated because it has gauge symmetries and it is the
first non-trivial case in which we want to test our construction, then we
will study it in some detail$.$ Although the Pohlmeyer reduced Lagrangian
for the $AdS_{3}\times S^{3}$ superstring was already computed in \cite%
{tseytlin II}, the existence of 2D world-sheet supersymmetry was conjectured
to be of type $N=(2,2)$. \ Here we provide some evidence that the 2D
supersymmetry is of the extended type $N=(4,4)$ instead of $N=(2,2).$

From the general discussion we identify $\frac{F}{G}=\frac{PSU(1,1\mid
2)\times PSU(1,1\mid 2)}{SU(1,1)\times SU(2)}$ and $\frac{G}{H}=\frac{%
SU(1,1)\times SU(2)}{U(1)\times U(1)},$ then this model has gauge group $%
H=U(1)\times U(1)$ as shown in (\ref{gauge group for psuxpsu})$.$ We will
concentrate on the vector gauge only, i.e $\epsilon _{L}=\epsilon _{R}=I$ as
the axial gauge i.e $\epsilon _{L}=I,$ $\epsilon _{R}=-I$ follows exactly
the same lines.

Using the basis (\ref{psuxpsu loop basis}), we parametrize the physical
fields as follows%
\begin{equation*}
B=diag\left( B_{A},B_{S}\right) ,\text{ \ \ \ \ \ }\psi ^{\left( -1/2\right)
}=\psi _{i}G_{i}^{(-1/2)},\text{ \ \ \ \ \ }\psi ^{\left( +1/2\right) }=%
\overline{\psi }_{i}G_{i}^{(+1/2)},
\end{equation*}%
where $i=1,...,4$ and $B_{A}\in SU(1,1)$, $B_{S}\in SU(2).$ The group
elements $B_{A}$ and $B_{S}$ are given by%
\begin{eqnarray*}
B_{A} &=&\exp \left( \frac{1}{2}(\chi +t)K_{1}^{(0)}\right) \exp \left( \phi
M_{1}^{(0)}\right) \exp \left( \frac{1}{2}(\chi -t)K_{1}^{(0)}\right) \in
SU(1,1), \\
B_{S} &=&\exp \left( \frac{1}{2}(\theta +t^{\prime })K_{2}^{(0)}\right) \exp
\left( \varphi M_{4}^{(0)}\right) \exp \left( \frac{1}{2}(\theta -t^{\prime
})K_{2}^{(0)}\right) \in SU(2).
\end{eqnarray*}

The gauge transformations (\ref{action gauge transf.}) acting on $B$ are
simply the shifts $(t,t^{\prime })\rightarrow (t+\alpha ,t^{\prime }+\beta )$
and we fix the gauge by taking $t=t^{\prime }=0$ to get, in terms of $%
2\times 2$ block matrices, the coset elements%
\begin{eqnarray}
B_{A}^{\prime } &=&%
\begin{pmatrix}
X_{1}+iX_{2} & X_{3} \\ 
X_{3} & X_{1}-iX_{2}%
\end{pmatrix}%
\in \frac{SU(1,1)}{U(1)},  \label{non compact gauged} \\
\text{ \ \ }B_{S}^{\prime } &=&%
\begin{pmatrix}
Y_{1}+iY_{2} & iY_{3} \\ 
iY_{3} & Y_{1}-iY_{2}%
\end{pmatrix}%
\text{\ }\in \frac{SU(2)}{U(1)},  \label{compact gauged}
\end{eqnarray}%
where 
\begin{eqnarray}
X_{1} &=&\cos \chi \cosh \phi ,\text{ \ }X_{2}\text{ }=\text{ }\sin \chi
\cosh \phi ,\text{ \ }X_{3}\text{ }=\text{ }\sqrt{-1+X_{1}^{2}+X_{2}^{2}}%
\text{ }=\text{ }\sinh \phi ,  \label{X y Y} \\
Y_{1} &=&\cos \theta \cos \varphi ,\text{ \ \ \ \ }Y_{2}\text{ }=\text{ }%
\sin \theta \cos \varphi ,\text{ \ \ \ }Y_{3}\text{ }=\text{ }\sqrt{%
1-Y_{1}^{2}-Y_{2}^{2}}\text{ }=\text{ }\sin \varphi .  \notag
\end{eqnarray}

The other necessary ingredient is the constant element $\Lambda $ defined in
(\ref{T for psuxpsu}) and the gauge fields $A_{\pm }=a_{\pm
}K_{1}^{(0)}+b_{\pm }K_{2}^{(0)}.$ Using the gauge field $A_{\pm }$
equations of motion given by (\ref{arbitrary variation}) we find%
\begin{equation}
a_{\pm }=\mp \frac{1}{2}\left( \coth ^{2}\phi \partial _{\pm }\chi \pm \frac{%
1}{\sinh ^{2}\phi }F_{\pm }\right) ,\text{ \ \ \ \ \ }b_{\pm }=\pm \frac{1}{2%
}\left( \cot ^{2}\varphi \partial _{\pm }\theta \mp \frac{1}{\sin
^{2}\varphi }F_{\pm }\right) ,  \label{gauge field components}
\end{equation}%
where $F_{+}=(\psi _{1}\psi _{2}-\psi _{3}\psi _{4}),$ $F_{-}=(\overline{%
\psi }_{1}\overline{\psi }_{2}-\overline{\psi }_{3}\overline{\psi }_{4})$
and where we have used (\ref{currents}) and $Q_{\pm }^{(0)}=F_{\pm
}(K_{1}^{(0)}-K_{2}^{(0)})$.

Once the $A_{\pm }$ are solved through their equations of motion (\ref{gauge
field components}) we put them back in the gauged fixed Lagrangian obtaining
the Pohlmeyer reduced action functional. However, instead of doing this we
will integrate them out in the path integral without taking account of the
quantum measure Jacobian, which gives the same classical answer. To do this,
it is useful to consider the general integration formula%
\begin{equation}
\dint DAD\overline{A}\exp \left[ -\frac{k}{2\pi }\dint \left( \overline{A}MA+%
\overline{A}N+\overline{N}A\right) \right] =\exp \left[ -\frac{k}{2\pi }%
\dint \left( -\overline{N}M^{-1}N\right) \right] ,
\label{integration formula}
\end{equation}%
where $A,\overline{A},N$ and $\overline{N}$ are vectors and $M$ is an
invertible matrix.

The gauge field $A_{\pm }$ independent quantities entering the SSSSG model
action (\ref{NA Toda action}) are%
\begin{eqnarray}
\left\langle B^{-1}\partial _{+}BB^{-1}\partial _{-}B\right\rangle
&=&2\left( \partial _{+}\phi \partial _{-}\phi -\cosh ^{2}\phi \partial
_{+}\chi \partial _{-}\chi +\partial _{+}\varphi \partial _{-}\varphi +\cos
^{2}\varphi \partial _{+}\theta \partial _{-}\theta \right) ,
\label{quantities in the total action} \\
\left\langle \psi _{+}^{\left( +1/2\right) }\partial _{-}\psi ^{\left(
-1/2\right) }+\psi _{-}^{\left( -1/2\right) }\partial _{+}\psi ^{\left(
+1/2\right) }\right\rangle &=&2\left( \psi _{i}\partial _{-}\psi _{i}+%
\overline{\psi }_{i}\partial _{+}\overline{\psi }_{i}\right) ,  \notag \\
\mu ^{2}\left\langle \Lambda _{+}^{(+1)}B\Lambda
_{-}^{(-1)}B^{-1}\right\rangle &=&-\frac{1}{2}\mu ^{2}\left( \cos 2\varphi
-\cosh 2\phi \right) ,  \notag \\
\mu \left\langle \psi _{+}^{\left( +1/2\right) }B\psi _{-}^{\left(
-1/2\right) }B^{-1}\right\rangle &=&2\mu \left\{ 
\begin{array}{c}
\sin \varphi \sinh \phi \left( \psi _{1}\overline{\psi }_{3}-\psi _{3}%
\overline{\psi }_{1}+\psi _{2}\overline{\psi }_{4}-\psi _{4}\overline{\psi }%
_{2}\right) + \\ 
+\cos \varphi \cosh \phi \left[ 
\begin{array}{c}
\cos (\theta +\chi )\left( \psi _{1}\overline{\psi }_{2}-\psi _{2}\overline{%
\psi }_{1}-\psi _{3}\overline{\psi }_{4}+\psi _{4}\overline{\psi }%
_{3}\right) + \\ 
+\sin (\theta +\chi )\left( \psi _{1}\overline{\psi }_{1}+\psi _{2}\overline{%
\psi }_{2}+\psi _{3}\overline{\psi }_{3}+\psi _{4}\overline{\psi }_{4}\right)%
\end{array}%
\right]%
\end{array}%
\right\} ,  \notag
\end{eqnarray}%
where we have used (\ref{currents}) in the first line and (\ref{traces}) in
the last line.

The gauge field $A_{\pm }$ dependent part of the action is 
\begin{eqnarray}
I^{V} &=&\left\langle -A_{-}\left( \partial _{+}BB^{-1}+Q_{+}^{(0)}\right)
+A_{+}\left( B^{-1}\partial _{-}B+Q_{-}^{(0)}\right)
-A_{-}BA_{+}B^{-1}+A_{+}A_{-}\right\rangle  \notag \\
&=&a_{+}M_{A}a_{-}+a_{+}N_{A}+\overline{N}%
_{A}a_{-}+b_{+}M_{S}b_{-}+b_{+}N_{S}+\overline{N}_{S}b_{-},
\label{gauge dependent part}
\end{eqnarray}%
where $M_{A}=4\sinh ^{2}\phi ,$ $N_{A}=-2X_{-},$ $\overline{N}_{A}=2X_{+},$ $%
M_{S}=4\sin ^{2}\varphi ,$ $N_{S}=2Y_{-},\overline{N}_{S}=-2Y_{+}$ and%
\begin{eqnarray*}
X_{+} &=&\partial _{+}\chi \cosh ^{2}\phi +F_{+},\text{ \ \ \ \ \ }X_{-}%
\text{ }=\text{ }\partial _{-}\chi \cosh ^{2}\phi -F_{-}, \\
Y_{+} &=&\partial _{+}\theta \cos ^{2}\varphi -F_{+},\text{ \ \ \ \ \ \ \ \ }%
Y_{-}\text{ }=\text{ }\partial _{+}\theta \cos ^{2}\varphi +F_{-}.
\end{eqnarray*}

Before performing the full $a_{\pm },b_{\pm }$ integration, by using (\ref%
{integration formula}), we will consider first the following two consistent
truncations of the total model which are defined by%
\begin{eqnarray*}
I &:&\text{ \ }B_{S}=Id,\text{ }\psi _{1}=\psi _{2}=\overline{\psi }_{1}=%
\overline{\psi }_{2}=0\text{ \ and \ }b_{\pm }=0, \\
II &:&\text{ \ }B_{A}=Id,\text{ }\psi _{3}=\psi _{4}=\overline{\psi }_{3}=%
\overline{\psi }_{4}=0\text{ \ and \ }a_{\pm }=0.
\end{eqnarray*}

One comment is in order. The gauges $b_{\pm }=0$ or $a_{\pm }=0$ are only
valid on-shell so they does not make any sense at the Lagrangian level. What
we are doing is localizing in (\ref{Abelian toda action}) only one part of
the gauge symmetry (\ref{kac moody symmetry}) while keeping the other part
intact, i.e global. This is equivalent to the vanishing of some components
of $A_{\pm }$ in the action (\ref{NA Toda action}). Then, in the limit $I$
we get from (\ref{gauge dependent part}) and (\ref{integration formula}) 
\begin{equation*}
I_{I}^{V}=-\overline{N}_{A}M_{A}^{-1}N_{A}=\frac{X_{+}X_{-}}{\sinh ^{2}\phi }%
.
\end{equation*}%
Putting all together with (\ref{quantities in the total action}) in this
particular limit, we get the Lagrangian density 
\begin{eqnarray}
-\frac{2\pi }{k}\mathcal{L}_{I} &=&\partial _{+}\phi \partial _{-}\phi
+\coth ^{2}\phi \mathcal{\partial }_{+}\chi \mathcal{\partial }_{-}\chi
+\psi _{3}\partial _{-}\psi _{3}+\psi _{4}\partial _{-}\psi _{4}+\overline{%
\psi }_{3}\partial _{+}\overline{\psi }_{3}+\overline{\psi }_{3}\partial _{+}%
\overline{\psi }_{4}-  \notag \\
&&-\coth ^{2}\phi \left[ \mathcal{\partial }_{-}\chi \psi _{3}\psi _{4}-%
\mathcal{\partial }_{+}\chi \overline{\psi }_{3}\overline{\psi }_{4}\right] -%
\frac{1}{\sinh ^{2}\phi }\psi _{3}\psi _{4}\overline{\psi }_{3}\overline{%
\psi }_{4}-  \notag \\
&&-\mu ^{2}\sinh ^{2}\phi -2\mu \cosh \phi \left[ \cos \chi (-\psi _{3}%
\overline{\psi }_{4}+\psi _{4}\overline{\psi }_{3})+\sin \chi (\psi _{3}%
\overline{\psi }_{3}+\psi _{4}\overline{\psi }_{4})\right] ,
\label{N=2 sinh-Gordon prime}
\end{eqnarray}%
which should be compared with (\ref{N=2 complex sinh-Gordon}).

After the $a_{\pm }$ gauge field integration there a residual $U(1)$ global
symmetry which we now proceed to identify. The lagrangian (\ref{N=2
sinh-Gordon prime}) is separately invariant under the following global
transformations%
\begin{equation}
\widetilde{\chi }=\chi +\alpha _{1},\text{ \ \ \ \ \ }\widetilde{\lambda }%
_{\pm }=e^{\pm i\beta _{1}}\lambda _{\pm },\text{ \ }  \label{U(1)}
\end{equation}%
where $\alpha _{1}=2\pi n,$ $\beta _{1}\in 
\mathbb{R}
,$ $\lambda _{+}=\psi _{3}+i\psi _{4}$ and $\lambda _{-}=\overline{\psi }%
_{3}-i\overline{\psi }_{4}.$ The Noether procedure gives the corresponding
charges 
\begin{equation*}
Q_{\chi }=\dint_{-\infty }^{+\infty }dx^{1}\coth ^{2}\phi \left\{ \left( 
\mathcal{\partial }_{+}\chi +\mathcal{\partial }_{-}\chi \right) -\left(
\psi _{3}\psi _{4}-\overline{\psi }_{3}\overline{\psi }_{4}\right) \right\} ,%
\text{ \ \ \ }Q_{\lambda }=\dint_{-\infty }^{+\infty }dx^{1}\left( \psi
_{3}\psi _{4}-\overline{\psi }_{3}\overline{\psi }_{4}\right) ,
\end{equation*}%
where $Q_{\chi }$ is associated to the isometry of the metric and $%
Q_{\lambda }$ to the fermion electric charge.

Let us extract the information encoded in the conservation law given by the
zero grade equations of the Drinfeld-Sokolov procedure in this Vector gauge.
From (\ref{abelian charges}) we have an Abelian charge%
\begin{equation*}
Q_{U(1)}=\frac{1}{2}\dint_{-\infty }^{+\infty }dx^{1}\left(
a_{+}-a_{-}\right) .
\end{equation*}%
Now, taking the gauge field components $a_{\pm }$ defined by (\ref{gauge
field components}) in this particular limit and using $Q_{\chi },$ $%
Q_{\lambda },$ we find that $Q_{U(1)}=-\left( Q_{\chi }+Q_{\lambda }\right)
. $ Then, the grade zero equations encode the conservations laws associated
to the global symmetries of the reduced gauge fixed action.

To compute the supercharges associated to (\ref{N=2 sinh-Gordon prime}), i.e
the grade $\pm 1/2$ DS equations, we use the general formula (\ref{AKNS
supercharges}) in the gauge (\ref{gauge field components}) and neglect the
Wilson line conjugation. Writing $Q(\delta _{\pm 1/2})=\dint_{-\infty
}^{+\infty }dx^{1}G(\delta _{\pm 1/2})$ and using (\ref{conju}), (\ref%
{conjugations}) we can write the fermionic current densities $G(\delta _{\pm
1/2})=q_{i}^{\pm }F_{i}^{(\mp 1/2)}$ in terms of the following $2+2$ real
components%
\begin{eqnarray}
^{SF}q_{1}^{+} &=&\frac{1}{X_{3}}\left( -\psi _{3}\partial _{+}X_{1}+\psi
_{4}\partial _{+}X_{2}\right) -\mu \overline{\psi }_{4}X_{3},
\label{non-compact supercharges} \\
^{SF}q_{2}^{+} &=&\frac{1}{X_{3}}\left( -\psi _{3}\partial _{+}X_{2}-\psi
_{4}\partial _{+}X_{1}\right) +\mu \overline{\psi }_{3}X_{3},  \notag \\
^{SF}q_{1}^{-} &=&\frac{1}{X_{3}}\left( -\overline{\psi }_{3}\partial
_{-}X_{1}-\overline{\psi }_{4}\partial _{-}X_{2}\right) -\mu \psi _{4}X_{3},
\notag \\
^{SF}q_{2}^{-} &=&\frac{1}{X_{3}}\left( \overline{\psi }_{3}\partial
_{-}X_{2}-\overline{\psi }_{4}\partial _{-}X_{1}\right) +\mu \psi _{3}X_{3}.
\notag
\end{eqnarray}%
which have to be compared with the superspace expressions (\ref{SS
sinh-gordon}).

Similarly, in the limit $II$ we get 
\begin{equation*}
I_{II}^{V}=-\overline{N}_{S}M_{S}^{-1}N_{S}=\frac{Y_{+}Y_{-}}{\sin
^{2}\varphi }
\end{equation*}%
and the Lagrangian density 
\begin{eqnarray}
-\frac{2\pi }{k}\mathcal{L}_{II} &=&\partial _{+}\varphi \partial
_{-}\varphi +\cot ^{2}\varphi \partial _{+}\theta \partial _{-}\theta +\psi
_{1}\partial _{-}\psi _{1}+\psi _{2}\partial _{-}\psi _{2}+\overline{\psi }%
_{1}\partial _{+}\overline{\psi }_{1}+\overline{\psi }_{2}\partial _{+}%
\overline{\psi }_{2}-  \notag \\
&&-\cot ^{2}\varphi \left[ \mathcal{\partial }_{-}\theta \psi _{1}\psi _{2}-%
\mathcal{\partial }_{+}\theta \overline{\psi }_{1}\overline{\psi }_{2}\right]
-\frac{1}{\sin ^{2}\varphi }\psi _{1}\psi _{2}\overline{\psi }_{1}\overline{%
\psi }_{2}-  \notag \\
&&-\mu ^{2}\sin ^{2}\varphi -2\mu \cos \varphi \left[ \cos \theta (\psi _{1}%
\overline{\psi }_{2}-\psi _{2}\overline{\psi }_{1})+\sin \theta (\psi _{1}%
\overline{\psi }_{1}+\psi _{2}\overline{\psi }_{2})\right] ,
\label{N=2 sine-Gordon prime}
\end{eqnarray}%
which should be compared with (\ref{N=2 complex sine-gordon}). \ Similarly,
the $U(1)$ global symmetry of (\ref{N=2 sine-Gordon prime}) is%
\begin{equation}
\widetilde{\theta }=\theta +\alpha _{2},\text{ \ \ \ \ \ }\widetilde{\rho }%
_{\pm }=e^{\pm i\beta _{2}}\rho _{\pm },\text{ \ }  \label{U(1) II}
\end{equation}%
where $\alpha _{2}=2\pi n,\beta _{2}\in 
\mathbb{R}
,\rho _{+}=\psi _{1}+i\psi _{2}$ and $\rho _{-}=\overline{\psi }_{1}-i%
\overline{\psi }_{2}.$ The Noether charges are%
\begin{equation*}
Q_{\theta }=\dint_{-\infty }^{+\infty }dx^{1}\cot ^{2}\varphi \left\{ \left( 
\mathcal{\partial }_{+}\theta +\mathcal{\partial }_{-}\theta \right) -\left(
\psi _{1}\psi _{2}-\overline{\psi }_{1}\overline{\psi }_{2}\right) \right\} ,%
\text{ \ \ \ }Q_{\rho }=\dint_{-\infty }^{+\infty }dx^{1}\left( \psi
_{1}\psi _{2}-\overline{\psi }_{1}\overline{\psi }_{2}\right)
\end{equation*}%
and as above, we have from (\ref{abelian charges}) an Abelian charge given
by 
\begin{equation*}
Q_{U(1)}=\frac{1}{2}\dint_{-\infty }^{+\infty }dx^{1}\left(
b_{+}-b_{-}\right) .
\end{equation*}%
Taking $b_{\pm }$ defined by (\ref{gauge field components}) in this
particular limit and $Q_{\theta },$ $Q_{\rho },$ we obtain the relation $%
Q_{U(1)}=+\left( Q_{\theta }-Q_{\rho }\right) .$

The superdensities can be written again as $G(\delta _{\pm 1/2})=q_{i}^{\pm
}F_{i}^{(\mp 1/2)}$ in terms of $2+2$ real components%
\begin{eqnarray}
^{SF}q_{1}^{+} &=&\frac{1}{Y_{3}}\left( -\psi _{1}\partial _{+}Y_{2}+\psi
_{2}\partial _{+}Y_{1}\right) -\mu \overline{\psi }_{1}Y_{3},
\label{compact supercharges} \\
^{SF}q_{2}^{+} &=&\frac{1}{Y_{3}}\left( -\psi _{1}\partial _{+}Y_{1}-\psi
_{2}\partial _{+}Y_{2}\right) -\mu \overline{\psi }_{2}Y_{3},  \notag \\
^{SF}q_{1}^{-} &=&\frac{1}{Y_{3}}\left( \overline{\psi }_{1}\partial
_{-}Y_{2}+\overline{\psi }_{2}\partial _{-}Y_{1}\right) -\mu \psi _{1}Y_{3},
\notag \\
^{SF}q_{2}^{-} &=&\frac{1}{Y_{3}}\left( -\overline{\psi }_{1}\partial
_{-}Y_{1}+\overline{\psi }_{2}\partial _{-}Y_{2}\right) -\mu \psi _{2}Y_{3},
\notag
\end{eqnarray}%
which should be compared with the superspace expressions (\ref{SS
sine-gordon}).

Now we perform the total $a_{\pm },b_{\pm }$ integration. This gives for (%
\ref{gauge dependent part}) that%
\begin{equation*}
I^{V}=\frac{X_{+}X_{-}}{\sinh ^{2}\phi }+\frac{Y_{+}Y_{-}}{\sin ^{2}\varphi }%
.
\end{equation*}%
Putting all together we get the full Pohlmeyer reduced $AdS_{3}\times S^{3}$
superstring sigma model action (see \cite{tseytlin II})%
\begin{eqnarray}
-\frac{2\pi }{k}\mathcal{L} &=&\partial _{+}\phi \partial _{-}\phi +\coth
^{2}\phi \mathcal{\partial }_{+}\chi \mathcal{\partial }_{-}\chi +\mathcal{%
\partial }_{+}\varphi \mathcal{\partial }_{-}\varphi +\cot ^{2}\varphi 
\mathcal{\partial }_{+}\theta \mathcal{\partial }_{-}\theta +\left( \psi
_{i}\partial _{-}\psi _{i}+\overline{\psi }_{i}\partial _{+}\overline{\psi }%
_{i}\right) -  \label{final form ads3xs3} \\
&&-\coth ^{2}\phi \left[ \partial _{+}\chi F_{-}-\partial _{-}\chi F_{+}%
\right] +\cot ^{2}\varphi \left[ \partial _{+}\theta F_{-}-\partial
_{-}\theta F_{+}\right] -F_{+}F_{-}\left( \frac{1}{\sinh ^{2}\phi }+\frac{1}{%
\sin ^{2}\varphi }\right) -V_{vec}  \notag \\
V_{vec} &=&\mu ^{2}\left\langle E_{+}^{(+1)}BE_{-}^{(-1)}B^{-1}\right\rangle
+\mu \left\langle \psi _{+}^{\left( +1/2\right) }B\psi _{-}^{\left(
-1/2\right) }B^{-1}\right\rangle ,  \notag
\end{eqnarray}%
where $i=1,...,4$ and the quantities in $V_{vec}$ are given in (\ref%
{quantities in the total action}). Note that in the final form (\ref{final
form ads3xs3}), there is no way to take any field limit leading to (\ref{N=2
sinh-Gordon prime}) and (\ref{N=2 sine-Gordon prime}) and we see how the two
sub-models get coupled in a non-trivial way.

The Lagrangian (\ref{final form ads3xs3}) is separately invariant under (\ref%
{U(1)}), (\ref{U(1) II}) and it is also invariant under 
\begin{equation*}
\widetilde{\chi }=\chi +\alpha _{1},\text{ \ \ \ \ }\widetilde{\theta }%
=\theta +\alpha _{2},\text{ \ \ \ \ }\widetilde{\lambda }_{\pm }=e^{\pm
i\beta }\lambda _{\pm },\text{ \ \ \ \ }\widetilde{\rho }_{\pm }=e^{\mp
i\beta }\rho _{\pm },
\end{equation*}%
where we have set $\beta _{2}=-\beta _{1}$, $\beta _{1}=\beta $ $.$ The
Noether charges are%
\begin{eqnarray*}
Q_{\chi } &=&\dint_{-\infty }^{+\infty }dx^{1}\coth ^{2}\phi \left\{ \left( 
\mathcal{\partial }_{+}\chi +\mathcal{\partial }_{-}\chi \right) +\left(
F_{+}-F_{-}\right) \right\} , \\
Q_{\theta } &=&\dint_{-\infty }^{+\infty }dx^{1}\cot ^{2}\varphi \left\{
\left( \mathcal{\partial }_{+}\theta +\mathcal{\partial }_{-}\theta \right)
-\left( F_{+}-F_{-}\right) \right\} , \\
Q_{\rho ,\lambda } &=&\dint_{-\infty }^{+\infty }dx^{1}\left(
F_{+}-F_{-}\right) .
\end{eqnarray*}%
From (\ref{abelian charges}) we have the residual $U(1)\times U(1)$ global
symmetry with conserved charges%
\begin{equation*}
Q_{a}=\frac{1}{2}\dint_{-\infty }^{+\infty }(a_{+}-a_{-}),\text{ \ \ \ \ \ }%
Q_{b}=\frac{1}{2}\dint_{-\infty }^{+\infty }(b_{+}-b_{-})
\end{equation*}%
and with $a_{\pm },b_{\pm }$ defined by (\ref{gauge field components}), we
find the relations 
\begin{equation*}
Q_{a}=-\left( Q_{\chi }-Q_{\rho ,\lambda }\right) ,\ \text{ \ \ \ \ }%
Q_{b}=+\left( Q_{\theta }-Q_{\rho ,\lambda }\right) .
\end{equation*}

The eight fermionic densities, because we have $\dim \mathcal{K}_{F}^{(\pm
1/2)}=4$ as can be seen from (\ref{table}), (\ref{psuxpsu susy flows I}), (%
\ref{psuxpsu susy flows II}), associated to the reduced model are computed
from\footnote{%
Recall that we are not writing the Wilson lines.} (\ref{AKNS supercharges}).
They can be written as $G(\delta _{\pm 1/2})=q_{i}^{\pm }F_{i}^{(\mp 1/2)},$
where 
\begin{eqnarray*}
^{SF}q_{1}^{+} &=&\frac{1}{X_{3}}\left[ -\psi _{3}\left( \partial
_{+}X_{1}-X_{2}F_{+}\right) +\psi _{4}\left( \partial
_{+}X_{2}+X_{1}F_{+}\right) \right] +\frac{1}{Y_{3}}\left[ -\psi _{1}\left(
\partial _{+}Y_{2}-Y_{1}F_{+}\right) +\psi _{2}\left( \partial
_{+}Y_{1}+Y_{2}F_{+}\right) \right] + \\
&&+\mu \left( -\overline{\psi }_{1}X_{1}Y_{3}+\overline{\psi }_{2}X_{2}Y_{3}+%
\overline{\psi }_{3}X_{3}Y_{2}-\overline{\psi }_{4}X_{3}Y_{1}\right) , \\
^{SF}q_{2}^{+} &=&\frac{1}{X_{3}}\left[ -\psi _{3}\left( \partial
_{+}X_{2}+X_{1}F_{+}\right) -\psi _{4}\left( \partial
_{+}X_{1}-X_{2}F_{+}\right) \right] +\frac{1}{Y_{3}}\left[ -\psi _{1}\left(
\partial _{+}Y_{1}+Y_{2}F_{+}\right) -\psi _{2}\left( \partial
_{+}Y_{2}-Y_{1}F_{+}\right) \right] + \\
&&+\mu \left( -\overline{\psi }_{1}X_{2}Y_{3}-\overline{\psi }_{2}X_{1}Y_{3}+%
\overline{\psi }_{3}X_{3}Y_{1}+\overline{\psi }_{4}X_{3}Y_{2}\right) , \\
^{SF}q_{3}^{+} &=&\frac{1}{X_{3}}\left[ -\psi _{1}\left( \partial
_{+}X_{1}-X_{2}F_{+}\right) -\psi _{2}\left( \partial
_{+}X_{2}+X_{1}F_{+}\right) \right] +\frac{1}{Y_{3}}\left[ \psi _{3}\left(
\partial _{+}Y_{2}-Y_{1}F_{+}\right) +\psi _{4}\left( \partial
_{+}Y_{1}+Y_{2}F_{+}\right) \right] + \\
&&+\mu \left( \overline{\psi }_{1}X_{3}Y_{2}+\overline{\psi }_{2}X_{3}Y_{1}+%
\overline{\psi }_{3}X_{1}Y_{3}+\overline{\psi }_{4}X_{2}Y_{3}\right) , \\
^{SF}q_{4}^{+} &=&\frac{1}{X_{3}}\left[ \psi _{1}\left( \partial
_{+}X_{2}+X_{1}F_{+}\right) -\psi _{2}\left( \partial
_{+}X_{1}-X_{2}F_{+}\right) \right] +\frac{1}{Y_{3}}\left[ -\psi _{3}\left(
\partial _{+}Y_{1}+Y_{2}F_{+}\right) +\psi _{4}\left( \partial
_{+}Y_{2}-Y_{1}F_{+}\right) \right] + \\
&&+\mu \left( -\overline{\psi }_{1}X_{3}Y_{1}+\overline{\psi }_{2}X_{3}Y_{2}-%
\overline{\psi }_{3}X_{2}Y_{3}+\overline{\psi }_{4}X_{1}Y_{3}\right)
\end{eqnarray*}%
and%
\begin{eqnarray*}
^{SF}q_{1}^{-} &=&\frac{1}{X_{3}}\left[ -\overline{\psi }_{3}\left( \partial
_{-}X_{1}+X_{2}F_{-}\right) -\overline{\psi }_{4}\left( \partial
_{-}X_{2}-X_{1}F_{-}\right) \right] +\frac{1}{Y_{3}}\left[ \overline{\psi }%
_{1}\left( \partial _{-}Y_{2}+Y_{1}F_{-}\right) +\overline{\psi }_{2}\left(
\partial _{-}Y_{1}-Y_{2}F_{-}\right) \right] + \\
&&+\mu \left( -\psi _{1}X_{1}Y_{3}-\psi _{2}X_{2}Y_{3}-\psi
_{3}X_{3}Y_{2}-\psi _{4}X_{3}Y_{1}\right) , \\
^{SF}q_{2}^{-} &=&\frac{1}{X_{3}}\left[ \overline{\psi }_{3}\left( \partial
_{-}X_{2}-X_{1}F_{-}\right) -\overline{\psi }_{4}\left( \partial
_{-}X_{1}+X_{2}F_{-}\right) \right] +\frac{1}{Y_{3}}\left[ -\overline{\psi }%
_{1}\left( \partial _{-}Y_{1}-Y_{2}F_{-}\right) +\overline{\psi }_{2}\left(
\partial _{-}Y_{2}+Y_{1}F_{-}\right) \right] + \\
&&+\mu \left( \psi _{1}X_{2}Y_{3}-\psi _{2}X_{1}Y_{3}+\psi
_{3}X_{3}Y_{1}-\psi _{4}X_{3}Y_{2}\right) , \\
^{SF}q_{3}^{-} &=&\frac{1}{X_{3}}\left[ -\overline{\psi }_{1}\left( \partial
_{-}X_{1}+X_{2}F_{-}\right) +\overline{\psi }_{2}\left( \partial
_{-}X_{2}-X_{1}F_{-}\right) \right] +\frac{1}{Y_{3}}\left[ -\overline{\psi }%
_{3}\left( \partial _{-}Y_{2}+Y_{1}F_{-}\right) +\overline{\psi }_{4}\left(
\partial _{-}Y_{1}-Y_{2}F_{-}\right) \right] + \\
&&+\mu \left( -\psi _{1}X_{3}Y_{2}+\psi _{2}X_{3}Y_{1}+\psi
_{3}X_{1}Y_{3}-\psi _{4}X_{2}Y_{3}\right) , \\
^{SF}q_{4}^{-} &=&\frac{1}{X_{3}}\left[ -\overline{\psi }_{1}\left( \partial
_{-}X_{2}-X_{1}F_{-}\right) -\overline{\psi }_{2}\left( \partial
_{-}X_{1}+X_{2}F_{-}\right) \right] +\frac{1}{Y_{3}}\left[ -\overline{\psi }%
_{3}\left( \partial _{-}Y_{1}-Y_{2}F_{-}\right) -\overline{\psi }_{4}\left(
\partial _{-}Y_{2}+Y_{1}F_{-}\right) \right] + \\
&&+\mu \left( -\psi _{1}X_{3}Y_{1}-\psi _{2}X_{3}Y_{2}+\psi
_{3}X_{2}Y_{3}+\psi _{4}X_{1}Y_{3}\right) ,
\end{eqnarray*}%
are mixings of the two sub-sets of charges (\ref{non-compact supercharges})
and (\ref{compact supercharges}). From the kernel superalgebra (\ref{psuxpsu
anticom}) we expect the reduced model to have an extended $(4,4)$ 2D
supersymmetry algebra with supercharges transforming under the $U(1)\times
U(1)$ gauge group. Of course, this is an on-shell symmetry as shown above.
We leave for the future the study of the field variations associated to the
non-local charges (\ref{AKNS supercharges}) and the possible invariance of
the gauge fixed action functional (\ref{NA Toda action}) under them.

As we have shown there are two corners in field space in which we have two
submodels (loosely) related to the $(2,2)$ supersymmetric models (\ref{N=2
sinh-Gordon prime}) and (\ref{N=2 sine-Gordon prime}). Possibly, these
models are not related to the $AdS_{3}\times S^{3}$ superstring because the
reduction procedure requires the whole gauge symmetry group $H$ to be local
and not only a part of it. \ However, it would be interesting to see if it
is possible to obtain these submodels as particular limits of the radius of
the coset $F/G,$ i.e the constant $\kappa $ in the GS action (\ref{GS
lagrangian}). In any case, it seems to be that in the reduction process
these two submodels are entangled in the form (\ref{final form ads3xs3})
with a (possibly non-local) extended $(4,4)$ supersymmetry. This is argument
is based solely in the number of non-local supercharges constructed above
and, of course, a deeper study have to be done\footnote{%
I want to thank Arkady Tseytlin for pointing me several subtleties related
to the hyperkahlerianity and reducibility of the target manifold in relation
to (4,4) SUSY.}.

\section{Concluding remarks.}

We have provided substantial evidence that the conjectured existence of the
world-sheet supersymmetry of the Pohlmeyer reduced models is of the extended
type and generated by the kernel loop superalgebra $\mathcal{K}$ constructed
out of the subalgebra $\mathfrak{f}^{\perp }\subset \mathfrak{f}$. Of
course, on-shell $\mathcal{K}$ is a true algebra of symmetries which leave
invariant the equations of motion and one of the most difficult issues yet
to be solved, is to show the invariance of the gauge fixed SSSSG model
action functional under an appropriate set of residual symmetry variations $%
\delta _{\mathcal{K}}.$ For the moment, the supersymmetries just constructed
are non manifest at the lagrangian level and the problem of making them
manifest as well as the full Poisson form of the supersymmetry algebra in
terms of the monodromy matrix will be addressed in the near future \cite{us}%
. If the action is SUSY invariant, perhaps we can try to use localization
techniques to handle the partition function and to study its properties, at
least in the conformal $AdS_{5}\times S^{5}$ superstring in which the
Pohlmeyer reduction have a chance to survive the quantization \cite{tseytlin
UV}. Another important problem to be studied is the construction and
quantization of soliton solutions involving the fermionic fields for the set
of semi-symmetric superspaces $F/G$ associated to several Lie superalgebras $%
\mathfrak{f}$ admiting a $%
\mathbb{Z}
_{4}$ decomposition. Fortunately, this can be done by extending to the
supersymmetric case, the results recently presented in \cite{class-quant
solitons} for the bosonic symmetric-space sine-Gordon models.

\paragraph{Acknowledgements.}

I would like to thank the IGFAE (Santiago de Compostela, Spain) for the kind
hospitality extended while this work was in final preparation. I want to
thank Alexis R. Aguirre, Jos\'{e} F. Gomes, Timothy J. Hollowood, J. Luis
Miramontes, Arkady Tseytlin and Abraham H. Zimerman for many useful
comments, advices, suggestions, discussions and correspondence. Special
thanks to J. Luis Miramontes for making available a preliminary version of
the forthcoming paper \cite{SSSSG AdS(5)xS(5)}. This research is supported
by a CNPq junior postdoctoral grant.

\section{Appendix A: The superalgebra $psu(1,1\mid 2)$}

Consider the following distinguished Dynkin diagram of the superalgebra $%
sl(2\mid 2)_{%
\mathbb{C}
}$%
\begin{equation*}
\underset{\alpha _{1}}{\bigcirc }\underset{\alpha _{2}}{-\otimes -}\underset{%
\alpha _{3}}{\bigcirc },
\end{equation*}%
where $\alpha _{1},\alpha _{3}$ are the bosonic simple roots and $\alpha
_{2} $ is the fermionic simple root. Introduce the step operators $E_{\pm
\alpha _{1}},$ $E_{\pm \alpha _{2}},$ $E_{\pm \alpha _{3}},$ $E_{\pm \alpha
_{1}\pm \alpha _{2}},$ $E_{\pm \alpha _{2}\pm \alpha _{3}},$ $E_{\pm \alpha
_{1}\pm \alpha _{2}\pm \alpha _{3}},$ the Cartan elements $h_{1}$, $h_{2}$, $%
h_{3},$ $I$ and work in a $4\times 4$ supermatrix representation$.$ Then, by
introducing the matrices $\left( E_{ab}\right) _{ij}=\delta _{ai}\delta
_{bj},$ $a,b,i,j=1,2,3,4,$ we have%
\begin{eqnarray*}
E_{\alpha _{1}} &=&E_{12},\text{ \ \ }E_{\alpha _{2}}\text{ }=\text{ }E_{23},%
\text{ \ \ }E_{\alpha _{3}}\text{ }=\text{ }E_{34},\text{ \ \ }E_{\alpha
_{1}+\alpha _{2}}\text{ }=\text{ }E_{13},\text{ \ \ }E_{\alpha _{2}+\alpha
_{3}}\text{ }=\text{ }E_{24},\text{ \ \ }E_{\alpha _{1}+\alpha _{2}+\alpha
_{3}}\text{ }=\text{ }E_{14}, \\
h_{1} &=&E_{11}-E_{22},\text{ \ \ }h_{2}\text{ }=\text{ }E_{22}+E_{33},\text{
\ \ \ }h_{3}\text{ }=\text{ }E_{33}-E_{44},
\end{eqnarray*}%
where the matrices corresponding to the negative roots are represented by
the transpose of the matrix corresponding to the positive root, e.g $%
E_{-\alpha _{1}}=\left( E_{\alpha _{1}}\right) ^{t}$. We introduce these
matrices in order to write the base of the $su(1,1\mid 2)$ superalgebra in
terms of them.

\subsection{$%
\mathbb{Z}
_{4}$ grading and $\mathfrak{f}^{\perp },\mathfrak{f}^{\parallel }$
decomposition of $psu(1,1\mid 2).$}

An element $M\subset sl(2\mid 2)_{%
\mathbb{C}
}$ can be represented by a $4\times 4$ supermatrix%
\begin{equation*}
M=%
\begin{pmatrix}
A & X \\ 
Y & B%
\end{pmatrix}%
,
\end{equation*}%
where $A,B$ are (even) complex $2\times 2$ matrices and $X,Y$ are (odd)
complex $2\times 2$ matrices. Introduce the following $4\times 4$ matrices%
\begin{equation*}
\Sigma =%
\begin{pmatrix}
\sigma & 0 \\ 
0 & I%
\end{pmatrix}%
\text{, \ \ \ \ \ }K=%
\begin{pmatrix}
\sigma & 0 \\ 
0 & \sigma%
\end{pmatrix}%
,\text{ \ \ \ \ \ }\sigma =%
\begin{pmatrix}
1 & 0 \\ 
0 & -1%
\end{pmatrix}%
\end{equation*}%
\ and recall the definition of super-transposition and super-Hermitian
conjugation 
\begin{equation*}
M^{st}=%
\begin{pmatrix}
A^{t} & -Y^{t} \\ 
X^{t} & B^{t}%
\end{pmatrix}%
,\text{ \ \ \ \ \ }M^{\dagger }=%
\begin{pmatrix}
A^{\dagger } & -iY^{\dagger } \\ 
-iX^{\dagger } & B^{\dagger }%
\end{pmatrix}%
.
\end{equation*}

The $\mathfrak{f=}$ $psu(1,1\mid 2)$ superalgebra is a real form of $%
psl(2\mid 2)_{%
\mathbb{C}
}$ and can be represented by $4\times 4$ supermatrices modulo the identity
matrix. It is defined by $M^{\ast }=-M$ where $M^{\ast }=\Sigma M\Sigma .$
The $%
\mathbb{Z}
_{4}$ decomposition of $\mathfrak{f}$ is implemented by the action of the
following automorphism $M^{\Omega }=-KM^{st}K,$ which allows the splitting
of $\mathfrak{f}$ as a direct sum of subspaces $\mathfrak{f=f}_{0}\oplus 
\mathfrak{f}_{1}\oplus \mathfrak{f}_{2}\oplus \mathfrak{f}_{3}$ where each $%
\mathfrak{f}_{j}$ is the eigen-space of $\Omega $ with eigenvalue $j$ i.e
for $M\in $ $\mathfrak{f}_{j}$ we have $M^{\Omega }=i^{j}M$. We also have
that $[\mathfrak{f}_{i},\mathfrak{f}_{j}]\subset \mathfrak{f}_{(i+j)\func{mod%
}4}.$

Imposing $M^{\ast }=-M$ we get $A^{\dagger }=-\sigma A\sigma ,$ $B^{\dagger
}=-B$ and $Y^{\dagger }=-i\sigma X$ defining respectively the $su(1,1)$ and $%
su(2)$ Lie algebras and reducing the number of odd elements. Using $%
M^{\Omega }=i^{j}M$ we find that $\mathfrak{f}_{j}$ is formed by the
supermatrices obeying 
\begin{eqnarray*}
\mathfrak{f}_{0} &:&\text{ }A_{0}=-\sigma A_{0}^{t}\sigma \text{, \ \ }%
B_{0}=-\sigma B_{0}^{t}\sigma ,\text{ \ \ \ \ \ \ \ }\mathfrak{f}_{2}\text{ }%
:\text{ }A_{2}=\sigma A_{2}^{t}\sigma \text{, \ \ }B_{2}=\sigma
B_{2}^{t}\sigma , \\
\mathfrak{f}_{1} &:&\text{ }Y=i\sigma X^{t}\sigma \text{, \ \ \ \ \ \ \ \ \
\ \ \ \ \ \ \ \ \ \ \ \ \ \ \ \ \ \ \ \ \ \ }\mathfrak{f}_{3}\text{ }:\text{
\ }Y=-i\sigma X^{t}\sigma \text{\ .}
\end{eqnarray*}%
Written in terms of step operators we get%
\begin{eqnarray*}
\mathfrak{f}_{0} &:&\text{ }\left\{ 
\begin{array}{c}
f_{01}=(E_{\alpha _{1}}+E_{-\alpha _{1}}) \\ 
f_{02}=i(E_{\alpha _{3}}+E_{-\alpha _{3}})%
\end{array}%
\right\} ,\text{ \ \ \ \ \ \ \ \ \ \ \ \ \ \ \ \ \ \ }\mathfrak{f}_{2}:\text{
}\left\{ 
\begin{array}{c}
f_{21}=ih_{1} \\ 
f_{22}=i(E_{\alpha _{1}}-E_{-\alpha _{1}}) \\ 
f_{23}=ih_{3} \\ 
f_{24}=\left( E_{\alpha _{3}}-E_{-\alpha _{3}}\right)%
\end{array}%
\right\} , \\
\mathfrak{f}_{1} &:&\left\{ 
\begin{array}{c}
f_{11}=(E_{\alpha _{1}+\alpha _{2}}+iE_{-\alpha _{1}-\alpha _{2}}) \\ 
f_{12}=(iE_{\alpha _{1}+\alpha _{2}+\alpha _{3}}+E_{-\alpha _{1}-\alpha
_{2}-\alpha _{3}}) \\ 
f_{13}=(E_{\alpha _{2}}-iE_{-\alpha _{2}}) \\ 
f_{14}=(iE_{\alpha _{2}+\alpha _{3}}-E_{-\alpha _{2}-\alpha _{3}})%
\end{array}%
\right\} ,\text{ \ }\mathfrak{f}_{3}:\text{ }\left\{ 
\begin{array}{c}
f_{31}=(iE_{\alpha _{1}+\alpha _{2}}+E_{-\alpha _{1}-\alpha _{2}}) \\ 
f_{32}=(E_{\alpha _{1}+\alpha _{2}+\alpha _{3}}+iE_{-\alpha _{1}-\alpha
_{2}-\alpha _{3}}) \\ 
f_{33}=(iE_{\alpha _{2}}-E_{-\alpha _{2}}) \\ 
f_{34}=(E_{\alpha _{2}+\alpha _{3}}-iE_{-\alpha _{2}-\alpha _{3}})%
\end{array}%
\right\} ,
\end{eqnarray*}%
where in the notation $f_{ia}$ the index $i=0,...,4$ stands for the $%
\mathbb{Z}
_{4}$ eigenvalue $i$ of the subspace $\mathfrak{f}_{i}$ and the index $%
a=1,...,\dim \mathfrak{f}_{i}$ is just a basis label.

By defining the semisimple element $\Lambda =\frac{1}{2}\left(
f_{21}+f_{23}\right) \in \mathfrak{a}_{2},$ we obtain the decomposition of
the superalgebra in the $%
\mathbb{Z}
_{2}$ form $\mathfrak{f=f}^{\perp }\oplus \mathfrak{f}^{\parallel }$, where $%
\mathfrak{f}^{\perp }\mathcal{=}\ker \left( ad\left( \Lambda \right) \right)
,$ $\mathfrak{f}^{\parallel }\mathcal{=}\func{Im}\left( ad\left( \Lambda
\right) \right) \ $and where every subspace $\mathfrak{f}_{i}$ is also
decomposed as $\mathfrak{f}_{i}=\mathfrak{f}_{i}^{\mathcal{\perp }}+%
\mathfrak{f}_{i}^{\mathcal{\parallel }}.$ We find%
\begin{equation}
\mathfrak{f}_{0}^{\perp }:\left\{ \varnothing \right\} \rightarrow \text{ 
\textit{No gauge flows},}  \label{psu gauge group}
\end{equation}%
\begin{equation*}
\mathfrak{f}_{0}^{\parallel }:\left\{ 
\begin{array}{c}
M_{01}=f_{01} \\ 
M_{02}=f_{02}%
\end{array}%
\right\} \rightarrow \text{ \textit{2 bosonic fields},}
\end{equation*}%
\begin{equation}
\mathfrak{f}_{1}^{\perp }:\left\{ 
\begin{array}{c}
F_{11}=f_{11} \\ 
F_{12}=f_{14}%
\end{array}%
\right\} \rightarrow \text{ \textit{2 SUSY flows},}  \label{psu susy flows I}
\end{equation}%
\begin{equation*}
\mathfrak{f}_{1}^{\parallel }:\left\{ 
\begin{array}{c}
G_{11}=f_{12} \\ 
G_{12}=f_{13}%
\end{array}%
\right\} \rightarrow \text{ \textit{2 fermionic fields},}
\end{equation*}%
\begin{equation}
\mathfrak{f}_{2}^{\perp }:\left\{ 
\begin{array}{c}
K_{21}=f_{21} \\ 
K_{22}=f_{23}%
\end{array}%
\right\} \rightarrow \Lambda =\frac{1}{2}\left( K_{21}+K_{22}\right) ,
\label{T for psu}
\end{equation}%
\begin{equation*}
\mathfrak{f}_{2}^{\parallel }:\left\{ 
\begin{array}{c}
M_{21}=f_{22} \\ 
M_{22}=f_{24}%
\end{array}%
\right\} \rightarrow \text{ \textit{No dynamics,}}
\end{equation*}%
\begin{equation}
\mathfrak{f}_{3}^{\perp }:\left\{ 
\begin{array}{c}
F_{31}=f_{31} \\ 
F_{32}=f_{34}%
\end{array}%
\right\} \rightarrow \text{ \textit{2 SUSY flows},}
\label{psu susy flows II}
\end{equation}%
\begin{equation*}
\mathfrak{f}_{3}^{\parallel }:\left\{ 
\begin{array}{c}
G_{31}=f_{32} \\ 
G_{32}=f_{33}%
\end{array}%
\right\} \rightarrow \text{ \textit{2 fermionic fields.}}
\end{equation*}

From the loop superalgebra setting (\ref{loop algebra}), (\ref{half-integer
expansion}), we get the basis elements in $\mathcal{K=}$ $\widehat{\mathfrak{%
f}}^{\perp }$ and $\mathcal{M=}$ $\widehat{\mathfrak{f}}^{\parallel }$%
\begin{eqnarray}
F_{i}^{(-1/2)} &=&z^{-1/2}F_{3i}\in \mathcal{K}_{F}^{(-1/2)},\text{ \ \ \ \
\ }F_{i}^{(+1/2)}\text{ }=\text{ }z^{+1/2}F_{1i}\in \mathcal{K}_{F}^{(+1/2)},
\label{psu loop basis} \\
G_{i}^{(-1/2)} &=&z^{-1/2}G_{3i}\in \mathcal{M}_{F}^{(-1/2)},\text{ \ \ }%
M_{a}^{(0)}\text{ }=\text{ }z^{0}M_{0a}\in \mathcal{M}_{B}^{(0)},\text{ \ \ }%
G_{i}^{(+1/2)}\text{ }=\text{ }z^{+1/2}G_{1i}\in \mathcal{M}_{F}^{(+1/2)}. 
\notag
\end{eqnarray}

The kernel algebra $\mathcal{K}$ is%
\begin{equation}
\left\{ F_{i}^{(\pm 1/2)},F_{j}^{(\pm 1/2)}\right\} =\pm 2\delta
_{ij}\Lambda _{\pm }^{(\pm 1)},\text{ \ \ \ \ \ }\left\{
F_{i}^{(+1/2)},F_{j}^{(-1/2)}\right\} =0.  \label{psu anticom}
\end{equation}%
In this case the algebra $\mathcal{K}$ is related through $\delta _{\mathcal{%
K}}$ to the usual $(2,2)$ supersymmetric extension of the 2D Poincar\'{e}
algebra.

\subsection{$%
\mathbb{Z}
_{4}$ grading and $\mathfrak{f}^{\perp },\mathfrak{f}^{\parallel }$
decomposition of $psu(1,1\mid 2)^{\times 2}.$}

An element $X\subset sl(2\mid 2)_{%
\mathbb{C}
}^{\times 2}$ can be represented by a $8\times 8$ supermatrix%
\begin{equation*}
X=%
\begin{pmatrix}
M & 0 \\ 
0 & N%
\end{pmatrix}%
,\text{ \ \ \ \ \ }M=%
\begin{pmatrix}
A & X \\ 
Y & B%
\end{pmatrix}%
,\text{ \ \ \ \ \ }N=%
\begin{pmatrix}
C & Z \\ 
W & D%
\end{pmatrix}%
,\text{\ }
\end{equation*}%
where $M$ and $N$ are $4\times 4$ elements of $sl(2\mid 2)_{%
\mathbb{C}
}.$ The $\mathfrak{f=}psu(1,1\mid 2)^{\times 2}$ superalgebra is a real form
of $psl(2\mid 2)_{%
\mathbb{C}
}^{\times 2}$ and can be represented by $8\times 8$ supermatrices modulo the
identity matrix. It is defined by $M^{\ast }=-M$ and $N=-N^{\ast },$ with *
as above, then the only subtle point relies in the $%
\mathbb{Z}
_{4}$ decomposition \cite{tseytlin II}. Introduce the following $8\times 8$
matrices%
\begin{equation*}
K=%
\begin{pmatrix}
0 & k \\ 
k & 0%
\end{pmatrix}%
,\text{ \ \ \ \ \ }k=%
\begin{pmatrix}
I & 0 \\ 
0 & I%
\end{pmatrix}%
.
\end{equation*}

The $%
\mathbb{Z}
_{4}$ decomposition of $\mathfrak{f}$ is implemented by the action of the
following automorphism $M^{\Omega }=-KM^{st}K.$ Using $M^{\Omega }=i^{j}M$
we find that $\mathfrak{f}_{j}$ is formed by the supermatrices $X\in 
\mathfrak{f}$ obeying%
\begin{eqnarray*}
\mathfrak{f}_{0} &:&\text{ }%
\begin{pmatrix}
A & 0 & 0 & 0 \\ 
0 & B & 0 & 0 \\ 
0 & 0 & -A^{t} & 0 \\ 
0 & 0 & 0 & -B^{t}%
\end{pmatrix}%
,\text{ \ \ }\mathfrak{f}_{1}\text{ }:\text{ }%
\begin{pmatrix}
0 & X & 0 & 0 \\ 
Y & 0 & 0 & 0 \\ 
0 & 0 & 0 & -iY^{t} \\ 
0 & 0 & iX^{t} & 0%
\end{pmatrix}%
, \\
\mathfrak{f}_{2} &:&\text{ }%
\begin{pmatrix}
A & 0 & 0 & 0 \\ 
0 & B & 0 & 0 \\ 
0 & 0 & A^{t} & 0 \\ 
0 & 0 & 0 & B^{t}%
\end{pmatrix}%
,\text{ \ \ \ \ \ \ \ }\mathfrak{f}_{3}\text{ }:\text{ }%
\begin{pmatrix}
0 & X & 0 & 0 \\ 
Y & 0 & 0 & 0 \\ 
0 & 0 & 0 & iY^{t} \\ 
0 & 0 & -iX^{t} & 0%
\end{pmatrix}%
,
\end{eqnarray*}%
where $Y=iX^{\dagger }\Sigma $ and $\Sigma $ is defined above. It is enough
to consider only the first copy of $psu(1,1\mid 2)$ inside $\mathfrak{f}$
and we can return to the $4\times 4$ supermatrix representation of $%
psu(1,1\mid 2)$ used above. Note that now we have $\mathfrak{f}_{0}=%
\mathfrak{f}_{2}=su(1,1)\times su(2).$

By taking the semisimple element $\Lambda =\frac{1}{2}\left(
f_{21}+f_{23}\right) \in \mathfrak{a}_{2},$ we obtain the $%
\mathbb{Z}
_{2}$ decomposition $\mathfrak{f=f}^{\perp }\oplus \mathfrak{f}^{\parallel }$
of the superalgebra and every subspace $\mathfrak{f}_{i}$ is decomposed as $%
\mathfrak{f}_{i}=\mathfrak{f}_{i}^{\mathcal{\perp }}+\mathfrak{f}_{i}^{%
\mathcal{\parallel }},$ with

\begin{equation}
\mathfrak{f}_{0}^{\mathcal{\perp }}:\text{ }\left\{ 
\begin{array}{c}
K_{01}=ih_{1} \\ 
K_{02}=ih_{3}%
\end{array}%
\right\} \rightarrow \text{ }U(1)\times U(1)\text{ \textit{gauge flows,}}
\label{gauge group for psuxpsu}
\end{equation}%
\begin{equation}
\mathfrak{f}_{0}^{\mathcal{\parallel }}:\text{ }\left\{ 
\begin{array}{c}
M_{01}=(E_{\alpha _{1}}+E_{-\alpha _{1}}) \\ 
M_{02}=i(E_{\alpha _{1}}-E_{-\alpha _{1}}) \\ 
M_{03}=(E_{\alpha _{3}}-E_{-\alpha _{3}}) \\ 
M_{04}=i(E_{\alpha _{3}}+E_{-\alpha _{3}})%
\end{array}%
\right\} \rightarrow \text{ \textit{4 bosonic fields,}}
\label{psuxpsu susy flows I}
\end{equation}%
\begin{equation*}
\mathfrak{f}_{1}^{\mathcal{\perp }}:\left\{ 
\begin{array}{c}
F_{11}=(E_{\alpha _{1}+\alpha _{2}}+iE_{-\alpha _{1}-\alpha _{2}}) \\ 
F_{12}=(iE_{\alpha _{1}+\alpha _{2}}+E_{-\alpha _{1}-\alpha _{2}}) \\ 
F_{13}=(E_{\alpha _{2}+\alpha _{3}}-iE_{-\alpha _{2}-\alpha _{3}}) \\ 
F_{14}=(iE_{\alpha _{2}+\alpha _{3}}-E_{-\alpha _{2}-\alpha _{3}})%
\end{array}%
\right\} \rightarrow \text{ \textit{\ 4 SUSY flows,}}
\end{equation*}%
\begin{equation*}
\mathfrak{f}_{1}^{\mathcal{\parallel }}:\left\{ 
\begin{array}{c}
G_{11}=(E_{\alpha _{1}+\alpha _{2}+\alpha _{3}}+iE_{-\alpha _{1}-\alpha
_{2}-\alpha _{3}}) \\ 
G_{12}=(iE_{\alpha _{1}+\alpha _{2}+\alpha _{3}}+E_{-\alpha _{1}-\alpha
_{2}-\alpha _{3}}) \\ 
G_{13}=(E_{\alpha _{2}}-iE_{-\alpha _{2}}) \\ 
G_{14}=(iE_{\alpha _{2}}-E_{-\alpha _{2}})%
\end{array}%
\right\} \rightarrow \text{ \textit{4 fermionic fields,}}
\end{equation*}%
\begin{equation}
\mathfrak{f}_{2}^{\mathcal{\perp }}:\text{ }\left\{ 
\begin{array}{c}
K_{21}=K_{01} \\ 
K_{22}=K_{02}%
\end{array}%
\right\} \rightarrow \text{ }\Lambda =\frac{1}{2}\left( K_{21}+K_{22}\right)
,  \label{T for psuxpsu}
\end{equation}%
\begin{equation*}
\mathfrak{f}_{2}^{\mathcal{\parallel }}:\text{ }\left\{ 
\begin{array}{c}
M_{21}=M_{01} \\ 
M_{22}=M_{02} \\ 
M_{23}=M_{03} \\ 
M_{24}=M_{04}%
\end{array}%
\right\} \rightarrow \text{ \textit{No dynamics,}}
\end{equation*}%
\begin{equation}
\mathfrak{f}_{3}^{\mathcal{\perp }}:\left\{ 
\begin{array}{c}
F_{31}=F_{11} \\ 
F_{32}=F_{12} \\ 
F_{33}=F_{13} \\ 
F_{34}=F_{14}%
\end{array}%
\right\} \rightarrow \text{ \ \textit{4 SUSY flows,}}
\label{psuxpsu susy flows II}
\end{equation}%
\begin{equation*}
\mathfrak{f}_{3}^{\mathcal{\parallel }}:\left\{ 
\begin{array}{c}
G_{31}=G_{11} \\ 
G_{32}=G_{12} \\ 
G_{33}=G_{13} \\ 
G_{34}=G_{14}%
\end{array}%
\right\} \rightarrow \text{\textit{\ 4 fermionic fields.}}
\end{equation*}

In the loop superalgebra, we have the following basis elements of $\mathcal{K%
}$ and $\mathcal{M}$%
\begin{eqnarray}
F_{i}^{(-1/2)} &=&z^{-1/2}F_{3i}\in \mathcal{K}_{F}^{(-1/2)},\text{ \ \ \ \
\ }K_{b}^{(0)}\text{ }=\text{ }z^{0}K_{0b}\in \mathcal{K}_{B}^{(0)}\text{, \
\ \ \ \ }F_{i}^{(+1/2)}\text{ }=\text{ }z^{+1/2}F_{1i}\in \mathcal{K}%
_{F}^{(+1/2)},  \label{psuxpsu loop basis} \\
G_{i}^{(-1/2)} &=&z^{-1/2}G_{3i}\in \mathcal{M}_{F}^{(-1/2)},\text{ \ \ \ }%
M_{a}^{(0)}\text{ }=\text{ }z^{0}M_{0a}\in \mathcal{M}_{B}^{(0)},\text{ \ \
\ }G_{i}^{(+1/2)}\text{ }=\text{ }z^{+1/2}G_{1i}\in \mathcal{M}_{F}^{(+1/2)}.
\notag
\end{eqnarray}

The kernel algebra $\mathcal{K}$ is 
\begin{eqnarray}
\left\{ F_{i}^{(\pm 1/2)},F_{j}^{(\pm 1/2)}\right\} &=&\pm 2\delta
_{ij}\Lambda _{\pm }^{(\pm 1)},\ \text{\ \ \ \ \ }\left\{
F_{i}^{(+1/2)},F_{j}^{(-1/2)}\right\} \text{ }=\text{ }2\delta _{ij}Z^{(0)},
\label{psuxpsu anticom} \\
\left[ K_{a}^{(0)},F_{i}^{(\pm 1/2)}\right] &=&-(-1)^{a}X_{ij}F_{j}^{(\pm
1/2)},\text{ \ \ \ \ \ }a=1,2,  \notag
\end{eqnarray}%
where $Z^{(0)}=K_{1}^{(0)}+K_{2}^{(0)}$ commutes with everything in $%
\mathcal{K}$ and $\left[ X_{ij}\right] =\left( 
\begin{array}{cc}
\epsilon & 0 \\ 
0 & -\epsilon%
\end{array}%
\right) $ with $\epsilon =\left( 
\begin{array}{cc}
0 & 1 \\ 
-1 & 0%
\end{array}%
\right) .$ In this case the algebra $\mathcal{K}$ is related through $\delta
_{\mathcal{K}}$ to the $N=(4,4)$ supersymmetric extension of the 2D Poincar%
\'{e} algebra with a gauge group $U(1)\times U(1).$

\section{Appendix B: Relevant quantities used in the computations}

Using the definitions of Appendix A, section 6.2 we obtain, respectively,
the following currents, traces and conjugations 
\begin{eqnarray}
\partial _{+}B_{A}B_{A}^{-1} &=&\left( \partial _{+}\chi \cosh ^{2}\phi
\right) K_{1}^{(0)}+\left( \cos \chi \partial _{+}\phi +\partial _{+}\chi
\sin \chi \cosh \phi \sinh \phi \right) M_{1}^{(2)}+  \notag \\
&&+\left( \sin \chi \partial _{+}\phi -\partial _{+}\chi \cos \chi \cosh
\phi \sinh \phi \right) M_{2}^{(0)},  \notag \\
\partial _{+}B_{S}B_{S}^{-1} &=&\left( \partial _{+}\theta \cos ^{2}\varphi
\right) K_{2}^{(0)}+\left( -\sin \theta \partial _{+}\varphi +\partial
_{+}\theta \cos \theta \cos \varphi \sin \varphi \right) M_{3}^{(2)}+  \notag
\\
&&+\left( \cos \theta \partial _{+}\varphi +\partial _{+}\theta \sin \theta
\cos \varphi \sin \varphi \right) M_{4}^{(0)},  \notag \\
B_{S}^{-1}\partial _{-}B_{S} &=&\left( \partial _{-}\chi \cosh ^{2}\phi
\right) K_{1}^{(0)}+\left( \cos \chi \partial _{-}\phi +\partial _{-}\chi
\sin \chi \cosh \phi \sinh \phi \right) M_{1}^{(0)}+  \notag \\
&&+\left( -\sin \chi \partial _{-}\phi +\partial _{-}\chi \cos \chi \cosh
\phi \sinh \phi \right) M_{2}^{(0)},  \notag \\
B_{S}^{-1}\partial _{-}B_{S} &=&\left( \partial _{-}\theta \cos ^{2}\varphi
\right) K_{2}^{(0)}+\left( \sin \theta \partial _{-}\varphi -\partial
_{-}\theta \cos \theta \cos \varphi \sin \varphi \right) M_{3}^{(0)}+
\label{currents} \\
&&+\left( \cos \theta \partial _{-}\varphi +\partial _{-}\theta \sin \theta
\cos \varphi \sin \varphi \right) M_{4}^{(0)}.  \notag
\end{eqnarray}%
\begin{eqnarray}
\left\langle G_{i}^{(+1/2)},BG_{i}^{(-1/2)}B^{-1}\right\rangle &=&2\cos
\varphi \cosh \phi \sin (\theta +\chi ),\text{ \ \ \ \ \ }i=1,2,3,4  \notag
\\
\left\langle G_{1}^{(+1/2)}\left( BG_{2}^{(-1/2)}B^{-1}\right) \right\rangle
&=&-\left\langle G_{2}^{(+1/2)}\left( BG_{1}^{(-1/2)}B^{-1}\right)
\right\rangle =  \notag \\
\left\langle G_{4}^{(+1/2)}\left( BG_{3}^{(-1/2)}B^{-1}\right) \right\rangle
&=&-\left\langle G_{3}^{(+1/2)}\left( BG_{4}^{(-1/2)}B^{-1}\right)
\right\rangle =2\cos \varphi \cosh \phi \cos (\theta +\chi ),  \notag \\
-\left\langle G_{1}^{(+1/2)}\left( BG_{3}^{(-1/2)}B^{-1}\right)
\right\rangle &=&\left\langle G_{3}^{(+1/2)}\left(
BG_{1}^{(-1/2)}B^{-1}\right) \right\rangle =  \notag \\
-\left\langle G_{2}^{(+1/2)}\left( BG_{4}^{(-1/2)}B^{-1}\right)
\right\rangle &=&\left\langle G_{4}^{(+1/2)}\left(
BG_{2}^{(-1/2)}B^{-1}\right) \right\rangle =2\sin \varphi \sinh \phi , 
\notag \\
\left\langle G_{1}^{(+1/2)}\left( BG_{4}^{(-1/2)}B^{-1}\right) \right\rangle
&=&\left\langle G_{2}^{(+1/2)}\left( BG_{3}^{(-1/2)}B^{-1}\right)
\right\rangle =  \notag \\
\left\langle G_{4}^{(+1/2)}\left( BG_{1}^{(-1/2)}B^{-1}\right) \right\rangle
&=&\left\langle G_{3}^{(+1/2)}\left( BG_{2}^{(-1/2)}B^{-1}\right)
\right\rangle =0.\text{\ \ \ \ \ \ \ \ \ \ \ \ \ \ \ \ \ \ \ \ \ \ \ \ \ \ \
\ \ \ \ }  \label{traces}
\end{eqnarray}%
\begin{eqnarray}
\left( BK_{1}^{(0)}B^{-1}\right) ^{\parallel } &=&\sin \chi \sinh 2\phi
M_{1}^{(0)}-\cos \chi \sinh 2\phi M_{2}^{(0)},  \label{conju} \\
\left( BK_{2}^{(0)}B^{-1}\right) ^{\parallel } &=&\cos \theta \sin 2\varphi
M_{3}^{(0)}+\sin \theta \sin 2\varphi M_{4}^{(0)},  \notag \\
\left( B^{-1}K_{1}^{(0)}B\right) ^{\parallel } &=&\sin \chi \sinh 2\phi
M_{1}^{(0)}+\cos \chi \sinh 2\phi M_{2}^{(0)},  \notag \\
\left( B^{-1}K_{1}^{(0)}B\right) ^{\parallel } &=&-\cos \theta \sin 2\varphi
M_{3}^{(0)}+\sin \theta \sin 2\varphi M_{4}^{(0)},  \notag
\end{eqnarray}

\begin{eqnarray}
\left( BG_{1}^{(-1/2)}B^{-1}\right) ^{\perp }
&=&X_{2}Y_{3}F_{1}^{(-1/2)}-X_{1}Y_{3}F_{2}^{(-1/2)}+X_{3}Y_{1}F_{3}^{(-1/2)}+X_{3}Y_{2}F_{4}^{(-1/2)},
\notag \\
\left( BG_{2}^{(-1/2)}B^{-1}\right) ^{\perp }
&=&X_{1}Y_{3}F_{1}^{(-1/2)}+X_{2}Y_{3}F_{2}^{(-1/2)}-X_{3}Y_{2}F_{3}^{(-1/2)}+X_{3}Y_{1}F_{4}^{(-1/2)},
\notag \\
\left( BG_{3}^{(-1/2)}B^{-1}\right) ^{\perp }
&=&X_{3}Y_{1}F_{1}^{(-1/2)}-X_{3}Y_{2}F_{2}^{(-1/2)}-X_{2}Y_{3}F_{3}^{(-1/2)}-X_{1}Y_{3}F_{4}^{(-1/2)},
\notag \\
\left( BG_{4}^{(-1/2)}B^{-1}\right) ^{\perp }
&=&X_{3}Y_{2}F_{1}^{(-1/2)}+X_{3}Y_{1}F_{2}^{(-1/2)}+X_{1}Y_{3}F_{3}^{(-1/2)}-X_{2}Y_{3}F_{4}^{(-1/2)},
\notag \\
\left( B^{-1}G_{1}^{(+1/2)}B\right) ^{\perp }
&=&X_{2}Y_{3}F_{1}^{(+1/2)}+X_{1}Y_{3}F_{2}^{(+1/2)}-X_{3}Y_{1}F_{3}^{(+1/2)}+X_{3}Y_{2}F_{4}^{(+1/2)},
\notag \\
\left( B^{-1}G_{2}^{(+1/2)}B\right) ^{\perp }
&=&-X_{1}Y_{3}F_{1}^{(+1/2)}+X_{2}Y_{3}F_{2}^{(+1/2)}-X_{3}Y_{2}F_{3}^{(+1/2)}-X_{3}Y_{1}F_{4}^{(+1/2)},
\notag \\
\left( B^{-1}G_{3}^{(+1/2)}B\right) ^{\perp }
&=&-X_{3}Y_{1}F_{1}^{(+1/2)}-X_{3}Y_{2}F_{2}^{(+1/2)}-X_{2}Y_{3}F_{3}^{(+1/2)}+X_{1}Y_{3}F_{4}^{(+1/2)},
\notag \\
\left( B^{-1}G_{4}^{(+1/2)}B\right) ^{\perp }
&=&X_{3}Y_{2}F_{1}^{(+1/2)}-X_{3}Y_{1}F_{2}^{(+1/2)}-X_{1}Y_{3}F_{3}^{(+1/2)}-X_{2}Y_{3}F_{4}^{(+1/2)},
\label{conjugations}
\end{eqnarray}%
where $X_{i},Y_{i},$ $i=1,2,3$ were defined in (\ref{X y Y}).

\end{document}